\documentclass[12pt]{elsarticle} 
\usepackage[margin=2.5cm]{geometry}% by courtesy of Mico
\usepackage{lipsum}

%%this us to remove the footnote "Preprint submitted to Elsevier" 
\makeatletter
\def\ps@pprintTitle{%
   \let\@oddhead\@empty
   \let\@evenhead\@empty
   \def\@oddfoot{\reset@font\hfil\thepage\hfil}
   \let\@evenfoot\@oddfoot
}
\makeatother
%%this us to remove the footnote "Preprint submitted to Elsevier" 

\usepackage{amsfonts,color,morefloats,pslatex}
\usepackage{amssymb,amsthm, amsmath,latexsym}

\newtheorem{theorem}{Theorem}
\newtheorem{lemma}[theorem]{Lemma}

\newtheorem{corollary}[theorem]{Corollary}

\newtheorem{open}[theorem]{Open Problem}

\newtheorem{example}[theorem]{Example}

\newtheorem{conj}[theorem]{Conjecture}

\newcommand\myatop[2]{\genfrac{}{}{0pt}{}{#1}{#2}}

\newcommand{\tr}{{\mathrm{Tr}}}

\newcommand{\gf}{{\mathrm{GF}}}
\newcommand{\PG}{{\mathrm{PG}}}
\newcommand{\AG}{{\mathrm{AG}}}

\newcommand{\support}{{\mathrm{suppt}}}
\newcommand{\Aut}{{\mathrm{Aut}}} 
\newcommand{\PAut}{{\mathrm{PAut}}} 
\newcommand{\MAut}{{\mathrm{MAut}}} 
\newcommand{\GAut}{{\mathrm{Aut}}}

\newcommand{\Sym}{{\mathrm{Sym}}}

\newcommand{\GL}{{\mathrm{GL}}}

\newcommand{\wt}{{\mathtt{wt}}}

\newcommand{\cP}{{\mathcal{P}}} 
\newcommand{\cB}{{\mathcal{B}}} 
 
\newcommand{\cI}{{\mathcal{I}}} 
\newcommand{\C}{{\mathsf{C}}} 
\newcommand{\M}{{\mathsf{M}}}

\newcommand{\cR}{{\mathcal{R}}}

\newcommand{\bc}{{\mathbf{c}}} 
\newcommand{\bg}{{\mathbf{g}}} 
\newcommand{\bm}{{\mathbf{m}}}

\newcommand{\bzero}{{\mathbf{\bar{0}}}}
\newcommand{\bone}{{\mathbf{\bar{1}}}}

\newcommand{\bD}{{\mathbb{D}}}

\newcommand{\GA}{{\mathrm{GA}}}

\usepackage{blindtext}

\begin{document}
%\tableofcontents

\begin{frontmatter}

%% Title, authors and addresses

%% use the tnoteref command within \title for footnotes;
%% use the tnotetext command for the associated footnote;
%% use the fnref command within \author or \address for footnotes;
%% use the fntext command for the associated footnote;
%% use the corref command within \author for corresponding author footnotes;
%% use the cortext command for the associated footnote;
%% use the ead command for the email address,
%% and the form \ead[url] for the home page:
%%
%% \title{Title\tnoteref{label1}}
%% \tnotetext[label1]{}
%% \author{Name\corref{cor1}\fnref{label2}}
%% \ead{email address}
%% \ead[url]{home page}
%% \fntext[label2]{}
%% \cortext[cor1]{}
%% \address{Address\fnref{label3}}
%% \fntext[label3]{}

\title{The linear codes of $t$-designs held in the Reed-Muller and Simplex codes       
\tnotetext[fn1]{C. Ding's research was supported by the Hong Kong Research Grants Council,
Proj. No. 16300418. C. Tang was supported by National Natural Science Foundation of China (Grant No.
11871058) and China West Normal University (14E013, CXTD2014-4 and the Meritocracy Research
Funds).}
}

\author[cding]{Cunsheng Ding}
\ead{cding@ust.hk}
\author[cmt]{Chunming Tang}
\ead{tangchunmingmath@163.com}

%\cortext[lcj]{Corresponding author}
\address[cding]{Department of Computer Science and Engineering, The Hong Kong University of Science and Technology, Clear Water Bay, Kowloon, Hong Kong, China}
\address[cmt]{School of Mathematics and Information, China West Normal University, Nanchong, Sichuan,  637002, China}
%\address[vdt]{
%Department of Mathematical Sciences, Michigan Technological University, 
%Houghton, Michigan 49931, USA
%}

%% use optional labels to link authors explicitly to addresses:
%% \author[label1,label2]{<author name>}
%% \address[label1]{<address>}
%% \address[label2]{<address>}
%\author{Cunsheng Ding}
%\ead{cding@ust.hk} 

%\cortext[lcj]{Corresponding author}
%\address{Department of Computer Science and Engineering, 
%The Hong Kong University of Science and Technology,
%Clear Water Bay, Kowloon, Hong Kong, China}

%\tableofcontents

\begin{abstract}
A fascinating topic of combinatorics is $t$-designs, which have a very long history. 
The incidence matrix of a $t$-design generates a linear code over $\gf(q)$ for any prime power 
$q$, which is called the linear code of the $t$-design over $\gf(q)$. On the other hand, some linear codes 
hold $t$-designs for some $t \geq 1$. The purpose of this paper is to study the linear codes of some 
$t$-designs held in the Reed-Muller and Simplex codes. Some general theory for the linear codes of $t$-designs held in 
linear codes is presented. Open problems are also presented.         
\end{abstract}

\begin{keyword}
Cyclc code \sep linear code \sep Reed-Muller code \sep $t$-design.
%% PACS codes here, in the form: \PACS code \sep code

%% MSC codes here, in the form: \MSC code \sep code
%% or \MSC[2008] code \sep code (2000 is the default)
\MSC  05B05 \sep 51E10 \sep 94B15 

\end{keyword}

\end{frontmatter}

%\tableofcontents

\section{Introduction}

Let $\cP$ be a set of $v \ge 1$ elements, and let $\cB$ be a set of $k$-subsets of $\cP$, where $k$ is
a positive integer with $1 \leq k \leq v$. Let $t$ be a positive integer with $t \leq k$. The pair
$\bD = (\cP, \cB)$ is called a $t$-$(v, k, \lambda)$ {\em design\index{design}}, or simply {\em $t$-design\index{$t$-design}}, if every $t$-subset of $\cP$ is contained in exactly $\lambda$ elements of
$\cB$. The elements of $\cP$ are called points, and those of $\cB$ are referred to as blocks.
We usually use $b$ to denote the number of blocks in $\cB$.  A $t$-design is called {\em simple\index{simple}} if $\cB$ does not contain repeated blocks. In this paper, we consider only simple 
$t$-designs.  A $t$-design is called {\em symmetric\index{symmetric design}} if $v = b$. It is clear that $t$-designs with $k = t$ or $k = v$ always exist. Such $t$-designs are {\em trivial}. In this paper, we consider only $t$-designs with $v > k > t$.
A $t$-$(v,k,\lambda)$ design is referred to as a {\em Steiner system\index{Steiner system}} if $t \geq 2$ and $\lambda=1$, and is denoted by $S(t,k, v)$. 

\subsection{The codes of designs}

Let $\bD = (\cP, \cB)$ be a $t$-$(v, k, \lambda)$ design with $b \ge 1$ blocks. 
The points of $\cP$ are usually indexed with $p_1,p_2,\cdots,p_v$, and the blocks of $\cB$ 
are normally denoted by $B_1, B_2, \cdots, B_b$. The {\em incidence matrix\index{incidence 
matrix}} $M_\bD=(m_{ij})$ of $\bD$ is a $b \times v$ matrix where $m_{ij}=1$ if  $p_j$ is on $B_i$ 
and $m_{ij}=0$ otherwise. The binary matrix $M_{\bD}$ is viewed as a matrix over $\gf(q)$ for any 
prime power $q$, and its row vectors span a linear code of length $v$ over $\gf(q)$, which is 
denoted by $\C_q(\bD)$ and called the \emph{code} of $\bD$ over $\gf(q)$. 
It is clear that the code $\C_q(\bD)$ depends on the labelling of the points 
of $\bD$, but is unique up to coordinate permutations.

\subsection{The support designs of linear codes}

We assume that the reader is familiar with the basics of linear codes and cyclic codes, 
and proceed to introduce the support designs of linear codes directly. 
Let $\C$ be a $[v, \kappa, d]$ linear code over $\gf(q)$. Let $A_i:=A_i(\C)$, which denotes the
number of codewords with Hamming weight $i$ in $\C$, where $0 \leq i \leq v$. The sequence 
$(A_0, A_1, \cdots, A_{v})$ is
called the \textit{weight distribution} of $\C$, and $\sum_{i=0}^v A_iz^i$ is referred to as
the \textit{weight enumerator} of $\C$. For each $k$ with $A_k \neq 0$,  let $\cB_k$ denote
the set of the supports of all codewords with Hamming weight $k$ in $\C$, where the coordinates of a codeword
are indexed by $(p_1, \ldots, p_v)$. Let $\cP=\{p_1, \ldots, p_v\}$.  The pair $(\cP, \cB_k)$
may be a $t$-$(v, k, \lambda)$ design for some positive integer $\lambda$, which is called a 
\emph{support design} of the code, and is denoted by $\bD_k(\C)$. In such a case, we say that the code $\C$ holds a $t$-$(v, k, \lambda)$ 
design. Throughout this paper, we denote the dual code of $\C$ by $\C^\perp$, and the extended code of 
$\C$ by $\overline{\C}$.

\subsection{The objectives of this paper}

While most linear codes over finite fields do not hold $t$-designs, some linear codes do hold 
$t$-designs for $t \geq 1$. Studying the linear codes of $t$-designs has been a topic of research for a long time 
\cite{AK92,AK98,AM69,CH92,Dingbk15,HP03,KP95,Tonc93,Tonchev,Tonchevhb}. 

Let $q_1$ be a power of a prime $p$. Our starting point is a linear code $\C_1$ over a finite field $\gf(q_1)$, which holds a $t$-design $\bD_k(\C_1)$, 
our objective is to study the classical linear code $\C_2=\C_{q_2}(\bD_k(\C_1))$ over a finite field $\gf(q_2)$, and 
hope that the new code $\C_2$ has interesting parameters and properties. This idea is depicted as follows: 
\begin{eqnarray*}
\mbox{Original code $\C_1$ over $\gf(q_1)$}  \Rightarrow 
\mbox{A $t$-design $\bD_k(\C_1)$ held in $\C_1$} \Rightarrow 
\mbox{New code $\C_2:=\C_{q_2}(\bD_k(\C_1))$}.  
\end{eqnarray*} 
It may happen that $\C_2=\C_1$, but they are different in many cases. Note that a linear code $\C_1$ may 
hold exponentially many $t$-designs. We may obtain exponentially many new codes $\C_2=\C_{q_2}(\bD_k(\C_1))$ 
from the original code $\C_1$. Although the finite field $\gf(q_2)$ has many choices for the given $\C_1$ and 
$q_1$, we will restrict ourself to the case $q_2=p$ for simplicity in most parts of this paper. 
It is well known 
that the code  $\C_{p}(\bD)$ of a $t$-$(v, k, \lambda)$ design $\bD$ has dimension less than $v-1$ only if $p$ 
divides $\lambda_1 - \lambda_2$, where $\lambda_i$
denotes the number of blocks of $\bD$ that contain $i$ points
($i=1, 2$) (cf.  \cite{Hamada73}, \cite[Theorem 1.86]{Tonchevhb}.)

In this paper, we will consider several families of linear codes $\C$ over $\gf(q)$ and some of the designs $\bD_k(\C)$ 
held in $\C$, and will determine the parameters of the linear code $\C_p(\bD_k(\C))$ for some designs $\bD_k(\C)$ held in 
$\C$. This is doable in the case that $q=2$, but is a hard problem for $q>2$. In the binary case, we will present 
some general theory for codes   $\C_2(\bD_k(\C))$.  
The objective of this paper is to study the linear codes of some known 
$t$-designs held in the generalised Reed-Muller codes and the Simplex codes. Some general theory for the linear codes of $t$-designs held in linear codes is presented.   Open problems on this topic will also be presented.

\section{Auxiliary results}

\subsection{Designs from linear codes via the Assmus-Mattson Theorem}

The following theorem, developed by Assumus and Mattson, shows that the pair $(\cP, \cB_k)$ defined by 
a linear code is a $t$-design under certain conditions.

\begin{theorem}[Assmus-Mattson Theorem]\label{thm-designAMtheorem} (\cite{AM69}, \cite[p. 303]{HP03})
Let $\C$ be a $[v,k,d]$ code over $\gf(q)$. Let $d^\perp$ denote the minimum distance of $\C^\perp$. 
Let $w$ be the largest integer satisfying $w \leq v$ and 
$$ 
w-\left\lfloor  \frac{w+q-2}{q-1} \right\rfloor <d. 
$$ 
Define $w^\perp$ analogously using $d^\perp$. Let $(A_i)_{i=0}^v$ and $(A_i^\perp)_{i=0}^v$ denote 
the weight distribution of $\C$ and $\C^\perp$, respectively. Fix a positive integer $t$ with $t<d$, and 
let $s$ be the number of $i$ with $A_i^\perp \neq 0$ for $1 \leq i \leq v-t$. Suppose $s \leq d-t$. Then 
\begin{itemize}
\item the codewords of weight $i$ in $\C$ hold a $t$-design provided $A_i \neq 0$ and $d \leq i \leq w$, and 
\item the codewords of weight $i$ in $\C^\perp$ hold a $t$-design provided $A_i^\perp \neq 0$ and 
         $d^\perp \leq i \leq \min\{v-t, w^\perp\}$. 
\end{itemize}
\end{theorem}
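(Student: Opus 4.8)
The plan is to prove the Assmus--Mattson Theorem via the theory of the Pless power moments together with a counting argument on restricted (shortened) codes. The essential idea is that to verify the pair $(\cP, \cB_i)$ is a $t$-design, it suffices to show that the number of codewords of weight $i$ whose support contains a fixed $t$-subset $T$ of $\cP$ is independent of the choice of $T$. First I would fix an arbitrary $t$-set $T = \{p_{j_1}, \ldots, p_{j_t}\}$ and consider the shortened code $\C_T$ obtained by taking all codewords of $\C$ that vanish on $T$ (equivalently, puncturing after shortening). The counting of blocks of $\cB_i$ through $T$ reduces, via inclusion--exclusion over subsets of $T$, to knowing the complete weight distributions of a family of such shortened codes.

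The key technical device is the MacWilliams identity relating the weight enumerator of $\C_T$ to that of its dual, the latter being a code obtained by puncturing $\C^\perp$ on $T$. The second key step is to invoke the hypothesis $s \le d - t$: since the dual code $\C^\perp$ has at most $s$ nonzero weights in the relevant range $1 \le i \le v - t$, the punctured dual $\C^\perp$ restricted to the complement of $T$ inherits a bounded number of nonzero weights. The combinatorial heart is then the observation that a polynomial weight enumerator determined by the MacWilliams transform, under the constraint $s \le d - t$ combined with $t < d$, is forced to be the \emph{same} polynomial regardless of which $t$-set $T$ was chosen. Concretely, the weights of codewords in $\C_T$ that are relevant (those of weight $i$ with $d \le i \le w$) are pinned down by a system of linear equations coming from the power moments, and the number of these equations matches the number of unknowns precisely when $s \le d-t$, yielding a unique solution independent of $T$.

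The role of the integers $w$ and $w^\perp$ is to control which weights $i$ actually produce genuine designs: the floor-function condition $w - \lfloor (w+q-2)/(q-1) \rfloor < d$ guarantees that no two distinct codewords of $\C$ of weight at most $w$ share the same support pattern in a way that would spoil the design structure, so that distinct supports correspond bijectively (up to scalar multiples in the $\gf(q)$ setting, which the $\lfloor \cdot \rfloor$ expression accounts for) to distinct blocks. I would handle the primal conclusion (designs from codewords of weight $i \le w$ in $\C$) and the dual conclusion (designs from codewords of weight $i$ in $\C^\perp$ with $d^\perp \le i \le \min\{v-t, w^\perp\}$) symmetrically, since the hypothesis $s \le d - t$ together with the MacWilliams duality is self-dual in the appropriate sense.

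I expect the main obstacle to be the bookkeeping in the generalized (non-binary) case: over $\gf(q)$ with $q > 2$, a single support of size $i$ is shared by $q-1$ scalar multiples of a codeword, so the passage between counting codewords and counting blocks requires the $\lfloor (w+q-2)/(q-1) \rfloor$ correction, and making the power-moment linear system come out square demands careful tracking of these multiplicities. The cleanest route is to cite the classical proof in \cite{AM69} and \cite[p.~303]{HP03} for the full verification, while presenting here the conceptual skeleton: reduce the design property to $T$-independence of restricted weight distributions, use MacWilliams duality to transfer the problem to $\C^\perp$, and apply the bound $s \le d-t$ to close the linear system. Since this theorem is quoted as a known auxiliary result rather than a new contribution, the appropriate treatment is to attribute the proof and emphasize the hypotheses that will be verified for the Reed--Muller and Simplex codes in the sequel.
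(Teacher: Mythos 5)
The paper does not prove this theorem at all: it is imported verbatim as a known auxiliary result, attributed to exactly the sources you cite, namely \cite{AM69} and \cite[p.~303]{HP03}. Your closing recommendation---attribute the proof and only record the hypotheses that will later be verified for the Reed--Muller and Simplex codes---is therefore precisely the paper's treatment, and your sketch is, in outline, the classical power-moment argument from those sources. One correction if the sketch were ever written out in full: the code that inherits at most $s$ nonzero weights is the \emph{shortened} dual $(\C^\perp)_T$ (dual codewords vanishing on $T$ keep their weights, which then lie in $[1, v-t]$), not the code obtained by puncturing $\C^\perp$ on $T$, whose codewords have reduced weights and need not take at most $s$ distinct values; the power-moment/Vandermonde system is applied to the pair consisting of $(\C^\perp)_T$ and its dual, the punctured code $\C^T$, whose minimum distance is at least $d-t \ge s$, and only afterwards does one recover the weight distribution of the punctured dual (hence, via MacWilliams, of the shortened code $\C_T$) by inclusion--exclusion over subsets of $T$. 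Your reading of the floor condition defining $w$ and $w^\perp$ is correct: it forces two codewords of weight at most $w$ with equal supports to be scalar multiples of one another, so that codeword counts through a fixed $t$-set are exactly $(q-1)$ times block counts, which is what converts $T$-independence of the restricted weight distributions into the design property.
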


The Assmus-Mattson Theorem is a very useful tool for constructing $t$-designs from linear codes, 
and has been recently employed to construct infinitely many $2$-designs and $3$-designs.

\subsection{Designs from linear codes via the automorphism group}

In this section, we introduce the automorphism approach to obtaining $t$-designs from linear codes. 
To this end, we have to define the automorphism group of linear codes. We will also present some 
basic results about this approach.

The set of coordinate permutations that map a code $\C$ to itself forms a group, which is referred to as 
the \emph{permutation automorphism group\index{permutation automorphism group of codes}} of $\C$
and denoted by $\PAut(\C)$. If $\C$ is a code of length $n$, then $\PAut(\C)$ is a subgroup of the 
\emph{symmetric group\index{symmetric group}} $\Sym_n$.

A \emph{monomial matrix\index{monomial matrix}} over $\gf(q)$ is a square matrix having exactly one 
nonzero element of $\gf(q)$  in each row and column. A monomial matrix $M$ can be written either in 
the form $DP$ or the form $PD_1$, where $D$ and $D_1$ are diagonal matrices and $P$ is a permutation 
matrix. 

The set of monomial matrices that map $\C$ to itself forms the group $\MAut(\C)$,  which is called the 
\emph{monomial automorphism group\index{monomial automorphism group}} of $\C$. Clearly, we have 
$$
\PAut(\C) \subseteq \MAut(\C).
$$

The \textit{automorphism group}\index{automorphism group} of $\C$, denoted by $\GAut(\C)$, is the set 
of maps of the form $M\gamma$, 
where $M$ is a monomial matrix and $\gamma$ is a field automorphism, that map $\C$ to itself. In the binary 
case, $\PAut(\C)$,  $\MAut(\C)$ and $\GAut(\C)$ are the same. If $q$ is a prime, $\MAut(\C)$ and 
$\GAut(\C)$ are identical. In general, we have 
$$ 
\PAut(\C) \subseteq \MAut(\C) \subseteq \GAut(\C). 
$$

By definition, every element in $\GAut(\C)$ is of the form $DP\gamma$, where $D$ is a diagonal matrix, 
$P$ is a permutation matrix, and $\gamma$ is an automorphism of $\gf(q)$.   
The automorphism group $\GAut(\C)$ is said to be $t$-transitive if for every pair of $t$-element ordered 
sets of coordinates, there is an element $DP\gamma$ of the automorphism group $\GAut(\C)$ such that its 
permutation part $P$ sends the first set to the second set.

The next theorem gives another sufficient condition for a linear code to hold $t$-designs \cite[p. 308]{HP03}. 
 
\begin{theorem}\label{thm-designCodeAutm}
Let $\C$ be a linear code of length $n$ over $\gf(q)$ where $\GAut(\C)$ is $t$-transitive. Then the codewords of any weight $i \geq t$ of $\C$ hold a $t$-design.
\end{theorem}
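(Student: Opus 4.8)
The plan is to reduce the assertion to a standard orbit argument by first understanding precisely how an automorphism moves the support of a codeword, and then feeding $t$-transitivity into the counting. First I would analyse a single automorphism. Write an arbitrary element of $\GAut(\C)$ as $\sigma = DP\gamma$, where $D$ is a diagonal matrix with nonzero diagonal entries, $P$ is a permutation matrix, and $\gamma$ is an automorphism of $\gf(q)$. The key observation is the identity $\support(\sigma(\bc)) = P(\support(\bc))$ for every codeword $\bc \in \C$: scaling a coordinate by a nonzero element of $\gf(q)$ keeps it nonzero exactly when it was nonzero, the field automorphism $\gamma$ fixes $0$ and carries nonzero entries to nonzero entries, and only the permutation part $P$ relocates coordinates. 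Hence neither $D$ nor $\gamma$ alters the set of nonzero positions, and the support is transformed solely by $P$.

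Second, since $\sigma$ maps $\C$ to itself and preserves Hamming weight, it sends a codeword of weight $i$ to a codeword of weight $i$; combined with the identity above, the permutation part $P$ maps the support of each weight-$i$ codeword to the support of another weight-$i$ codeword. Collecting the permutation parts of all elements of $\GAut(\C)$ therefore yields a permutation group $G$ on $\cP = \{p_1,\dots,p_n\}$ that maps the block set $\cB_i$ into itself; applying the same reasoning to $\sigma^{-1}$ shows this map is in fact a bijection of $\cB_i$. By hypothesis $\GAut(\C)$ is $t$-transitive, which by the definition given above means precisely that this group $G$ of permutation parts acts $t$-transitively on $\cP$; in particular $G$ is transitive on the unordered $t$-subsets of $\cP$.

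Finally I would run the standard argument. Fix a weight $i \geq t$ with $A_i \neq 0$, and for a $t$-subset $T \subseteq \cP$ put $\lambda(T) = |\{B \in \cB_i : T \subseteq B\}|$. Given two $t$-subsets $T_1, T_2$, transitivity on $t$-subsets supplies $g \in G$ with $g(T_1) = T_2$. Since $g$ permutes $\cB_i$, the map $B \mapsto g(B)$ restricts to a bijection between the blocks of $\cB_i$ containing $T_1$ and those containing $T_2$, because $T_1 \subseteq B$ holds if and only if $T_2 = g(T_1) \subseteq g(B)$ holds. Consequently $\lambda(T_1) = \lambda(T_2)$, so $\lambda(T)$ equals a constant $\lambda$ independent of $T$, and $(\cP, \cB_i)$ is a $t$-$(n, i, \lambda)$ design, as claimed.

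I expect the only delicate point to be the support-transformation identity $\support(\sigma(\bc)) = P(\support(\bc))$, that is, making it fully precise that the diagonal scaling and the field automorphism are invisible to the support and that the $t$-transitivity of $\GAut(\C)$ descends intact to the group $G$ of permutation parts. Once these two bookkeeping facts are secured, the passage from a $t$-transitive permutation group preserving $\cB_i$ to a $t$-design is the routine argument carried out above.
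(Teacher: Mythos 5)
Your proof is correct. Note that the paper itself does not prove this theorem: it is stated as a known result, cited to \cite[p.~308]{HP03}, so there is no internal proof to compare against, and your argument is exactly the standard one behind that citation --- the support identity $\support(\sigma(\bc))=P(\support(\bc))$ showing that the diagonal and Galois parts are invisible to supports, followed by the counting step in which $t$-transitivity forces $\lambda(T_1)=\lambda(T_2)$. The only shortcut is your assertion that the permutation parts form a group (strictly this requires checking that $\sigma \mapsto P$ is a homomorphism, by pushing diagonal matrices and field automorphisms past permutations), but nothing essential hinges on it: the paper's definition of $t$-transitivity directly supplies, for each pair of ordered $t$-sets, a single element of $\GAut(\C)$ whose permutation part carries one to the other, which is all your counting argument ever uses.
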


\subsection{Relations between $\C_q(\bD)$ and  $\C_q(\bD^c)$} 

Let $\bD$ be a $t$-$(v, k, \lambda)$ design. Then its complement $\bD^c$ is a $t$-$(v, v-k, \lambda^c)$ design, where 
$$ 
\lambda^c=\lambda \frac{\binom{v-t}{k}}{\binom{v-t}{k-t}}. 
$$ 
Since $\bD$ and $\bD^c$ are complementary, the two codes $\C_q(\bD)$ and  $\C_q(\bD^c)$ 
should be related. Below we present a few relations between the two codes. We assume that 
the columns of both incidence matrices are indexed by the points in the same order.  

\begin{theorem}\label{thm-nyear19} 
Let notation be the same as before. Let $\bone$ denote the all-one vector. 
\begin{itemize}
\item If $\bone \in \C_q(\bD)$ and $\bone \not\in \C_q(\bD^c)$, then $ \C_q(\bD) \supseteq \C_q(\bD^c)$ and $\dim(\C_q(\bD))=\dim(\C_q(\bD^c))+1$.  

\item If $\bone \in \C_q(\bD^c)$ and $\bone \not\in \C_q(\bD)$, then $ \C_q(\bD^c) \supseteq \C_q(\bD)$ and $\dim(\C_q(\bD^c))=\dim(\C_q(\bD))+1$. 

\item If $\bone \in \C_q(\bD) \cap \C_q(\bD^c)$, then $\C_q(\bD^c) = \C_q(\bD)$. 

\item If $\bone \not\in \C_q(\bD) \cup \C_q(\bD^c)$, then $ \C_q(\bD) \not\subseteq \C_q(\bD^c)$ 
and $ \C_q(\bD^c) \not\subseteq \C_q(\bD)$. In addition, 
$$ 
\C_q(\bD) \cap \C_q(\bD^c) =\left\{ \sum_{i} b_i (\bone - \bg_i): b_i \in \gf(q), \ \sum_{i} b_i =0 \right\},  
$$ 
where $\bg_i$ is the $i$-th row vector in the incidence matrix of $\bD$. 
\end{itemize} 
\end{theorem}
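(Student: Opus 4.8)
The plan is to reduce all four cases to a single structural observation about how the two codes relate once the all-one vector is adjoined. Write $C := \C_q(\bD)$ and $C' := \C_q(\bD^c)$ as local abbreviations, and note first that the blocks of $\bD^c$ are the complements $\cP \setminus B_i$, so the rows of its incidence matrix are exactly the vectors $\bone - \bg_i$ and hence $C' = \mathrm{span}_{\gf(q)}\{\bone - \bg_i\}$. The crucial first step is to establish that $C$ and $C'$ span the same space modulo $\bone$, namely
$$
C + \langle \bone \rangle = C' + \langle \bone \rangle =: W,
$$
where $\langle \bone\rangle$ denotes the line spanned by $\bone$. This holds because each generator $\bone - \bg_i$ of $C'$ already lies in $C + \langle\bone\rangle$, giving $C' \subseteq W$, while symmetrically $\bg_i = \bone - (\bone - \bg_i)$ lies in $C' + \langle\bone\rangle$, giving the reverse inclusion.

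With this identity in hand I would dispatch the first three bullets by elementary dimension bookkeeping. If $\bone \in C$ then $C + \langle\bone\rangle = C$, so $W = C$ and therefore $C' \subseteq W = C$, which is the inclusion $\C_q(\bD) \supseteq \C_q(\bD^c)$; if in addition $\bone \notin C'$, then adjoining $\bone$ to $C'$ raises its dimension by exactly one, so $C = C' + \langle\bone\rangle$ yields $\dim C = \dim C' + 1$. The second bullet is this same argument with the roles of $\bD$ and $\bD^c$ swapped, using $(\bD^c)^c = \bD$. For the third bullet, $\bone \in C \cap C'$ forces $C = C + \langle\bone\rangle = W = C' + \langle\bone\rangle = C'$.

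The fourth case takes slightly more care. Assuming $\bone \notin C$ and $\bone \notin C'$, both dimensions equal $\dim W - 1$, so $\dim C = \dim C'$; moreover $C = C'$ is impossible, since then $\bg_i \in C$ and $\bone - \bg_i \in C'$ would force $\bone \in C$. Equal dimension together with $C \neq C'$ rules out either containment, establishing the non-inclusion claims. For the intersection formula, let $S$ denote the right-hand set. Any $\sum_i b_i(\bone - \bg_i)$ with $\sum_i b_i = 0$ collapses to $-\sum_i b_i\,\bg_i \in C$ while lying in $C'$ by construction, so $S \subseteq C \cap C'$. Conversely, given $x \in C \cap C'$ and any expansion $x = \sum_i b_i(\bone - \bg_i)$, setting $c = \sum_i b_i$ gives $c\,\bone = x + \sum_i b_i\,\bg_i \in C$, and since $\bone \notin C$ this forces $c = 0$, so $x \in S$.

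The only genuinely delicate point I anticipate is the last step: one must verify that the coefficient sum $c$ vanishes for \emph{every} representation of an element of $C \cap C'$ as a combination of the $\bone - \bg_i$, not merely for some convenient one. This is precisely what the relation $c\,\bone = x + \sum_i b_i\,\bg_i \in C$ provides, as it invokes only $x \in C$ and is indifferent to the chosen expansion; the hypothesis $\bone \notin C$ then finishes the argument. Everything else is routine linear algebra, so the conceptual weight of the proof rests entirely on the identity $C + \langle\bone\rangle = C' + \langle\bone\rangle$.
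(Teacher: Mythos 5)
Your proposal is correct, and it reorganizes the argument around a different centerpiece than the paper's proof. The paper works bullet by bullet at the level of generators: for the first bullet it notes that $\bone - \bg_i \in \C_q(\bD)$ gives the inclusion and that $\bone$ together with the vectors $\bone - \bg_i$ regenerates the $\bg_i$, which yields the dimension claim; for the fourth bullet it obtains the non-inclusions by direct contradiction (if $\C_q(\bD) \subseteq \C_q(\bD^c)$, then $\bone = \bg_1 + (\bone - \bg_1)$ would lie in $\C_q(\bD^c)$), and it proves the intersection formula by essentially the same coefficient-sum computation you give. You instead promote the identity $\C_q(\bD) + \gf(q)\bone = \C_q(\bD^c) + \gf(q)\bone$ to a standalone lemma and deduce all four bullets from it by dimension bookkeeping, in particular deriving the non-inclusions from $\dim \C_q(\bD) = \dim \C_q(\bD^c)$ together with $\C_q(\bD) \neq \C_q(\bD^c)$ rather than by direct contradiction. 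Notably, the identity you take as the cornerstone is precisely the Assmus--Key fact that the paper cites \emph{after} the theorem as the result being refined and extended, so your route makes that connection explicit and gives a more modular proof; the paper's generator-level arguments are marginally more direct and avoid even the (trivial) appeal to finite-dimensionality. On the one point of genuine delicacy, the converse inclusion in the intersection formula, your argument is sound and in fact slightly cleaner than the paper's: you use only $x \in \C_q(\bD)$ to conclude that the coefficient sum times $\bone$ lies in $\C_q(\bD)$, whereas the paper first expands $x$ in the $\bg_i$ as well; both then conclude from $\bone \notin \C_q(\bD)$ that the coefficient sum vanishes for every representation.
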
 

\begin{proof}
By definition, $\bone - \bg_1, \cdots, \bone - \bg_b$ are the rows of the incidence 
matrix of $\bD^c$. 

Assume that $\bone \in \C_q(\bD)$ and $\bone \not\in \C_q(\bD^c)$. Then  
$\bone - \bg_1, \cdots, \bone - \bg_b$ are codewords of $ \C_q(\bD)$ It then 
follows that $ \C_q(\bD) \supseteq \C_q(\bD^c)$. Clearly, 
$\bone, \bone - \bg_1, \cdots, \bone - \bg_b$ generate   $\bg_1, \cdots, \bg_b$. 
Since $\bone \not\in \C_q(\bD^c)$, $\dim(\C_q(\bD))=\dim(\C_q(\bD^c))+1$.  

The conclusion of the second part is symmetric to that of the first part. 
The conclusion of the third part follows from the proof of the first conclusion. 

Finally, we prove the conclusions of the fourth part. On the contrary, suppose that 
$ \C_q(\bD) \subseteq \C_q(\bD^c)$. Then $\bg_1 \in  \C_q(\bD^c)$, But $\bone - \bg_1$ 
is also a codeword of $\C_q(\bD^c)$. Consequently $\bone=\bg_1+(\bone-\bg_1)$ is a 
codeword of $\C_q(\bD^c)$, which is contrary to the assumption. Consequently, 
$ \C_q(\bD) \not\subseteq \C_q(\bD^c)$. By symmetry,  $ \C_q(\bD^c) \not\subseteq \C_q(\bD)$. 

Let $\bg \in \C_q(\bD) \cap \C_q(\bD^c)$. Since $\bg \in  \C_q(\bD)$, there are $c_i \in \gf(q)$ such 
that $\bg=\sum_{i} c_i \bg_i$. Similarly,   there are $a_i \in \gf(q)$ such 
that $\bg=\sum_{i} a_i (\bone - \bg_i)$. As a result, 
$$ 
\bg=\sum_{i} c_i \bg_i=\sum_{i} a_i (\bone - \bg_i). 
$$ 
We then deduce that 
$$ 
(\sum_{i} a_i) \bone= \sum_{i} (c_i+a_i) \bg_i \in \C_q(\bD). 
$$ 
By assumption, $\bone \not\in \C_q(\bD)$. It then follows that $\sum_{i} a_i=0$. We then deduce that 
$$
\C_q(\bD) \cap \C_q(\bD^c)  \subseteq \left\{ \sum_{i} a_i (\bone - \bg_i): a_i \in \gf(q), \ \sum_{i} a_i =0 \right\}. 
$$ 
On the other hand, 
it is easily seen that 
$$
\C_q(\bD) \cap \C_q(\bD^c)  \supseteq \left\{ \sum_{i} a_i (\bone - \bg_i): a_i \in \gf(q), \ \sum_{i} a_i =0 \right\}. 
$$ 
The desired equality of the two sets finally follows. This completes the proof of this theorem. 
\end{proof}

Theorem \ref{thm-nyear19} is a refined and slightly extended result of the fact $\C_q(\bD) + \gf(q)\bone = 
\C_q(\bD^c) + \gf(q)\bone$ pointed out in \cite[p. 46]{AK92}. 
It will play a vital role in this paper. It says that in the first three cases the two 
codes  $ \C_q(\bD)$ and $\C_q(\bD^c)$ are closely related. Sometimes, it may be very hard to study 
$ \C_q(\bD)$ directly, but it may be possible to investigate $\C_q(\bD^c)$. One can then get information 
on $\C_q(\bD)$ from information on $\C_q(\bD^c)$. This is a key idea employed in this paper. To make use of this idea, 
we first need to know if $\bone \in  \C_q(\bD)$ or $\bone \in  \C_q(\bD^c)$. This could be a hard problem 
itself. For instance, it took ten years to settle this problem for the binary linear codes of a class of symmetric 
designs \cite{MW98}.   
In the last case 
(i.e., $\bone \not\in \C_q(\bD) \cup \C_q(\bD^c$), the two codes $ \C_q(\bD)$ and $\C_q(\bD^c)$ 
are loosely related.

\begin{theorem}\label{thm-june81}
Let $q$ be a power of a prime $p$. Let $\bD$ be a $t$-$(v, k, \lambda)$ design with $t \geq 2$. 
Put  
$$ 
\lambda_1=\lambda \frac{\binom{v-1}{t-1}}{\binom{k-1}{t-1}}. 
$$ 
If $\lambda_1 \not\equiv 0 \pmod{p}$, then the all-one vector $\bone$ is a codeword in 
$\C_q(\bD)$.  
\end{theorem}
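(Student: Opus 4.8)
The plan is to recognize $\lambda_1$ as the replication number of the design and then read off $\bone$ as an explicit $\gf(q)$-linear combination of the rows of the incidence matrix. First I would invoke the standard counting identity for $t$-designs: in a $t$-$(v,k,\lambda)$ design the number of blocks containing a fixed $s$-subset of points depends only on $s$, and for $s=1$ it equals $\lambda \binom{v-1}{t-1}/\binom{k-1}{t-1}$, which is exactly the quantity denoted $\lambda_1$. In particular, since $t \geq 2 \geq 1$, the design is also a $1$-design, so every single point $p_j$ lies in precisely $\lambda_1$ blocks.

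Next I would examine the column sums of the incidence matrix $M_\bD$ over $\gf(q)$. By definition the $(i,j)$-entry of $M_\bD$ is $1$ exactly when $p_j \in B_i$, so the $j$-th coordinate of $\sum_{i=1}^b \bg_i$ counts the number of blocks containing $p_j$, which is $\lambda_1$ for every $j$ by the previous step. Hence, viewing the entries over $\gf(q)$,
$$
\sum_{i=1}^b \bg_i = \lambda_1 \, \bone,
$$
where the integer $\lambda_1$ is interpreted modulo $p$.

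Finally, the hypothesis $\lambda_1 \not\equiv 0 \pmod{p}$ makes $\lambda_1$ a nonzero element of the prime subfield of $\gf(q)$, hence a unit. Multiplying through by its inverse yields
$$
\bone = \lambda_1^{-1} \sum_{i=1}^b \bg_i,
$$
which exhibits $\bone$ as a $\gf(q)$-linear combination of the rows $\bg_1, \dots, \bg_b$ of $M_\bD$. Since $\C_q(\bD)$ is by definition the $\gf(q)$-row span of $M_\bD$, this shows $\bone \in \C_q(\bD)$, as desired.

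There is essentially no hard step here: the entire content is the observation that uniform point-replication forces the all-ones column sum, together with the fact that a replication number coprime to $p$ is invertible over $\gf(q)$. The only point that requires a little care is justifying that $\lambda_1$ really is the replication number, i.e.\ citing or re-deriving the reduction formula $\lambda_s = \lambda \binom{v-s}{t-s}/\binom{k-s}{t-s}$, and noting that the condition ``$\lambda_1 \not\equiv 0 \pmod{p}$'' is precisely what guarantees invertibility in any field of characteristic $p$, independently of the particular power $q$.
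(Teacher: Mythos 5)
Your proof is correct and follows essentially the same route as the paper: interpret $\lambda_1$ as the replication number of the $1$-design, sum the rows of the incidence matrix over $\gf(q)$ to obtain $(\lambda_1 \bmod p)\,\bone$, and use $\lambda_1 \not\equiv 0 \pmod{p}$ to conclude $\bone \in \C_q(\bD)$. Your version is slightly more explicit in citing the reduction formula and in writing out the scalar inverse, but the argument is the same.
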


\begin{proof}
It is known that $\bD$ is also a $1$-$(v, k, \lambda_1)$ design. Consequently, every point 
is incident with $\lambda_1$ blocks. It then follows that the sum over $\gf(q)$ of the row 
vectors of the incidence matrix of $\bD$ is 
$$ 
(\lambda_1, \lambda_1, ..., \lambda_1)= (\lambda_1 \bmod{p}) \bone, 
$$ 
which is a codeword in $\C_q(\bD)$. Therefore, $\bone \in \C_q(\bD)$.
\end{proof}

Theorem \ref{thm-june81} will be employed in this paper shortly, and it is quite useful. We 
inform that the condition $\lambda_1 \not\equiv 0 \pmod{p}$ is not necessary for $\bone$ 
being a codeword of $\C_q(\bD)$.

\subsection{Relations between $\C_p(\bD)$ and $\C_q(\bD)$} 

Let $q=p^s$, where $s \geq 2$ and $p$ is a prime. Let $\bD$ be a $t$-$(v, k, \lambda)$ design. 
In this section, we document some relations between $\C_p(\bD)$ and $\C_q(\bD)$. 

\begin{theorem}\label{thm-feb161}
Let $q=p^s$, where $s \geq 2$. Let $\bD$ be a $t$-$(v, k, \lambda)$ design. Then 
$\C_p(\bD)$ is the subfield subcode over $\gf(p)$ of $\C_q(\bD)$. Further, 
$$ 
\C_p(\bD)^\perp = \tr(\C_q(\bD)^\perp), 
$$
where $\tr(\C_q(\bD)^\perp)$ denotes the trace code of $\C_q(\bD)^\perp$.  
\end{theorem}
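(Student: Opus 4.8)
The plan is to prove the two assertions in sequence, the second being an immediate consequence of the first together with Delsarte's theorem. First I would establish the subfield-subcode identity
$$
\C_p(\bD) = \C_q(\bD) \cap \gf(p)^v,
$$
which is precisely the statement that $\C_p(\bD)$ is the subfield subcode over $\gf(p)$ of $\C_q(\bD)$. Then I would apply Delsarte's theorem, which asserts that for any $\gf(q)$-linear code $\C$ of length $v$ one has $(\C \cap \gf(p)^v)^\perp = \tr(\C^\perp)$, with $\C=\C_q(\bD)$, to conclude $\C_p(\bD)^\perp = \tr(\C_q(\bD)^\perp)$.

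For the subfield-subcode identity, the starting observation is that the incidence matrix $M_\bD$ has entries in $\{0,1\} \subseteq \gf(p)$, so its rows already lie in $\gf(p)^v$. The inclusion $\C_p(\bD) \subseteq \C_q(\bD) \cap \gf(p)^v$ is then immediate: every $\gf(p)$-combination of the rows is simultaneously a $\gf(q)$-combination with coefficients in $\gf(p) \subseteq \gf(q)$, and its coordinates stay in $\gf(p)$. The content is the reverse inclusion. Here I would use that Gaussian elimination on $M_\bD$ proceeds entirely within $\gf(p)$, so $\rank_{\gf(p)} M_\bD = \rank_{\gf(q)} M_\bD =: \kappa = \dim \C_q(\bD)$; moreover a reduced row echelon form $G$ of $M_\bD$ computed over $\gf(p)$ is a generator matrix for $\C_p(\bD)$ over $\gf(p)$ and, having full $\gf(q)$-rank $\kappa$, is also a generator matrix for $\C_q(\bD)$ over $\gf(q)$.

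Given $\bc \in \C_q(\bD) \cap \gf(p)^v$, I would write $\bc = \bx G$ with $\bx \in \gf(q)^\kappa$. Reading off the coordinates of $\bc$ in the pivot columns of $G$ recovers the entries of $\bx$ one at a time, since each pivot column of a reduced echelon matrix carries a single $1$ in its pivot row and zeros elsewhere, so the coordinate of $\bc$ at that column equals the corresponding entry of $\bx$. Because $\bc$ has all coordinates in $\gf(p)$, this forces $\bx \in \gf(p)^\kappa$, whence $\bc = \bx G \in \C_p(\bD)$. This gives $\C_q(\bD) \cap \gf(p)^v \subseteq \C_p(\bD)$ and completes the identity.

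The final step is simply to invoke Delsarte's theorem with $\C=\C_q(\bD)$ and substitute the identity just proved, yielding $\C_p(\bD)^\perp = (\C_q(\bD)\cap\gf(p)^v)^\perp = \tr(\C_q(\bD)^\perp)$. I expect the main obstacle to be the reverse inclusion of the subfield-subcode identity: although elementary, it is the only place where genuine work is required, resting on the rank-invariance of a matrix defined over the prime subfield under extension of scalars and on the echelon structure that lets one read the expansion coefficients off the codeword itself. Everything else is either a trivial inclusion or a direct citation of Delsarte's theorem.
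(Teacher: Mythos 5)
Your proposal is correct, and its overall skeleton (prove $\C_p(\bD)=\C_q(\bD)\cap\gf(p)^v$, then quote Delsarte's theorem) matches the paper's; but your proof of the key subfield-subcode identity uses a genuinely different mechanism. The paper argues by expanding scalars: a codeword of $\C_q(\bD)$ is $\sum_i u_i \bm_i$ with $\bm_i$ the incidence rows, each $u_i$ is written as $\sum_{j=0}^{s-1} u_{ij}\alpha^j$ in a $\gf(p)$-basis of $\gf(q)$, and the codeword equals $\sum_j \bigl(\sum_i u_{ij}\bm_i\bigr)\alpha^j$ with each inner sum in $\gf(p)^v$; it lies in $\gf(p)^v$ exactly when the components for $j\geq 1$ vanish, in which case it equals $\sum_i u_{i0}\bm_i \in \C_p(\bD)$. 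You instead decompose the code rather than the scalars: you pass to a reduced row echelon form $G$ of $M_\bD$ over $\gf(p)$, observe that $G$ generates both $\C_p(\bD)$ over $\gf(p)$ and $\C_q(\bD)$ over $\gf(q)$ (row operations over $\gf(p)$ remain invertible over $\gf(q)$), and read the expansion coefficients of a codeword in $\gf(p)^v$ off the pivot columns to force them into $\gf(p)$. Both arguments are elementary and complete; the paper's is more self-contained in that it works directly with the spanning set and needs no echelon-form machinery, while yours has the side benefit of simultaneously establishing $\rank_{\gf(p)}M_\bD=\rank_{\gf(q)}M_\bD$, i.e., the dimension equality $\dim_{\gf(p)}\C_p(\bD)=\dim_{\gf(q)}\C_q(\bD)$, which the paper proves separately as Theorem \ref{thm-Cp-Cq}. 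The final appeal to Delsarte's theorem is identical in both.
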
 

\begin{proof}
Let $\bm_1, \bm_2, ..., \bm_b$ be the row vectors in the incidence matrix of $\bD$. 
Let $\alpha$ be a generator of $\gf(q)^*$. Let $u_i = \sum_{j=0}^{s-1} u_{ij} \alpha^j  
\in \gf(q)$ for all $1 \leq i \leq b$, where all $u_{ij} \in \gf(p)$. We have then 
\begin{eqnarray*}
\sum_{i=1}^b u_i \bm_i=\sum_{i=1}^b \left( \sum_{j=0}^{s-1} u_{ij} \alpha^j  \right) \bm_i 
= \sum_{j=0}^{s-1} \left( \sum_{i=1}^b u_{ij} \bm_i  \right) \alpha^j.  
\end{eqnarray*}   
Since each $\bm_i \in \gf(p)^v$, $\sum_{i=1}^b u_{ij} \bm_i$ is a vector in $\gf(p)^v$ 
for each $j$. It then follows that 
$\sum_{i=1}^b u_i \bm_i \in \gf(p)^v$ if and only if 
\begin{eqnarray}\label{eqn-feb161}
\sum_{i=1}^b u_{ij} \bm_i = \bzero, \mbox{ for all $1 \leq j \leq s-1$.}  
\end{eqnarray} 
If the system of equations in (\ref{eqn-feb161}) holds, then 
$$ 
\sum_{i=1}^b u_i \bm_i=\sum_{i=1}^b u_{i0} \bm_i.  
$$  
Consequently, $\C_p(\bD)$ is the subfield subcode over $\gf(p)$ of $\C_q(\bD)$. 
The last desired result then follows from Delsarte's theorem \cite{Delsarte75}. 
\end{proof}

\begin{theorem}\label{thm-feb162}
Let $q=p^s$, where $s \geq 2$. Let $\bD$ be a $t$-$(v, k, \lambda)$ design. Then  
$ 
\C_p(\bD) = \tr(\C_q(\bD)).  
$ 
\end{theorem}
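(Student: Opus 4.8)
The plan is to prove the equality by establishing the two inclusions $\tr(\C_q(\bD)) \subseteq \C_p(\bD)$ and $\C_p(\bD) \subseteq \tr(\C_q(\bD))$ through a componentwise analysis of the trace map, exploiting the crucial fact that the generating row vectors $\bm_1, \ldots, \bm_b$ of the incidence matrix have all their entries in the prime field $\gf(p)$ (indeed in $\{0,1\}$). Throughout I write $m_{ij} \in \gf(p)$ for the $j$-th coordinate of $\bm_i$, and I apply $\tr$ to a vector coordinatewise.

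For the inclusion $\tr(\C_q(\bD)) \subseteq \C_p(\bD)$, I would take an arbitrary codeword $\bc = \sum_{i=1}^b u_i \bm_i \in \C_q(\bD)$ with $u_i \in \gf(q)$ and compute $\tr(\bc)$ entrywise. Since each $m_{ij}$ lies in $\gf(p)$, the $\gf(p)$-linearity of $\tr$ gives $\tr\bigl(\sum_i u_i m_{ij}\bigr) = \sum_i m_{ij}\,\tr(u_i)$. Setting $c_i = \tr(u_i) \in \gf(p)$, this reads $\tr(\bc) = \sum_i c_i \bm_i \in \C_p(\bD)$, as desired.

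For the reverse inclusion, I would use that the trace $\tr\colon \gf(q) \to \gf(p)$ is a nonzero $\gf(p)$-linear, hence surjective, map, so there exists $\theta \in \gf(q)$ with $\tr(\theta) = 1$. Given any $\sum_i c_i \bm_i \in \C_p(\bD)$ with $c_i \in \gf(p)$, I form $\bc = \theta \sum_i c_i \bm_i \in \C_q(\bD)$. Its $j$-th coordinate is $\theta\bigl(\sum_i c_i m_{ij}\bigr)$, and because $\sum_i c_i m_{ij} \in \gf(p)$ the trace of this coordinate equals $\bigl(\sum_i c_i m_{ij}\bigr)\tr(\theta) = \sum_i c_i m_{ij}$. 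Hence $\tr(\bc) = \sum_i c_i \bm_i$, which shows $\C_p(\bD) \subseteq \tr(\C_q(\bD))$. Combining the two inclusions yields the claimed equality.

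I do not anticipate a genuine obstacle here; the only point requiring a little care is the existence of an element of trace $1$, together with the conceptual observation that it is precisely the prime-field entries of the $\bm_i$ that force the trace code to collapse back onto the subfield code rather than growing strictly larger, as it would for a generic code over $\gf(q)$. Alternatively, one could shorten the argument by invoking Theorem \ref{thm-feb161}: since there $\C_p(\bD)$ is identified with the subfield subcode $\C_q(\bD)|_{\gf(p)}$, and the subfield subcode is always contained in the trace code, only the inclusion $\tr(\C_q(\bD)) \subseteq \C_p(\bD)$ displayed above would then remain to be verified.
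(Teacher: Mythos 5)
Your proof is correct and takes essentially the same route as the paper's: both rest on the identity $\tr\bigl(\sum_{i} u_i \bm_i\bigr) = \sum_{i} \tr(u_i)\,\bm_i$, valid precisely because the rows $\bm_i$ of the incidence matrix have entries in $\gf(p)$, combined with the surjectivity of the trace map $\tr\colon \gf(q) \to \gf(p)$. The only cosmetic difference is that the paper obtains the reverse inclusion by observing that $\tr(u_i)$ ranges over all of $\gf(p)$ as $u_i$ ranges over $\gf(q)$, whereas you exhibit explicit preimages via an element $\theta$ of trace $1$.
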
 

\begin{proof}
Let $\bm_1, \bm_2, ..., \bm_b$ be the row vectors in the incidence matrix of $\bD$. 
Note that each $\bm_i \in \gf(q)^v$ and each codeword of $\C_q(\bD)$ can be expressed 
as $\sum_{i=1}^b u_i \bm_i$, where $u_i \in \gf(q)$. We have 
$$ 
\tr\left(\sum_{i=1}^b u_i \bm_i\right)= \sum_{i=1}^b \tr(u_i) \bm_i. 
$$ 
When $u_i$ ranges over the elements in $\gf(q)$, $\tr(u_i)$ ranges over each element 
of $\gf(p)$ eaxctly $p^{s-1}$ times. The desired conclusion then follows.    
\end{proof}

\begin{theorem}\label{thm-Cp-Cq}
Let $q=p^s$, where $s \geq 2$. Let $\bD$ be a $t$-$(v, k, \lambda)$ design. 
Then  $\mathrm{dim}_{\mathrm{GF}(p)}(\C_p(\bD))=\mathrm{dim}_{\mathrm{GF}(q)}(\C_q(\bD))$ and 
$d(\C_p(\bD))=d(\C_q(\bD))$, where $d(\C)$ denotes the minimum distance of the code $\C$. 
\end{theorem}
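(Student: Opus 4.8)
The plan is to treat the two assertions separately, exploiting the fact that both codes are generated by the \emph{same} matrix $M_\bD$, whose entries lie in the prime field $\gf(p) \subseteq \gf(q)$.

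For the dimension, I would observe that $\dim_{\gf(p)}(\C_p(\bD))$ and $\dim_{\gf(q)}(\C_q(\bD))$ are precisely the rank of $M_\bD$ computed over $\gf(p)$ and over $\gf(q)$, respectively. Now the rank of a matrix is the largest size of a square submatrix with nonzero determinant, and every such determinant is a polynomial expression in the entries of $M_\bD$, hence lies in $\gf(p)$; but an element of $\gf(p)$ vanishes in $\gf(p)$ if and only if it vanishes in $\gf(q)$. Consequently the two ranks coincide, which gives $\dim_{\gf(p)}(\C_p(\bD)) = \dim_{\gf(q)}(\C_q(\bD))$. Alternatively, this can be read off from Theorem~\ref{thm-feb161}, which identifies $\C_p(\bD)$ as the subfield subcode of $\C_q(\bD)$.

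For the minimum distance, one inequality is immediate: since $\C_p(\bD) \subseteq \C_q(\bD)$ (every $\gf(p)$-linear combination of the rows is in particular a $\gf(q)$-linear combination), the minimum weight of $\C_q(\bD)$ is taken over a larger set of nonzero codewords, so $d(\C_q(\bD)) \leq d(\C_p(\bD))$. The substantive step is the reverse inequality. Here I would fix a basis $\{1, \alpha, \ldots, \alpha^{s-1}\}$ of $\gf(q)$ over $\gf(p)$ and take a codeword $\bc \in \C_q(\bD)$ of weight $d(\C_q(\bD))$, say $\bc = \sum_{i=1}^b u_i \bm_i$ with the $\bm_i$ the rows of $M_\bD$. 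Writing $u_i = \sum_{j=0}^{s-1} u_{ij}\alpha^j$ with $u_{ij} \in \gf(p)$ and regrouping yields $\bc = \sum_{j=0}^{s-1} \alpha^j \bc^{(j)}$, where each $\bc^{(j)} = \sum_{i=1}^b u_{ij}\bm_i$ lies in $\C_p(\bD)$ because the $\bm_i$ have entries in $\gf(p)$.

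The key point is that, coordinatewise, $\bc_m = \sum_{j} \alpha^j \bc^{(j)}_m$ with every $\bc^{(j)}_m \in \gf(p)$, so the linear independence of $\{\alpha^j\}$ over $\gf(p)$ forces $\bc_m = 0$ if and only if $\bc^{(j)}_m = 0$ for all $j$. Hence $\support(\bc) = \bigcup_j \support(\bc^{(j)})$, and in particular $\wt(\bc^{(j)}) \leq \wt(\bc)$ for each $j$. Since $\bc \neq \bzero$, at least one component $\bc^{(j_0)}$ is nonzero, and it is a nonzero codeword of $\C_p(\bD)$ of weight at most $\wt(\bc) = d(\C_q(\bD))$. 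Therefore $d(\C_p(\bD)) \leq d(\C_q(\bD))$, and combining the two inequalities gives $d(\C_p(\bD)) = d(\C_q(\bD))$. I expect the main obstacle to be exactly this reverse inequality, namely producing a short codeword over the prime field out of a short codeword over $\gf(q)$; everything hinges on $M_\bD$ having its entries already in $\gf(p)$, which is what makes the support-union identity valid.
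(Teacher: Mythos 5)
Your proposal is correct and follows essentially the same route as the paper's own proof: the dimension claim is reduced to the rank of $M_\bD$ being the same over $\gf(p)$ and $\gf(q)$, and the minimum-distance equality is obtained by expanding a $\gf(q)$-codeword over a basis of $\gf(q)/\gf(p)$ into components $\bc^{(j)} \in \C_p(\bD)$ and comparing weights. If anything, your write-up is slightly more careful than the paper's, since you justify the rank equality via minors and make explicit the support-union identity that underlies the inequality $\wt(\bc^{(j)}) \leq \wt(\bc)$.
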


\begin{proof}
Let $M_\bD=(m_{ij})$ be  the incidence matrix  of $\bD$. Since $\mathrm{dim}_{\mathrm{GF}(p)}(\C_p(\bD))$, $\mathrm{dim}_{\mathrm{GF}(q)}(\C_q(\bD))$
 both equal to the rank of the matrix $M_\bD$, then $\mathrm{dim}_{\mathrm{GF}(p)}(\C_p(\bD))=\mathrm{dim}_{\mathrm{GF}(q)}(\C_q(\bD))$. 
 
 By Theorem \ref{thm-feb161}, $d(\C_p(\bD))\ge d(\C_q(\bD))$. Let $\{\mathbf e_1, \cdots, \mathbf e_{s}\}$ be a basis of $\gf(q)$ over $\gf(p)$. 
 Let $\mathbf{c}=(c_1, \cdots, c_{v})$ be any nonzero codeword in 
 $\C_q(\bD)$. Then, there are $\alpha_1, \cdots, \alpha_b \in \gf(q)$ such that $c_j=\sum_{i=1}^{b} m_{ij} \alpha_i$. 
 Let $\alpha_i= \sum_{t=1}^s a_{it} \mathbf e_t$, where $a_{it}\in \gf(p)$. Then 
 $c_j=\sum_{i=1}^{b}  \sum_{t=1}^s m_{ij}  a_{it} \mathbf e_t$ and $\mathbf c=\sum_{t=1}^s \mathbf e_t \mathbf c^t$
 , where $\mathbf c^t=(\sum_{i=1}^{b}  \sum_{t=1}^s m_{i1}  a_{it} , \cdots, \sum_{i=1}^{b}  \sum_{t=1}^s m_{iv}  a_{it} ) \in \C_p(\bD)$.
 There is a $t_0$ such that $\mathbf c^{t_0}\neq \mathbf 0$, as $\mathbf c \neq \mathbf 0$.
 Then $\mathrm{wt}(\mathbf c)\ge \mathrm{wt}(\mathbf c^{t_0})\ge d(\C_p(\bD))$. Thus $d(\C_q(\bD)) \ge d(\C_p(\bD))$.
 This completes the proof. 
\end{proof}

Theorem \ref{thm-Cp-Cq} explains why we restrict ourself to $\C_p(\bD)$ rather than treating $\C_q(\bD)$ in this paper, though the two codes have different weight distributions.

\subsection{A general result about $\C_q(\bD)^\perp$} 

The next result is useful \cite[p. 54]{AK92}, and will be used later in this paper. 

\begin{theorem}\label{thm-AKp54}
Let $\bD = (\cP, \cB)$ be a $2$-$(v,k,\lambda)$ design with $k<v$. 
If $\C_{q}(\bD) \neq \gf(q)^v$, then the minimum weight of $\C_{q}(\bD)^\perp$ 
is at least 
$$ 
\frac{v-1}{k-1}+1. 
$$ 
\end{theorem}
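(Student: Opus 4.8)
The plan is to prove the lower bound on the minimum weight of $\C_q(\bD)^\perp$ by a counting argument on the incidence structure. A codeword of $\C_q(\bD)^\perp$ is a vector $\bw = (w_1, \dots, w_v) \in \gf(q)^v$ that is orthogonal to every row of the incidence matrix $M_\bD$; equivalently, for every block $B \in \cB$ we have $\sum_{p_j \in B} w_j = 0$. Let $\bw$ be a nonzero codeword of minimal weight, and let $S = \support(\bw) = \{p_j : w_j \ne 0\}$ with $|S| = s$. The goal is to show $s \geq \frac{v-1}{k-1} + 1$. First I would fix an arbitrary point $p_\ell \in S$ and examine how the blocks through $p_\ell$ interact with $S$, exploiting the $2$-design property that governs how pairs of points are covered.

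The key step is a double-counting of incidences between the points of $S$ and the blocks meeting $S$ in a controlled way. Since $\bD$ is a $2$-$(v,k,\lambda)$ design, it is also a $1$-design in which each point lies on exactly $r = \lambda(v-1)/(k-1)$ blocks, and each pair of points lies on exactly $\lambda$ blocks. Fix $p_\ell \in S$. For each of the $r$ blocks $B$ containing $p_\ell$, the orthogonality relation $\sum_{p_j \in B} w_j = 0$ forces $w_\ell = -\sum_{p_j \in B \setminus \{p_\ell\}} w_j$; since $w_\ell \neq 0$, each such block must contain at least one further point of $S$ besides $p_\ell$. Thus every block through $p_\ell$ meets $S \setminus \{p_\ell\}$. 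Counting pairs $(p_m, B)$ with $p_m \in S \setminus \{p_\ell\}$, $p_\ell, p_m \in B$: on the one hand this count is at most $\lambda(s-1)$, since each of the $s-1$ points $p_m$ is paired with $p_\ell$ in exactly $\lambda$ blocks; on the other hand it is at least $r$, since each of the $r$ blocks through $p_\ell$ contributes at least one such pair. Hence $r \leq \lambda(s-1)$.

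Substituting $r = \lambda(v-1)/(k-1)$ and cancelling $\lambda > 0$ gives $(v-1)/(k-1) \leq s-1$, that is, $s \geq \frac{v-1}{k-1} + 1$, as required. The hypotheses $k < v$ and $\C_q(\bD) \neq \gf(q)^v$ enter to guarantee that $\C_q(\bD)^\perp$ is nontrivial (so a nonzero codeword exists and the statement about minimum weight is meaningful) and that $k-1 \geq 1$ so the bound is well defined.

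The main obstacle is the verification of the crucial claim that \emph{every} block through the fixed point $p_\ell$ genuinely meets $S \setminus \{p_\ell\}$. This is where $w_\ell \neq 0$ is used decisively: if some block $B \ni p_\ell$ met $S$ only in $p_\ell$, then $\sum_{p_j \in B} w_j$ would reduce to $w_\ell \neq 0$, contradicting orthogonality of $\bw$ to the row of $M_\bD$ indexed by $B$. Once this is established, the inequality $r \le \lambda(s-1)$ follows by the clean double count, and the rest is the arithmetic substitution of the $1$-design parameter $r$. I would therefore present the orthogonality-forces-a-second-point argument carefully first, then state the pair count, and finish with the parameter substitution.
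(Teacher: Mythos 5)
Your proof is correct, and the hypothesis $\C_q(\bD)\neq\gf(q)^v$ is used exactly where it should be (to guarantee a nonzero dual codeword). Note that the paper itself gives no proof of this statement --- it quotes it from \cite{AK92} --- and your argument (every block through a point of the support must meet the support again, so double counting the flags on pairs $(p_\ell,p_m)$ with $p_m$ in the support gives $r\le\lambda(s-1)$, i.e.\ $s\ge r/\lambda+1=\frac{v-1}{k-1}+1$) is precisely the standard proof found in that reference.
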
 

The lower bound in Theorem \ref{thm-AKp54} is not tight in general, but reasonably good in some special 
cases.

\section{The binary case}\label{sec-binarycase}

In this section, we present some fundamental results about binary codes and their designs, which do not hold in general for nonbinary codes. 

\begin{theorem}\label{thm-jan21}
Let $\C$ be an $[n, k, d]$ binary code which holds designs. Let $\bD_i(\C)$ denote the 
support design of the codewords of weight $i$ in $\C$, where the point set is the set 
of coordinates, i.e., $\{0,1, \ldots, n-1\}$. Let $\C_2(\bD_i(\C))$ denote the 
 binary code of the design $\bD_i(\C)$, where the point set is the ordered 
set $\{0,1, \ldots, n-1\}$. Then the following statements are true. 
\begin{enumerate}
\item $\C_2(\bD_i(\C))$ is a subcode of $\C$ and they are equal if and only if the codewords of 
      weight $i$ span $\C$.  
\item $\Aut(\C) \leqslant \Aut(\bD_i(\C))$, i.e., the former is a subgroup of the latter.  
\item If the codewords of weight $i$ or the codewords of weight $i$ and the all-one 
      vector $\bone$ generate $\C$, then $\Aut(\C) = \Aut(\bD_i(\C))$.      
\end{enumerate}  
\end{theorem}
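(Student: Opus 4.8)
The plan is to exploit the defining feature of the binary setting: for a binary codeword $\bc$, the characteristic vector of $\support(\bc)$ is exactly $\bc$ itself, so a weight-$i$ codeword and its block (support) are literally the same object. This single observation drives all three parts, and it is precisely the property that breaks down when $q>2$, where scalar multiples share a support.

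First I would dispose of Part 1. The blocks of $\bD_i(\C)$ are the supports of the weight-$i$ codewords of $\C$, so the rows of the incidence matrix $M_{\bD_i(\C)}$ are the characteristic vectors of these supports, which by the binary observation are the weight-$i$ codewords themselves. Hence $\C_2(\bD_i(\C))$, being the $\gf(2)$-row span of $M_{\bD_i(\C)}$, is exactly the span of the weight-$i$ codewords of $\C$. Since each such codeword lies in the linear code $\C$, this span is a subcode of $\C$, and equality holds if and only if the weight-$i$ codewords span all of $\C$. This is immediate once the rows of the incidence matrix are identified with a list of codewords.

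For Part 2, recall that in the binary case $\Aut(\C)=\PAut(\C)$ consists of coordinate permutations fixing $\C$ setwise. Any such $\sigma$ preserves Hamming weight, so it maps the weight-$i$ codewords bijectively onto weight-$i$ codewords; passing to supports, $\sigma$ permutes the blocks of $\bD_i(\C)$ among themselves, giving $\sigma\in\Aut(\bD_i(\C))$ and hence the claimed inclusion. For Part 3, I would argue the reverse inclusion under the generation hypothesis. Let $\tau\in\Aut(\bD_i(\C))$, a coordinate permutation sending blocks to blocks; by the binary observation $\tau$ sends weight-$i$ codewords to weight-$i$ codewords. If these codewords generate $\C$, then $\tau$ fixes a generating set and therefore fixes $\C$, so $\tau\in\Aut(\C)$. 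If instead $\bone$ must be adjoined to generate $\C$, I would simply note that every coordinate permutation fixes $\bone$, so $\tau$ again fixes the full generating set; combined with Part 2 this yields the asserted equality.

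The main obstacle is conceptual rather than computational: one must be scrupulous that the correspondence between weight-$i$ codewords and blocks is a genuine bijection in the binary setting, so that a permutation preserving the block set really does preserve the set of weight-$i$ codewords, with no collapsing of distinct codewords onto a common support. Once this bijection is pinned down, each part reduces to a short verification, and the only truly binary-specific input is the identity of a support with its defining codeword.
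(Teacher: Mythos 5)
Your proposal is correct and follows essentially the same route as the paper's proof: identifying the rows of the incidence matrix with the weight-$i$ codewords themselves (the defining binary feature) for Part 1, using weight-preservation of coordinate permutations for Part 2, and using the fact that any coordinate permutation fixes $\bone$ together with the generation hypothesis for Part 3. Your explicit remark that distinct binary codewords have distinct supports (so the codeword--block correspondence is a bijection) is a worthwhile clarification that the paper leaves implicit, but it does not change the argument.
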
 

\begin{proof} 
Notice that $\C$ is a binary linear code. 
By definition, each row of the incidence matrix of the design $\bD_2(\C)$ is a codeword 
of $\C$. Consequently, $\C_2(\bD_i(\C))$ is a subcode of $\C$. The desired first conclusion 
then follows.  

Recall that the automorphism group $\Aut(\C)$ of a binary linear code $\C$ is its permutation 
automorphism group $\PAut(\C)$. Any $\sigma \in \Aut(\C)$ is clearly a permutation of the 
coordinates of the codewords in $\C$ that fixes $\C$, and is thus a permutation of the point set and block 
set of $\bD_i(\C)$. This proves the conclusion of the second part.  

We now prove the conclusion of the third part. 
Note that any permutation of $\{0,1, \ldots, n-1\}$ fixes the all-one vector $\bone$. 
By assumption, every codeword of $\C$ is a linear combination of the rows of the 
incidence matrix of $\bD_i(\C)$ and the all-one vector. Then by assumption, any $\sigma 
\in \Aut(\bD_i(\C))$ is an element of $\Aut(\C)$. Therefore, 
$\Aut(\bD_i(\C)) \leqslant \Aut(\C)$. The desired third conclusion then follows from 
that of the second part. 
\end{proof} 

The proof of Theorem \ref{thm-jan21} showed that $\C_2(\bD_i(\C))$ is a subcode 
of the original code $\C$. Regarding these two codes, we have the following comments: 
\begin{enumerate}
\item $\C$ and $\C_2(\bD_i(\C))$ have the same length, but may have different dimensions 
      and minimum distances. In general, 
      $$ 
      \dim(\C_2(\bD_i(\C))) \leq \dim(\C) \mbox{ and } d(\C_2(\bD_i(\C))) \geq d(\C),  
      $$ 
      where $d(\C)$ denotes the minimum distance of $\C$. 
\item Let $\C$ be an $[n, k, d]$ binary code. Then $\C_2(\bD_d(\C))$ has parameters 
      $[n, k', d]$ with $k' \leq k$. When $k' < k$, $\C_2(\bD_d(\C))$ is not as good as 
      the original code $\C$. However, the dual code $\C_2(\bD_i(\C))^\perp$ may be better 
      then $\C^\perp$, as it may happen that  
      $$ 
      \dim(\C_2(\bD_d(\C))^\perp) > \dim(\C^\perp) \mbox{ and }  
      d(\C_2(\bD_d(\C))^\perp) = d(\C^\perp).  
      $$          
\item  Let $\C$ be an $[n, k, d]$ binary code. Let $i$ be an integer such that $d < i < n$ 
and $\bD_i(\C)$ is a $2$-design. Then $\C_2(\bD_i(\C))$ has parameters  $[n, k', d']$ 
with $k' \leq k$ and $d' \geq d$. The code $\C_2(\bD_i(\C))$ could be optimal and thus 
interesting. The following Example \ref{exam-june6} justifies this claim. 
\end{enumerate} 
Hence, the code $\C_2(\bD_i(\C))$ or its dual could be interesting in many cases. 

\begin{example} \label{exam-june6}
Let $m$ be a positive integer. For each $(a, b, h) \in \gf(2^{m}) \times \gf(2^{2m}) \times \gf(2)$, 
define a Boolean function from $\gf(2^{2m})$ to $\gf(2)$ by  
$$ 
f_{(a,b,h)}(x)=\tr_{m/1}\left[  a \tr_{2m/m}\left(u x^{1+2^{m-1}} \right)  \right] + \tr_{2m/1}(bx) + h, 
$$ 
where $\tr_{j/i}$ is the trace function from $\gf(2^j)$ to $\gf(2^i)$ and $u \in \gf(2^{2m}) \setminus \gf(2^{m})$.  
Define a linear code 
\begin{eqnarray*}
\C(m)=\left\{  (f_{(a,b,h)}(x))_{x \in \gf(2^{2m})}: a \in \gf(2^m), \ b \in \gf(2^{2m}), \ h \in \gf(2) \right\}.  
\end{eqnarray*}
It is shown in \cite{DMT19} that $\C_m$ has parameters $[2^{2m}, 3m+1, 2^{2m-1}-2^{m-1}]$ 
and weight enumerator 
\begin{eqnarray*}\label{eqn-wtenumerator111}
1 + (2^m-1)2^{2m} z^{2^{2m-1} - 2^{m-1}} + 2(2^{2m}-1)z^{2^{2m-1}} 
+ (2^m-1)2^{2m} z^{2^{2m-1} + 2^{m-1}} + z^{2^{2m}}.  
\end{eqnarray*}  
In addition, we have the following \cite{DMT19}: 
\begin{enumerate}
\item The codewords of minimum weight of $\C(m)$ hold a 2-design $\bD_{2^{2m-1} - 2^{m-1}}(\C(m))$
with parameters 
\begin{equation*}
\label{par}
2-(2^{2m}, 2^{2m-1} - 2^{m-1}, (2^m -1)(2^{2m-2} - 2^{m-1})).
\end{equation*}
\item The codewords of  weight $2^{2m-1} + 2^{m-1}$  of $\C(m)$ hold a 
2-design $\bD_{2^{2m-1} + 2^{m-1}}(\C(m))$ with parameters
\begin{equation*}\label{prc}
 2-(2^{2m}, 2^{2m-1} + 2^{m-1}, (2^m - 1)(2^{2m-2} + 2^{m-1})).
 \end{equation*}
 \item The codewords of  weight $2^{2m-1}$ of $\C(m)$ hold a 2-design $\bD_{2^{2m-1}}(\C(m))$ with parameters 
 \begin{equation*}\label{prc2}
 2-(2^{2m}, 2^{2m-1} , 2^{2m-1}-1), 
 \end{equation*} 
 which is actually a $3$-design. 
 \end{enumerate}
It is proved in \cite{DMT19} that the minimum weight codewords generate $\C(m)$. It then follows from 
Theorem \ref{thm-jan21} that $\C_2(\bD_{2^{2m-1} - 2^{m-1}}(\C(m)))=\C(m)$. It is easily seen that 
$\C_2(\bD_{2^{2m-1}}(\C(m)))$ is the first-order Reed-Muller code, which is optimal. This demonstrates 
that studying the binary code $\C_2(D_i(\C))$ for some $i$ could be interesting. 
\end{example}

\begin{theorem}\label{thm-jan22}
Let $\bD$ be a design. Then $\Aut(\bD) \leqslant \Aut(\C_2(\bD))$. 
\end{theorem}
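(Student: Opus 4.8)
The plan is to show that every automorphism of the design $\bD$, regarded as a permutation of the coordinate set, stabilises the code $\C_2(\bD)$, and therefore lies in its permutation automorphism group. Since $\C_2(\bD)$ is binary, the remark recorded earlier in the excerpt that $\PAut(\C) = \MAut(\C) = \GAut(\C)$ in the binary case lets me identify $\PAut(\C_2(\bD))$ with $\Aut(\C_2(\bD))$, so it suffices to prove the containment $\Aut(\bD) \leqslant \PAut(\C_2(\bD))$.

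First I would recall that, with the points ordered $p_1, \ldots, p_v$, the rows of the incidence matrix $M_\bD$ are exactly the characteristic vectors of the blocks: the row attached to a block $B \in \cB$ carries a $1$ in position $j$ precisely when $p_j \in B$. Write $\chi_B$ for this row; by definition the vectors $\{\chi_B : B \in \cB\}$ span $\C_2(\bD)$. An element $\sigma \in \Aut(\bD)$ is a permutation of $\cP$ that sends $\cB$ to itself, and it induces a coordinate permutation $P_\sigma$ on $\gf(2)^v$.

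The key step is the identity $P_\sigma(\chi_B) = \chi_{\sigma(B)}$, which transfers the coordinate action on a characteristic vector into the point action on the underlying block. Because $\sigma$ is a design automorphism, $\sigma(B) \in \cB$, so $\chi_{\sigma(B)}$ is again a row of $M_\bD$. Hence $P_\sigma$ permutes the spanning set $\{\chi_B : B \in \cB\}$ among itself and therefore maps $\C_2(\bD)$ onto $\C_2(\bD)$. This places $P_\sigma$ in $\PAut(\C_2(\bD))$, giving $\sigma \in \Aut(\C_2(\bD))$ and completing the containment.

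There is essentially no hard analytic step here; the argument is pure bookkeeping. The only point needing care is to fix a consistent convention for how a point permutation $\sigma$ acts on coordinates, so that the identity $P_\sigma(\chi_B) = \chi_{\sigma(B)}$ holds exactly rather than in the form $\chi_{\sigma^{-1}(B)}$; but either convention yields a bijection of $\cB$ onto itself, so the conclusion is unaffected. I would also remark that this result is the natural companion to part (2) of Theorem \ref{thm-jan21}: there one passes from a code to its support design and obtains $\Aut(\C) \leqslant \Aut(\bD_i(\C))$, whereas here one passes from a design to its code and obtains the reverse-flavoured inclusion.
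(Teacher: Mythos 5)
Your proof is correct: the paper itself omits the argument entirely, remarking only that ``the proof of this theorem is straightforward,'' and your bookkeeping --- a design automorphism permutes the characteristic vectors $\chi_B$ of the blocks, hence stabilises their span $\C_2(\bD)$, combined with the identification $\PAut(\C_2(\bD)) = \Aut(\C_2(\bD))$ for binary codes --- is precisely the straightforward argument the authors have in mind. Your care about the convention $P_\sigma(\chi_B)=\chi_{\sigma(B)}$ versus $\chi_{\sigma^{-1}(B)}$ is a nice touch but, as you note, immaterial to the conclusion.
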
 

The proof of this theorem is straightforward. The equality in Theorem \ref{thm-jan22} 
may be valid in some special cases. 
The following theorem follows from Theorem \ref{thm-jan22} and the second part of 
Theorem \ref{thm-jan21}. 

\begin{theorem}
Let $\C$ be an $[n, k, d]$ binary code which holds designs. Let $\bD_i(\C)$ denote the 
support design of the codewords of weight $i$ in $\C$, where the point set is the set 
of coordinates, i.e., $\{0,1, \ldots, n-1\}$. Let $\C_2(\bD_i(\C))$ denote the 
 binary code of the design $\bD_i(\C)$, where the point set is the ordered 
set $\{0,1, \ldots, n-1\}$. Then 
$$ 
\Aut(\C) \leqslant \Aut(\C_2(\bD_i(\C))). 
$$
\end{theorem}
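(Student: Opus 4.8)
The plan is to derive the stated inclusion purely by composing the two subgroup containments already available in this section, so that the result reduces to the transitivity of ``$\leqslant$'' inside a single symmetric group. First I would invoke the second part of Theorem~\ref{thm-jan21}, which gives
$$
\Aut(\C) \leqslant \Aut(\bD_i(\C)).
$$
Concretely, this says that any coordinate permutation $\sigma$ fixing $\C$ sends codewords of weight $i$ to codewords of weight $i$, hence permutes their supports, i.e.\ the blocks of $\bD_i(\C)$, among themselves, and so $\sigma \in \Aut(\bD_i(\C))$. Next I would apply Theorem~\ref{thm-jan22} to the \emph{specific} design $\bD := \bD_i(\C)$, obtaining
$$
\Aut(\bD_i(\C)) \leqslant \Aut(\C_2(\bD_i(\C))).
$$
Chaining the two displays then yields $\Aut(\C) \leqslant \Aut(\C_2(\bD_i(\C)))$, which is exactly the assertion.

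The one point that genuinely needs care, and the only thing that makes the two cited inclusions composable, is that all three objects---the code $\C$, its support design $\bD_i(\C)$, and the code $\C_2(\bD_i(\C))$---are indexed by the \emph{same} ordered set $\{0,1,\ldots,n-1\}$, as the hypotheses stipulate. Under this common labelling, every automorphism group in sight is realised as a subgroup of the one symmetric group $\Sym_n$, so that a map appearing on the right of the first inclusion is literally the same kind of object as one appearing on the left of the second, and the subgroup relation can be applied transitively. I would also record at the outset that we are in the binary setting, where $\Aut(\C)$ coincides with the permutation automorphism group $\PAut(\C)$; this guarantees that an element of $\Aut(\C)$ really is a coordinate permutation, which is precisely what is needed for it to act on the point set of $\bD_i(\C)$ in the first inclusion.

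There is essentially no hard step in this argument: once Theorems~\ref{thm-jan21} and~\ref{thm-jan22} are in hand, the proof is a two-line composition. If anything plays the role of the ``obstacle,'' it is the bookkeeping just described---verifying that the identifications of point sets with coordinate sets are uniform across $\C$, $\bD_i(\C)$, and $\C_2(\bD_i(\C))$---since without this uniformity the groups would sit in different ambient symmetric groups and the chaining would be meaningless. Given the way the statement fixes the common index set, this bookkeeping is immediate, so I expect the entire proof to be a short corollary-style deduction rather than to require any new idea.
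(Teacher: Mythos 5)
Your proof is correct and follows exactly the paper's route: the paper states that this theorem follows from Theorem~\ref{thm-jan22} combined with the second part of Theorem~\ref{thm-jan21}, which is precisely the two-step chaining $\Aut(\C) \leqslant \Aut(\bD_i(\C)) \leqslant \Aut(\C_2(\bD_i(\C)))$ you give. Your additional bookkeeping about the common coordinate set and $\Aut(\C)=\PAut(\C)$ in the binary case is sound and consistent with the paper's conventions.
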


\section{The code of the design held in the Simplex code}\label{sec-PRMcode}

Our task in this section is to study the code of the design held in the Simplex code. 
To this end, we have to introduce some known results about the codes of the designs 
in the projective geometry $\PG(m-1, q)$ and the projective Reed-Muller codes in Section \ref{sec-june261a}, 
as they are needed in Section \ref{sec-june261b}.  Hence, Section \ref{sec-june261a} below is not meant to be a survey, 
but a recall of some auxiliary results needed in Section \ref{sec-june261b}.

\subsection{The codes of the designs in the projective geometry $\PG(m-1, q)$}\label{sec-june261a}

The points of the \emph{projective space\index{projective space}} (also called \emph{projective geometry\index{
projective geometry}}) $\PG(m-1, q)$ are all the 1-dimensional subspaces of the vector space $\gf(q)^{m}$;  
the lines are the 2-dimensional  subspaces of $\gf(q)^{m}$, the planes are the 3-dimensional subspaces of 
$\gf(q)^{m}$, and the hyperplanes  are the $(m-1)$-dimensional  subspaces of $\gf(q)^{m}$; and incidence is the set-theoretic inclusion. The elements of the projective space 
$\PG(m-1,q)$ are the points, lines, planes, ..., and the hyperplanes. But  
the space $\gf(q)^{m}$ is not an element of $\PG(m-1,q)$, as it contains every other subspace and thus plays no role. The \emph{projective dimension} of an element in $\PG(m-1,q)$  
is one less than that of the corresponding element in the vector space $\gf(q)^{m}$. 
The $d$-flats in the projective geometry $\PG(m-1, q)$ form a $2$-design, which is documented 
below and is well known in the literature \cite{BJL}. 

\begin{theorem}\label{thm-PGdesigns}
Let $\cB$ denote the set of all $d$-flats in 
$\PG(m-1, q)$, and $\cP$ the point set of $\PG(m-1, q)$, and the incidence relation $\cI$ is the containment relation. Then the triple $\PG_d(m-1, q):=(\cP, \cB, \cI)$ is a $2$-$(v, k, \lambda)$ design, where   
$$ 
v=\frac{q^m-1}{q-1}, \ k = \frac{q^{d+1}-1}{q-1}, \ \lambda= \left[ \myatop{m-2}{d-1} \right]_q.  
$$ 
In addition, the number of blocks in this design is 
$$ 
b=\left[ \myatop{m}{d+1} \right]_q. 
$$
In particular, $\PG_1(m-1, q)$ is a Steiner system\index{Steiner system} $S(2, q+1, (q^m-1)/(q-1))$, and $\PG_{m-2}(m-1, q)$ is a symmetric design with parameters 
$$ 
2-\left(\frac{q^m-1}{q-1}, \ \frac{q^{m-1}-1}{q-1}, \ \frac{q^{m-2}-1}{q-1}  \right)  
$$ 
for $m \geq 3$. 
\end{theorem}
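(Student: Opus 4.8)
The plan is to pass from the projective-geometry description to linear algebra over $\gf(q)$ and then count. Recall that a point of $\PG(m-1,q)$ is a $1$-dimensional subspace of $\gf(q)^m$ and a $d$-flat is a $(d+1)$-dimensional subspace, with incidence given by inclusion. Throughout I would use the standard fact that the number of $j$-dimensional subspaces of $\gf(q)^n$ is the Gaussian binomial coefficient $\left[ \myatop{n}{j} \right]_q = \prod_{i=0}^{j-1} \frac{q^{n-i}-1}{q^{j-i}-1}$, together with the symmetry $\left[ \myatop{n}{j} \right]_q = \left[ \myatop{n}{n-j} \right]_q$.

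First I would record the three parameters by direct enumeration. The number of points is $v = \left[ \myatop{m}{1} \right]_q = (q^m-1)/(q-1)$. A fixed $d$-flat is a $(d+1)$-dimensional space, whose number of $1$-dimensional subspaces is $k = \left[ \myatop{d+1}{1} \right]_q = (q^{d+1}-1)/(q-1)$, so every block has the claimed size $k$. The number of blocks is the number of $(d+1)$-dimensional subspaces of $\gf(q)^m$, namely $b = \left[ \myatop{m}{d+1} \right]_q$.

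The heart of the argument is the $2$-design property: I must show that every pair of distinct points lies in exactly $\lambda = \left[ \myatop{m-2}{d-1} \right]_q$ blocks, a number independent of the chosen pair. Two distinct points span a unique line, i.e. a fixed $2$-dimensional subspace $W \subseteq \gf(q)^m$, and a $(d+1)$-flat contains both points if and only if it contains $W$. Counting the $(d+1)$-dimensional subspaces $U$ with $W \subseteq U$ is, via the quotient $U \mapsto U/W$, the same as counting $(d-1)$-dimensional subspaces of $\gf(q)^m/W \cong \gf(q)^{m-2}$, of which there are exactly $\left[ \myatop{m-2}{d-1} \right]_q$. The point is that this count depends only on $\dim W = 2$ and not on $W$ itself, which is precisely what makes $\lambda$ constant; equivalently one may invoke the transitivity of $\GL_m(\gf(q))$ on $2$-dimensional subspaces. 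This establishes the $2$-$(v,k,\lambda)$ structure, and the relation $bk(k-1) = \lambda v(v-1)$ serves as a consistency check.

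Finally I would specialise. For $d=1$ the blocks are lines, $k = q+1$, and $\lambda = \left[ \myatop{m-2}{0} \right]_q = 1$, so $\PG_1(m-1,q)$ is the Steiner system $S(2,q+1,(q^m-1)/(q-1))$. For $d = m-2$ the blocks are hyperplanes, $k = (q^{m-1}-1)/(q-1)$, and $\lambda = \left[ \myatop{m-2}{m-3} \right]_q = \left[ \myatop{m-2}{1} \right]_q = (q^{m-2}-1)/(q-1)$; moreover $b = \left[ \myatop{m}{m-1} \right]_q = \left[ \myatop{m}{1} \right]_q = v$ by the symmetry of the Gaussian coefficient, so the design is symmetric with the stated parameters, the hypothesis $m \geq 3$ ensuring $d = m-2 \geq 1$ and $v > k > 2$. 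I expect the only genuinely substantive step to be the quotient count establishing the constancy of $\lambda$; everything else is bookkeeping with Gaussian binomials.
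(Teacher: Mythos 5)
Your proposal is correct, and it is essentially the standard argument: the paper itself gives no proof of this theorem, quoting it as well known from the design-theory literature \cite{BJL}, where precisely your approach appears — the quotient-space bijection $U \mapsto U/W$ reducing the count of $(d+1)$-dimensional subspaces containing a fixed $2$-dimensional subspace $W$ to $\left[ \myatop{m-2}{d-1} \right]_q$, which is independent of $W$ and hence gives a constant $\lambda$. Your parameter computations and the two specialisations ($d=1$ yielding $\lambda=1$ and the Steiner system; $d=m-2$ yielding $b=\left[ \myatop{m}{m-1} \right]_q = \left[ \myatop{m}{1} \right]_q = v$ and hence a symmetric design) are all correct.
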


Let $q$ be a prime power and let $m \geq 2$. 
A point of the projective geometry $\PG(m-1, \gf(q))$ is given in homogeneous coordinates by $(x_0, x_1,\ldots, x_{m-1})$ where all $x_i$ are in $\mathrm{GF}(q)$ and are not all zero;
each point has $q-1$ coordinate representations, since $(ax_0,ax_1,...,ax_{m-1})$ and
$(x_0,x_1,...,x_{m-1})$
yield the same $1$-dimensional subspace of $\mathrm{GF}(q)^{m}$ for any non-zero $a\in \mathrm{GF}(q)$.

For an integer $r \ge 0$, let $\mathrm{PP}(r,m-1,q)$ denote the linear subspace of $\mathrm{GF}(q)[x_0,x_1, \dots, x_{m-1}]$ that is spanned by all
monomial $x_0^{i_0}x_1^{i_1}\cdots x_{m-1}^{i_{m-1}}$ satisfying the following two conditions: 
\begin{itemize}
\item $\sum_{j=0}^{m-1} i_j \equiv 0 \pmod{q-1}$,
\item    $0<\sum_{j=0}^{m-1} i_j   \le r(q-1)$.
\end{itemize} 
Each $a \in \mathrm{GF}(q)$ is viewed as the constant function $f_a(x_0, x_1, \ldots, x_{m-1}) \equiv a$.

Let $\{\mathbf{x}^1, \dots, \mathbf{x}^{N}\}$
be the set of projective points in $\mathrm{PG}(m-1,q)$, where $N=\frac{q^m-1}{q-1}$. Then, the \emph{$r$th order
 projective generalized Reed-Muller code}\index{projective generalised Reed-Muller code} $\mathrm{PRM}(r,m-1,q)$ of length $\frac{q^m-1}{q-1}$ is defined by 
\begin{align*}
%\label{eq:PRM}
\mathrm{PRM}(r,m-1,q)=\left \{\left (f(\mathbf{x}^1), \dots, f(\mathbf{x}^N) \right ): f\in \mathrm{PP}(r,m-1,q)\cup \mathrm{GF}(q) \right \}.
\end{align*}
When $r\geq 1$, let  $\mathrm{PRM}^*(r,m-1,q)$ be the subcode of $\mathrm{PRM}(r,m-1,q)$
defined by
\begin{align*}
%\label{eq:PRM*}
\mathrm{PRM}^*(r,m-1,q)=\left \{\left (f(\mathbf{x}^1), \dots, f(\mathbf{x}^N) \right ): f\in \mathrm{PP}(r,m-1,q) \right \}.
\end{align*}
Thus, $\mathrm{PRM}^*(r,m-1,q)$ is a subcode of $\mathrm{PRM}(r,m-1,q)$.
For the minimum weight and the dual of the projective generalized Reed-Muller code, we have the following \cite{AK98}.

\begin{theorem}\label{thm-feb141}
Let $0 \le r \le m-1$. Then, the minimal weight of $\mathrm{PRM}(r,m-1,q)$ is $\frac{q^{m-r}-1}{q-1}$ and
\begin{align*}
\mathrm{PRM}(r,m-1,q)^{\perp} = \mathrm{PRM}^*(m-1-r,m-1,q).
\end{align*}
\end{theorem}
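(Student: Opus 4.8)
The plan is to treat the two assertions separately, establishing the duality first and then the minimum-weight formula. Throughout, write $N=(q^m-1)/(q-1)$ and identify a codeword of $\mathrm{PRM}(r,m-1,q)$ with the evaluation vector $(f(\bx^1),\ldots,f(\bx^N))$ of some $f\in\mathrm{PP}(r,m-1,q)\cup\gf(q)$ at a fixed system of representatives of the projective points. Because every monomial occurring in $\mathrm{PP}(r,m-1,q)$ has total degree divisible by $q-1$, the value $f(\bx)$ is unchanged when $\bx$ is replaced by $a\bx$ for $a\in\gf(q)^*$; hence $f$ descends to a well-defined function on $\PG(m-1,q)$ and the evaluation vector is independent of the chosen representatives. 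For the minimum weight I will prove the upper bound by an explicit codeword and the harder lower bound by reduction to affine generalized Reed--Muller codes; this last step, together with the dimension count needed for duality, is where the real work lies.

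For the duality I will show (a) $\mathrm{PRM}(r,m-1,q)\perp\mathrm{PRM}^*(m-1-r,m-1,q)$ and (b) $\dim\mathrm{PRM}(r,m-1,q)+\dim\mathrm{PRM}^*(m-1-r,m-1,q)=N$. Step (a) is a character-sum computation. Since both codes are spanned by evaluations of monomials (together with the constants on the first side), bilinearity of the inner product reduces (a) to showing $\sum_{P\in\PG(m-1,q)}(fg)(P)=0$ for a monomial $f$ of degree $\le r(q-1)$ or a constant, and a monomial $g$ of positive degree $\le(m-1-r)(q-1)$. I will pass to the affine sum through the identity $\sum_{\bx\in\gf(q)^m}h(\bx)=h(\mathbf{0})+(q-1)\sum_{P}h(P)$, valid for any scaling-invariant $h$ and usable because $q-1$ is invertible modulo $p$. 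Writing $fg=\bx^J$, the product has total degree $\deg f+\deg g\le(m-1)(q-1)<m(q-1)$, so the $m$ exponents $J_0,\ldots,J_{m-1}$ cannot all be positive multiples of $q-1$; by the elementary fact that $\sum_{t\in\gf(q)}t^i$ vanishes unless $i$ is a positive multiple of $q-1$, the affine sum $\sum_{\bx}(fg)(\bx)$ is $0$, and since $\deg(fg)>0$ we also get $(fg)(\mathbf{0})=0$. Hence $\sum_P(fg)(P)=0$, proving (a).

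Step (b) is the combinatorial heart. Here I will compute the two dimensions by counting the reduced monomials (exponents in $\{0,1,\ldots,q-1\}$, subject to $\sum_j i_j\equiv 0\pmod{q-1}$ and the appropriate degree bound) whose evaluation vectors are linearly independent, equivalently by determining the rank of the evaluation map. The natural engine is the decomposition $\PG(m-1,q)=\AG(m-1,q)\sqcup\PG(m-2,q)$ into the affine chart $x_0\neq 0$ and the hyperplane at infinity $x_0=0$: restricting a codeword to the chart (set $x_0=1$) lands in the affine code $\RM_q(r(q-1),m-1)$, while restriction to infinity lands in $\mathrm{PRM}(r,m-2,q)$, yielding an induction on $m$ with the classical dimension of $\RM_q$ as input. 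Carrying out this bookkeeping and checking that the two counts are complementary is the step I expect to be most delicate, since the projective reduction (the relation $\bx^q=\bx$ combined with scaling invariance) makes the naive complementation $i_j\mapsto(q-1)-i_j$ fail to match the degree ranges exactly. Once (a) and (b) hold, orthogonality gives $\mathrm{PRM}(r,m-1,q)\subseteq\mathrm{PRM}^*(m-1-r,m-1,q)^{\perp}$, and the dimension equality forces equality, which is the second assertion.

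For the minimum weight, the upper bound is exhibited by the indicator of an $(m-1-r)$-flat: choosing $r$ linearly independent linear forms $\ell_1,\ldots,\ell_r$, the function $f=\prod_{i=1}^{r}(1-\ell_i^{q-1})$ takes the value $1$ exactly on the flat $\ell_1=\cdots=\ell_r=0$ and $0$ elsewhere; expanding the product shows $f\in\mathrm{PP}(r,m-1,q)\cup\gf(q)$, and the flat has $(q^{m-r}-1)/(q-1)$ points, so this weight is attained. For the lower bound I will again use the chart decomposition: the restriction of any nonzero codeword to $\AG(m-1,q)$ is a codeword of $\RM_q(r(q-1),m-1)$, whose minimum distance is the classical value $q^{m-1-r}$, while an induction on $m$ controls the contribution of the hyperplane at infinity, the two pieces summing to $(q^{m-r}-1)/(q-1)$. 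The main obstacle here is making the inductive step uniform, namely handling codewords that vanish identically on the affine chart, so that their weight is governed entirely by the restriction to infinity in $\mathrm{PRM}(r,m-2,q)$, and verifying the base cases $r=0$ (the all-one vector, of weight $N$) and $r=m-1$ (where the flat is a single point of weight $1$).
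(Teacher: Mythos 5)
The paper offers no proof of Theorem \ref{thm-feb141} to compare against: it is quoted from \cite{AK98}, where both assertions are obtained through the identification of these codes with the codes of the designs of points and flats of $\PG(m-1,q)$ (cf. Theorem \ref{thm:PG-PRM}) together with Delsarte--Goethals--MacWilliams-type results on geometric codes. Measured on its own terms, your proposal gets the easy half right and leaves the hard half as a plan. Your step (a) is correct and essentially complete: scaling invariance of functions whose monomials have degree divisible by $q-1$, the identity $\sum_{\bx\in\gf(q)^m}h(\bx)=h(\mathbf{0})+(q-1)\sum_{P}h(P)$, and the vanishing of $\sum_{t\in\gf(q)}t^i$ unless $i$ is a positive multiple of $q-1$ do prove $\mathrm{PRM}^*(m-1-r,m-1,q)\subseteq \mathrm{PRM}(r,m-1,q)^{\perp}$; likewise your flat-indicator $\prod_{i=1}^{r}(1-\ell_i^{q-1})$ correctly attains weight $(q^{m-r}-1)/(q-1)$. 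But step (b), which you defer, is not bookkeeping: without the dimension count you have only an inclusion, and that count is precisely the hard classical content (it is equivalent to the formula recorded in Theorem \ref{thm:PG-PRM}(iv)). A proof that postpones exactly this step has not proved the duality.

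The minimum-weight lower bound has a concrete gap: the inductive step fails in both degenerate cases, and the target $(q^{m-r}-1)/(q-1)=q^{m-1-r}+(q^{m-1-r}-1)/(q-1)$ is only reached when \emph{both} restrictions are nonzero. If a nonzero codeword vanishes identically on the affine chart, your prescription bounds its weight by the minimum weight of $\mathrm{PRM}(r,m-2,q)$, which by induction is $(q^{m-1-r}-1)/(q-1)$ --- short of the target by roughly a factor of $q$. What is actually needed is an order-drop lemma: such a codeword restricts at infinity to an element of $\mathrm{PRM}(r-1,m-2,q)$, whose inductive minimum weight $(q^{m-r}-1)/(q-1)$ is exactly right. (The codeword of $f=x_0^{q-1}-1$ illustrates this: it is supported entirely at infinity and restricts there to the constant $-1$, an element of order $0$, not a generic element of order $r$.) Symmetrically, if the codeword vanishes identically at infinity but not on the chart, your bound is only the affine minimum distance $q^{m-1-r}$ of $\cR_q(r(q-1),m-1)$, again below the target; here one must show that vanishing at infinity forces the affine restriction into a smaller code (a degree drop), and this is delicate because for $r\geq 2$ the relevant forms have degree $r(q-1)\geq q+1$ and can vanish identically on $\PG(m-2,q)$ without being divisible by $x_0$, so the top homogeneous part need not literally die. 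Until these two structural lemmas are formulated and proved, the induction does not close, and the phrase ``the two pieces summing to $(q^{m-r}-1)/(q-1)$'' is simply false for codewords supported on only one of the two pieces.
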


Let $p$ be a prime. Then the relation between the codes $\mathsf{C}_{p}(\PG_{r-1}(m-1,p))$ of the designs 
of projective geometries over $\mathrm{GF}(p)$ and the projective generalized Reed-Muller codes over $\mathrm{GF}(p)$ is 
given as follows \cite{AK98}. 

\begin{theorem}\label{thm:PG-PRM}
Let $m$ be a positive integer, $p$ a prime, and $1 \le r \le m$.

\rm(i) The code $\mathsf{C}_{p}(\PG_{r-1}(m-1,p))$ 
from the design of points and projective
$(r-1)$-dimensional subspaces of the projective geometry $\PG(m-1,p)$ is the same as $\mathrm{PRM}(m-r,m-1,p)$
 up to a permutation of coordinates.

 \rm(ii) $\mathsf{C}_{p}(\PG_{r-1}(m-1,p))$ has minimum weight
 $\frac{p^r-1}{p-1}$ and the minimum-weight vectors are the multiples of the characteristic  vectors of the blocks.

 \rm(iii) The dual code $\mathsf{C}_{p}(\PG_{r-1}(m-1,p))^\perp$
  is the same as $\mathrm{PRM}^*(r-1,m-1,p)$
 up to a permutation of coordinates and has minimum weight at least  $\frac{p^{m-r+1}-1}{p-1}+1$.

\rm(iv) The dimension of the code $\mathsf{C}_{p}(\PG_{r-1}(m-1,p))$ is
$$ \frac{p^m-1}{p-1}-\sum_{i=0}^{r-2} (-1)^{i} \binom{(r-1-i)(p-1)-1}{i} \binom{m-r+(r-1-i)p}{m-1-i}.$$
\end{theorem}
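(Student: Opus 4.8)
The plan is to prove part (i) first, since it is the structural heart of the theorem, and then to read off (ii) and (iii) from (i) together with Theorem \ref{thm-feb141}; part (iv) will be handled separately as a rank computation. Throughout, both codes live on the same index set, namely the $N=\frac{p^m-1}{p-1}$ points of $\PG(m-1,p)$, so the phrase ``up to a permutation of coordinates'' just means that we align the two orderings of the projective points. Once the dictionary between flats and reduced polynomials is set up, (i)--(iii) become essentially bookkeeping, and the genuine difficulty is concentrated in (iv).

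For part (i), first I would establish the inclusion $\C_p(\PG_{r-1}(m-1,p)) \subseteq \mathrm{PRM}(m-r,m-1,p)$ by writing the characteristic vector of each block as a polynomial evaluation. A block is an $(r-1)$-flat, i.e. the projectivisation of an $r$-dimensional subspace of $\gf(p)^m$, hence the common projective zero set of $m-r$ linearly independent linear forms $\ell_1,\ldots,\ell_{m-r}$. Since $y^{p-1}=1$ for $y\in\gf(p)^*$ and $0^{p-1}=0$, the function $1-\ell_i(\mathbf x)^{p-1}$ is the indicator of the hyperplane $\ell_i=0$, so
$$ f(\mathbf x)=\prod_{i=1}^{m-r}\bigl(1-\ell_i(\mathbf x)^{p-1}\bigr) $$
is the indicator of the flat. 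Each factor is well defined on projective points (it is invariant under $\mathbf x\mapsto\mu\mathbf x$ because $\mu^{p-1}=1$), and after reducing exponents modulo the functional identity $x_j^p=x_j$ every noncomstant monomial of $f$ has total degree a positive multiple of $p-1$ at most $(m-r)(p-1)$, so $f\in \mathrm{PP}(m-r,m-1,p)\cup\gf(p)$, giving the inclusion. To upgrade it to equality I would compare dimensions: $\dim\mathrm{PRM}(m-r,m-1,p)$ is known from the theory of generalised Reed--Muller codes, while $\dim\C_p(\PG_{r-1}(m-1,p))$ equals the $p$-rank of the incidence matrix, computed in part (iv); once these agree, the inclusion forces equality.

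Parts (ii) and (iii) then follow quickly. By (i) and Theorem \ref{thm-feb141} the minimum weight of $\C_p(\PG_{r-1}(m-1,p))=\mathrm{PRM}(m-r,m-1,p)$ is $\frac{p^{\,m-(m-r)}-1}{p-1}=\frac{p^r-1}{p-1}$, which is exactly the block size from Theorem \ref{thm-PGdesigns}; hence the block indicators are minimum-weight codewords, and the geometric characterisation of minimum-weight codewords of $\mathrm{PRM}$ codes shows these scalar multiples are the only ones. For (iii), Theorem \ref{thm-feb141} gives
$$ \C_p(\PG_{r-1}(m-1,p))^\perp=\mathrm{PRM}(m-r,m-1,p)^\perp=\mathrm{PRM}^*\bigl((m-1)-(m-r),m-1,p\bigr)=\mathrm{PRM}^*(r-1,m-1,p). $$
The weight bound comes from $\mathrm{PRM}^*(r-1,m-1,p)\subseteq\mathrm{PRM}(r-1,m-1,p)$, whose minimum weight is $\frac{p^{m-r+1}-1}{p-1}$: since the minimum-weight codewords of the larger code are characteristic vectors of $(m-r)$-flats, which do not lie in the starred subcode obtained by deleting the constants, the minimum weight of $\mathrm{PRM}^*(r-1,m-1,p)$ is at least $\frac{p^{m-r+1}-1}{p-1}+1$. (Theorem \ref{thm-AKp54} also yields a lower bound here, but in general a strictly weaker one.)

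The remaining and hardest ingredient is part (iv), where I expect the real work to lie. By the previous paragraph this amounts to computing $N-\dim\mathrm{PRM}^*(r-1,m-1,p)$, equivalently the $p$-rank of the point/flat incidence matrix. I would obtain it by counting the reduced monomials $x_0^{i_0}\cdots x_{m-1}^{i_{m-1}}$ with $0\le i_j\le p-1$ subject to $\sum_j i_j\equiv 0\pmod{p-1}$ and $0<\sum_j i_j\le (r-1)(p-1)$, and then resolving that constrained count --- after accounting for the linear relations among such monomials as functions on projective points, which is where the inclusion--exclusion producing the alternating sign enters --- into the stated alternating sum of products of binomial coefficients. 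This is precisely Hamada's rank formula, and the delicate combinatorial bookkeeping of the degree and congruence constraints, rather than any conceptual obstacle, is the main difficulty; all of (i)--(iii) reduce to the flat/polynomial dictionary above once (iv) supplies the dimension.
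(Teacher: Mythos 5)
First, a point of reference: the paper itself gives no proof of this statement. Theorem \ref{thm:PG-PRM} is quoted from the Assmus--Key handbook chapter \cite{AK98} as a known result, so your proposal has to be judged against the standard literature argument rather than against anything in the paper.

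Your forward inclusion $\C_p(\PG_{r-1}(m-1,p)) \subseteq \mathrm{PRM}(m-r,m-1,p)$ is correct and is indeed the easy half: the indicator of an $(r-1)$-flat is $\prod_{i=1}^{m-r}\bigl(1-\ell_i(\mathbf{x})^{p-1}\bigr)$, which after reduction lies in $\gf(p)+\mathrm{PP}(m-r,m-1,p)$. The genuine gap is in how you close the equality. You propose to force equality by comparing $\dim \mathrm{PRM}(m-r,m-1,p)$ with the $p$-rank of the incidence matrix ``computed in part (iv)''; but your part (iv) computes that rank as $N-\dim\mathrm{PRM}^*(r-1,m-1,p)$, which invokes part (iii), and your part (iii) is derived from part (i) via Theorem \ref{thm-feb141}. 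So (i) depends on (iv), (iv) on (iii), and (iii) on (i): the argument is circular. Counting reduced monomials only ever yields dimensions of polynomial evaluation codes; it says nothing about the rank of the incidence matrix until the identification in (i) is already in hand. Hamada's rank formula is not an independent input you can appeal to here --- it is essentially equivalent to the statement being proved, and its independent proof is a long induction that you do not sketch.

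What is missing is an independent proof of the reverse inclusion $\mathrm{PRM}(m-r,m-1,p) \subseteq \C_p(\PG_{r-1}(m-1,p))$. The standard route (the one taken in \cite{AK98}, going back to Delsarte) rests on two facts about these ``divisible-order'' codes: the minimum-weight codewords of $\mathrm{PRM}(m-r,m-1,p)$ are exactly the nonzero scalar multiples of the incidence vectors of the $(r-1)$-flats, \emph{and} the code is generated by its minimum-weight codewords. You invoke the first half for part (ii), but it is the second half --- a substantive theorem, not bookkeeping --- that gives the reverse inclusion and breaks the circle. Once that is in place the logic runs in the correct order: (i) holds; (iii) follows by duality from Theorem \ref{thm-feb141}, where your claim that flat indicators avoid $\mathrm{PRM}^*(r-1,m-1,p)$ should be justified by noting that $\bone\cdot\bone = N \equiv 1 \pmod{p}$ forces $\bone \notin \mathrm{PRM}^*(r-1,m-1,p)$, hence $\bone + v \notin \mathrm{PRM}^*(r-1,m-1,p)$ for any $v$ in that code; and (iv) then follows from the known dimension of $\mathrm{PRM}(m-r,m-1,p)$ --- that is, the PRM dimension count yields Hamada's formula, not the other way around.
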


To obtain the codes of the designs coming from projective spaces over $\mathrm{GF}(q)$ with $q=p^s$, we need to restrict the codes $\mathrm{PRM}(m-r,m-1,q)$
 to subfield subcodes. Let $\mathsf C$ be a linear code over $\mathrm{GF}(q)$. The set $\mathsf C_{q/p}$ of vectors in $\mathsf C$, all of whose coordinates lie in $\mathrm{GF}(p)$, is called the subfield subcode of $\mathsf C$ over $\mathrm{GF}(p)$. Denote by $\mathrm{PRM}_{q/p}(m-r,m-1,q)$ the subfield subcode of the projective generalized Reed-Muller code $\mathrm{PRM}(m-r,m-1,q)$.  Then the relation between the codes $\mathsf{C}_{p}(\PG_{r-1}(m-1,q))$ of the designs 
of projective geometries over $\mathrm{GF}(q)$ and the subfield subcode $\mathrm{PRM}_{q/p}(m-r,m-1,q)$ of the projective generalized Reed-Muller code is 
given as follows \cite{AK98}. 

 \begin{theorem}\label{thm:PG-PRM-q}
 Let m be any positive integer, $q=p^s$ where $p$ is a prime, and let $1 \le r \le m$.
 
 \rm(i) The code $\mathsf{C}_{p}(\PG_{r-1}(m-1,q))$ 
from the design of points and projective
$(r-1)$-dimensional subspaces of the projective geometry $\PG(m-1,q)$ is the same as $\mathrm{PRM}_{q/p}(m-r,m-1,q)$
 up to a permutation of coordinates.

 \rm(ii) $\mathsf{C}_{p}(\PG_{r-1}(m-1,q))$ has minimum weight
 $\frac{q^r-1}{q-1}$ and the minimum-weight vectors are the multiples of the characteristic  vectors of the blocks.

 \rm(iii) The dual code $\mathsf{C}_{p}(\PG_{r-1}(m-1,q))^\perp$
   has minimum weight at least  $\frac{q^{m-r+1}-1}{q-1}+1$.

\rm(iv) The dimension of the code $\mathsf{C}_{p}(\PG_{m-2}(m-1,q))$ is
$$   \binom{p+m-2}{m-1}^s +1.$$

 \end{theorem}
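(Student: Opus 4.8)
The overall plan is to reduce each claim to the corresponding statement about projective generalised Reed--Muller codes over $\gf(q)$, where Theorem \ref{thm-feb141} and the classical Reed--Muller theory apply, and then to descend from $\gf(q)$ to $\gf(p)$ via the subfield-subcode relations of Theorems \ref{thm-feb161} and \ref{thm-Cp-Cq}. For (i), I would first establish the $\gf(q)$-analogue of Theorem \ref{thm:PG-PRM}(i): the $\gf(q)$-code $\mathsf{C}_q(\PG_{r-1}(m-1,q))$ generated by the characteristic vectors of the $(r-1)$-flats equals $\mathrm{PRM}(m-r,m-1,q)$ up to a coordinate permutation. This is the standard observation that an $(r-1)$-flat is the common zero set of $m-r$ independent linear forms, so its indicator is a product $\prod(1-(a_i\cdot x)^{q-1})$, an evaluation of a function in $\mathrm{PP}(m-r,m-1,q)\cup\gf(q)$, and conversely these span the code. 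Granting this, Theorem \ref{thm-feb161} identifies $\mathsf{C}_p(\PG_{r-1}(m-1,q))$ with the subfield subcode over $\gf(p)$ of $\mathsf{C}_q(\PG_{r-1}(m-1,q))$, which is exactly $\mathrm{PRM}_{q/p}(m-r,m-1,q)$; this is (i).

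Part (ii) follows from Theorem \ref{thm-Cp-Cq}, which says the minimum distance is unchanged under passage to the subfield subcode. Hence $d(\mathsf{C}_p(\PG_{r-1}(m-1,q)))=d(\mathrm{PRM}(m-r,m-1,q))=\frac{q^r-1}{q-1}$, using Theorem \ref{thm-feb141} for the order-$(m-r)$ code (whose minimum weight is $\frac{q^{m-(m-r)}-1}{q-1}$). The block indicators are $0$-$1$ rows of the incidence matrix, hence already lie in $\mathsf{C}_p$ and attain this weight. Conversely, a minimum-weight word of $\mathsf{C}_p(\PG_{r-1}(m-1,q))\subseteq\mathsf{C}_q(\PG_{r-1}(m-1,q))$ is a minimum-weight word of the $\gf(q)$-code, so by the classical description of minimum-weight codewords of generalised Reed--Muller codes it is a $\gf(q)$-multiple of a flat indicator; since both the word and the indicator are $\gf(p)$-valued, the scalar lies in $\gf(p)$.

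For (iii) I would first note, directly, that $\mathsf{C}_p(\bD)^\perp=(\mathsf{C}_q(\bD)^\perp)_{q/p}$: a vector in $\gf(p)^v$ is orthogonal over $\gf(p)$ to every $0$-$1$ block row if and only if it is orthogonal to them over $\gf(q)$. By Theorem \ref{thm-feb141} the ambient $\gf(q)$-dual is $\mathrm{PRM}^*(r-1,m-1,q)$, and its minimum weight is at least $\frac{q^{m-r+1}-1}{q-1}+1$: the full code $\mathrm{PRM}(r-1,m-1,q)$ has minimum weight $\frac{q^{m-r+1}-1}{q-1}$ with minimum-weight words the $(m-r)$-flat indicators, whose total point-sum is $\equiv 1\pmod p$, whereas every function in $\mathrm{PP}(r-1,m-1,q)$ has projective point-sum $0$, so these indicators cannot lie in the constant-free subcode $\mathrm{PRM}^*$. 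Since a subfield subcode is contained in its ambient code, its minimum weight can only be larger, giving (iii); the weaker estimate of Theorem \ref{thm-AKp54} does not suffice here.

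The case $r=m-1$ of (i) identifies $\mathsf{C}_p(\PG_{m-2}(m-1,q))$ with $\mathrm{PRM}_{q/p}(1,m-1,q)$, so by Theorem \ref{thm-Cp-Cq} its $\gf(p)$-dimension equals $\dim_{\gf(q)}\mathrm{PRM}(1,m-1,q)$; computing this is (iv) and is where I expect the real work. This code is spanned over $\gf(q)$ by $\bone$ and the hyperplane indicators $(a\cdot x)^{q-1}$, $a\in\gf(q)^m$. Writing $q-1=(p-1)\sum_{i=0}^{s-1}p^i$ and using the additivity of Frobenius, I would factor
\[
(a\cdot x)^{q-1}=\prod_{i=0}^{s-1}\Bigl(\sum_{j=0}^{m-1}a_j^{p^i}x_j^{p^i}\Bigr)^{p-1}
\]
and expand each factor into homogeneous degree-$(p-1)$ monomials in the $x_j^{p^i}$. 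The resulting reduced monomials are indexed by $s$-tuples $(\alpha^{(0)},\dots,\alpha^{(s-1)})$ of weight-$(p-1)$ exponent vectors, and since each $\alpha^{(i)}$ is recovered from the base-$p$ digits of the exponents, distinct tuples give distinct monomials, $\binom{p+m-2}{m-1}^s$ in number. The crux is then to prove that the evaluation vectors of these monomials are linearly independent on $\PG(m-1,q)$ and that, together with the independent vector $\bone$, they span $\mathrm{PRM}(1,m-1,q)$. This reduced-representative and independence bookkeeping over projective space is the technical heart of the generalised Reed--Muller dimension formula, and it delivers the stated dimension $\binom{p+m-2}{m-1}^s+1$.
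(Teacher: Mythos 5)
Your proposal has a genuine gap, and it sits at the very first step, on which parts (i), (iii) and (iv) all depend: the claim that $\C_q(\PG_{r-1}(m-1,q))=\mathrm{PRM}(m-r,m-1,q)$ up to coordinate permutation, equivalently that the flat indicators $\prod_i\bigl(1-(a_i\cdot x)^{q-1}\bigr)$ span the projective generalised Reed--Muller code over $\gf(q)$. This is true when $q$ is prime (that is Theorem \ref{thm:PG-PRM}(i)), but it is false for every proper prime power, and the entire point of Theorem \ref{thm:PG-PRM-q}, as opposed to Theorem \ref{thm:PG-PRM}, is that in the non-prime case the design code corresponds only to the strictly smaller subfield subcode $\mathrm{PRM}_{q/p}$. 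Concretely, take $(q,m,r)=(4,3,2)$. Over $\gf(4)$ one has $(a\cdot x)^{3}=(a\cdot x)(a\cdot x)^{2}=\sum_{j,k}a_ja_k^{2}\,x_jx_k^{2}$, so every hyperplane indicator lies in the span of $\bone$ and the nine monomials $x_jx_k^{2}$; hence $\dim_{\gf(4)}\C_4(\bD)\le 10$. On the other hand, the ten degree-$3$ monomials evaluate to linearly independent vectors on $\PG(2,4)$ (a homogeneous cubic vanishing on $\gf(4)^3\setminus\{0\}$ vanishes on all of $\gf(4)^3$ and is reduced, hence zero), so $\dim \mathrm{PRM}^{*}(1,2,4)=10$, and Theorem \ref{thm-feb141} gives $\dim_{\gf(4)}\mathrm{PRM}(1,2,4)=21-10=11$. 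Thus $\C_q(\bD)\subsetneq \mathrm{PRM}(m-r,m-1,q)$: the monomial $x_0x_1x_2\in\mathrm{PP}(1,2,4)$ is simply not in the span of the powers of linear forms.

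This error propagates through the rest. In (i), after the (correct) containment, what remains is exactly the hard direction --- that every $\gf(p)$-valued codeword of $\mathrm{PRM}(m-r,m-1,q)$ is a $\gf(p)$-combination of block indicators --- and you give no argument for it; your appeal to Theorem \ref{thm-feb161} identifies $\C_p(\bD)$ with the subfield subcode of $\C_q(\bD)$, not of $\mathrm{PRM}$. In (iii), since $\C_q(\bD)^{\perp}\supsetneq\mathrm{PRM}^{*}(r-1,m-1,q)$, your (correct) identity $\C_p(\bD)^{\perp}=(\C_q(\bD)^{\perp})_{q/p}$ places the dual inside a code \emph{larger} than $\mathrm{PRM}^{*}$, so the lower bound on $d(\mathrm{PRM}^{*})$ does not transfer. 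In (iv), the assertion that $\mathrm{PRM}(1,m-1,q)$ is spanned by $\bone$ and the $(a\cdot x)^{q-1}$ is false; your monomial count $\binom{p+m-2}{m-1}^{s}+1$ is in fact the dimension of that span, i.e.\ of the design code itself (essentially Hamada's rank computation), not of $\mathrm{PRM}(1,m-1,q)$, which equals $11\neq 10$ in the example above --- so the arithmetic lands on the stated number only because the object being computed is not the one you claim. Some pieces do survive: the value of the minimum weight in (ii) needs only the containment $\C_q(\bD)\subseteq\mathrm{PRM}(m-r,m-1,q)$, the weight of a block indicator, and Theorem \ref{thm-Cp-Cq}; and proving independence of the monomials occurring in the expansions of the $(a\cdot x)^{q-1}$ is indeed a viable route to (iv) provided it is applied to the design code directly. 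For reference, the paper itself gives no proof of this theorem (it is quoted from Assmus--Key), and the argument in that source necessarily works at the level of subfield subcodes rather than through the $\gf(q)$-identification you assume.
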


Serre has proved in \cite{Ser89} the following inequality, conjectured by Tsfasman:
\begin{theorem}
Let $m\ge 2$ and $f$ be a nonzero homogeneous polynomial in $\mathrm{GF}(q)[x_0, x_1, \cdots, x_{m-1}]$ with $\mathrm{deg}(f)\le q+1$.
Let $N_f= | \{ \mathbf x \in \mathrm{PG}(m-1, q): f(\mathbf x)=0\}|$.
Then
$$ N_f\le \mathrm{deg}(f)q^{m-2}+\frac{q^{m-2}-1}{q-1}.$$
Moreover, if $\mathrm{deg}(f) \le q$, the upper bound  is attained only if the set
$\{ \mathbf x \in \mathrm{PG}(m-1, q): f(\mathbf x)=0\}$ is a
union of $\mathrm{deg}(f)$ hyperplanes whose intersection contains a subspace of codimension $2$.
\end{theorem}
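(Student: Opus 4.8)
The plan is to prove the inequality by induction on the dimension $m$, using that the restriction of a form of degree $d := \deg(f)$ to any projective line is a binary form of degree $d$. Write $\theta_j := \frac{q^{j+1}-1}{q-1}$ for the number of points of $\PG(j,q)$, so that the asserted bound reads $N_f \le d\,q^{m-2} + \theta_{m-3}$, and note $q^{m-2} = \theta_{m-2} - \theta_{m-3}$. For the base case $m=2$, the zero locus of a nonzero binary form of degree $d$ on $\PG(1,q)$ has at most $d$ points, which is exactly $d\,q^{0} + \theta_{-1} = d$. The observation driving the whole argument is that every line $\ell$ either lies entirely in $Z := \{\mathbf{x} : f(\mathbf{x})=0\}$ (when $f|_\ell \equiv 0$) or meets $Z$ in at most $d$ points (when $f|_\ell$ is a nonzero binary form); this is exactly what makes hyperplane slices amenable to the inductive hypothesis.

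For the inductive step I would peel off the rational linear factors of $f$ at the level of polynomials. Factor $f = \ell_1 \cdots \ell_a \cdot f'$, where $\ell_1, \dots, \ell_a$ are the distinct $\gf(q)$-rational linear factors of $f$ (so automatically $a \le d$) and $f'$ has no $\gf(q)$-rational linear factor; set $d' = d - a$. Then $Z = (H_1 \cup \cdots \cup H_a) \cup Z'$ with $H_i = \{\ell_i = 0\}$ and $Z' = \{f' = 0\}$, whence $N_f \le |H_1 \cup \cdots \cup H_a| + |Z'|$. The union of $a$ hyperplanes in $\PG(m-1,q)$ is maximized when they share a common codimension-$2$ subspace, giving the clean bound $|H_1 \cup \cdots \cup H_a| \le a\,q^{m-2} + \theta_{m-3}$ (any two hyperplanes meet in codimension $2$, and raising all higher intersections to codimension $2$ only increases the union). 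It then remains to show $|Z'| \le d'\,q^{m-2}$; combining the two estimates yields $N_f \le (a+d')q^{m-2} + \theta_{m-3} = d\,q^{m-2} + \theta_{m-3}$. The bound on $|Z'|$ is where the induction on $m$ enters: since $f'$ has no linear factor, one may slice by a hyperplane $H$ not contained in $Z'$, apply the inductive hypothesis to the degree-$d'$ hypersurface $Z' \cap H$ inside $H \cong \PG(m-2,q)$, and control the affine remainder $Z' \setminus H \subseteq \gf(q)^{m-1}$ by the elementary bound (attributed to Ore and to Schwartz--Zippel) that a degree-$d'$ affine hypersurface has at most $d'\,q^{m-2}$ points.

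For the equality clause, claimed only when $d \le q$, I would trace equality back through the two estimates. Equality in $|H_1 \cup \cdots \cup H_a| \le a\,q^{m-2} + \theta_{m-3}$ forces the hyperplanes $H_i$ to be concurrent in a common codimension-$2$ subspace, and equality overall additionally forces $d' = 0$, i.e. $a = d$, so $f = \ell_1 \cdots \ell_d$ is a product of $d$ distinct linear forms defining hyperplanes through that common subspace. The restriction $d \le q$ enters precisely here: a binary form of degree $d$ can vanish at $d$ of the $q+1$ points of $\PG(1,q)$ with full multiplicity splitting only when $d \le q$, which is what rules out a thinner extremal configuration and guarantees the split into genuine hyperplanes.

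The main obstacle I anticipate is the sharpness of the counting, concentrated entirely in the bound $|Z'| \le d'\,q^{m-2}$ for the linear-factor-free cofactor. Naively combining the inductive slice bound $|Z' \cap H| \le d'\,q^{m-3} + \theta_{m-4}$ with the affine bound $d'\,q^{m-2}$ overshoots the target by roughly $d'\,q^{m-3}$, so one cannot simply add a generic slice to the affine part; the hypothesis that $f'$ has no rational linear factor must be used to show that the slice and the affine part cannot both be extremal, i.e. that no hyperplane absorbs a maximal share of $Z'$. Making this balancing rigorous is the crux. A secondary technical point, which is exactly where the boundary degrees $d \in \{q, q+1\}$ bite, is the gap between a form vanishing as a function on a hyperplane and the corresponding linear form dividing it: for $d \le q-1$ the two coincide, but at $d = q, q+1$ one must argue separately that such a hidden hyperplane inside $Z'$ does not violate $|Z'| \le d'\,q^{m-2}$, which is why the sharp equality statement is restricted to $d \le q$ while the inequality persists to $d \le q+1$.
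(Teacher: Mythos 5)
There is a genuine gap here, and it is not merely the unfinished verification you flag as ``the crux'': the intermediate claim your whole induction rests on is false. You reduce the theorem to showing that a form $f'$ of degree $d'$ with no $\mathrm{GF}(q)$-rational linear factor satisfies $|Z'| \le d'q^{m-2}$. Take $q=2$, $m=4$, $f' = x_0x_1 + x_2x_3$. This form is irreducible over $\mathrm{GF}(2)$ (a product of two linear forms would have as zero set a union of two planes, hence $11$ points), yet its zero set is the hyperbolic quadric of $\mathrm{PG}(3,2)$, with $(q+1)^2 = 9$ points, exceeding your claimed bound $d'q^{m-2} = 2\cdot 2^2 = 8$; Serre's bound $2\cdot 2^2 + 3 = 11$ is of course respected. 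Cones over this quadric give counterexamples for $q=2$ in every $m \ge 4$. The failure is structural rather than cosmetic: your scheme is purely additive, the union-of-hyperplanes estimate already consumes the entire slack term $\frac{q^{m-2}-1}{q-1}$, and the overlap $\bigl(H_1\cup\cdots\cup H_a\bigr)\cap Z'$ is thrown away, so the linear-factor-free part is forced to obey a bound strictly stronger than the theorem itself (clearest when $a=0$, where you demand $N_{f'} \le d\,q^{m-2}$). Concretely, for $f = x_4(x_0x_1+x_2x_3)$ in $\mathrm{PG}(4,2)$ one has $|H_1| + |Z'| = 15 + 19 = 34$, which exceeds Serre's bound $31$; only the discarded overlap $|H_1 \cap Z'| = 9$ saves the count. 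So no ``balancing'' argument can rescue the step: a correct proof must either use the overlap or organize the induction differently, e.g.\ around whether $Z$ contains a hyperplane (which for $d \le q$ does force a linear factor) and around hyperplane sections when it does not.

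As for the comparison you were asked to survive: the paper contains no proof of this statement at all --- it quotes the theorem from Serre's 1989 letter to Tsfasman (reference [Ser89]) and only uses its corollary that $\mathrm{PRM}^*(1,m-1,q)$ has minimum weight $2q^{m-2}$ --- so there is no in-paper argument to measure yours against, and your attempt must stand on its own. Several of your ingredients are sound and would be usable in a complete proof: the base case $m=2$; the estimate $|H_1\cup\cdots\cup H_a| \le a\,q^{m-2} + \frac{q^{m-2}-1}{q-1}$ for distinct hyperplanes, with equality (for $a\ge 2$) forcing a common codimension-$2$ subspace; and the function-versus-polynomial phenomenon behind the equality clause, though the correct threshold is that vanishing on a hyperplane forces divisibility by the corresponding linear form for all $d \le q$, failing only at $d = q+1$ (where nonzero forms such as $x_0^qx_1 - x_0x_1^q$ vanish identically). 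But the inequality itself is not established by your argument.
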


Taking $\mathrm{deg}(f)=q-1$, we have the following result, 
\begin{theorem}\label{thm:2q-power}
Let $m\ge 2$. Then $\mathrm{PRM}^*(1,m-1,q)$ has minimum weight
 $2 q^{m-2}$.
\end{theorem}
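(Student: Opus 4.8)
The plan is to translate the minimum weight of $\mathrm{PRM}^*(1,m-1,q)$ into a statement about the number of projective zeros of a homogeneous polynomial, obtain the lower bound from the preceding theorem of Serre, and match it with an explicit codeword. First I would pin down the function space: by definition $\mathrm{PP}(1,m-1,q)$ is spanned by the monomials $x_0^{i_0}\cdots x_{m-1}^{i_{m-1}}$ with $\sum_{j}i_j\equiv 0\pmod{q-1}$ and $0<\sum_j i_j\le q-1$, and these two conditions force $\sum_j i_j=q-1$. Hence $\mathrm{PP}(1,m-1,q)$ is exactly the space of nonzero homogeneous polynomials of degree $q-1$ in $x_0,\dots,x_{m-1}$ (together with the zero polynomial). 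Writing $N=\frac{q^m-1}{q-1}$ for the number of projective points and $N_f=|\{\mathbf{x}\in\PG(m-1,q):f(\mathbf{x})=0\}|$ as in Serre's theorem, the weight of the codeword attached to a nonzero $f$ is $N-N_f$, so the minimum weight equals $N-\max_{f\neq 0}N_f$, the maximum taken over nonzero homogeneous $f$ of degree $q-1$.

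For the lower bound I would apply Serre's theorem with $\deg(f)=q-1\le q+1$, which gives $N_f\le (q-1)q^{m-2}+\frac{q^{m-2}-1}{q-1}$ for every such $f$. (In particular $N_f<N$, so each nonzero $f$ yields a genuinely nonzero codeword, and the formula $N-\max N_f$ is legitimate.) A short simplification,
$$
N-N_f \ge \frac{q^m-q^{m-2}}{q-1}-(q-1)q^{m-2}=(q+1)q^{m-2}-(q-1)q^{m-2}=2q^{m-2},
$$
then shows that every nonzero codeword has weight at least $2q^{m-2}$.

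To establish that this value is attained, I would exhibit the explicit codeword coming from
$$
f=x_0^{q-1}-x_1^{q-1}=\prod_{a\in\gf(q)^*}(x_0-ax_1),
$$
which is a product of the $q-1$ distinct hyperplanes $H_a:\,x_0=ax_1$, all of which contain the projective codimension-$2$ subspace $W:\,x_0=x_1=0$. Since $H_a\cap H_b=W$ for $a\neq b$, the zero set of $f$ is $W$ together with the $q-1$ pairwise-disjoint point sets $H_a\setminus W$, each of size $|H_a|-|W|=q^{m-2}$. Counting gives $N_f=\frac{q^{m-2}-1}{q-1}+(q-1)q^{m-2}$, which meets Serre's bound, so the corresponding codeword has weight exactly $2q^{m-2}$. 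Combining the two directions yields the claimed minimum weight.

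The computational content is light once Serre's theorem is available; the only point requiring care is the explicit construction and its zero-count, namely verifying that the chosen hyperplanes meet precisely in the common subspace $W$ (so that inclusion–exclusion collapses to disjoint contributions) and checking the degenerate case $m=2$, where $W$ is empty and the count reduces to the two points $(1:0)$ and $(0:1)$, consistent with weight $2q^{0}=2$. I do not expect a serious obstacle beyond this bookkeeping.
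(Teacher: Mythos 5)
Your proposal is correct and follows essentially the same route as the paper, which obtains the theorem by applying Serre's bound with $\deg(f)=q-1$. In fact you supply details the paper leaves implicit: the arithmetic $N-N_f\ge 2q^{m-2}$ and, more importantly, the explicit attaining codeword $f=x_0^{q-1}-x_1^{q-1}=\prod_{a\in\gf(q)^*}(x_0-ax_1)$, which is needed because the ``moreover'' clause of Serre's theorem only gives a necessary condition for equality, not existence.
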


Combining Theorems \ref{thm-feb141} and \ref{thm:PG-PRM}, we have 
$$ 
\mathrm{PRM}^*(1,m-1,p)=\mathsf{C}_{p}(\PG_{1}(m-1,p))^\perp=\mathrm{PRM}(m-2,m-1,p)^\perp,  
$$ 
where the equalities mean the equivalence of codes. By definition, $\mathrm{PRM}^*(1,m-1,p)$ 
is a subcode of $\mathrm{PRM}(1,m-1,p)$. 

\subsection{The code of the design held in the Simplex code}\label{sec-june261b}

We view $\gf(q^m)$ as an $m$-dimensional vector space over $\gf(q)$. Let $\alpha$ be a 
generator of $\gf(q^m)^*$. Then 
$$ 
\cP=\{1, \alpha, \alpha^2, ..., \alpha^{v-1}\}=\gf(q^m)^*/\gf(q)^* 
$$ 
is the set of points in the projective geometry $\PG(m-1, q)$, where $v=(q^m-1)/(q-1)$. 

By the definition $\alpha$ and $v$, it is easily seen that 
\begin{eqnarray}\label{eqn-feb141}
\left\{(\tr(a \alpha^i))_{i=0}^{v-1}: a \in \gf(q^m)\right\}  
\end{eqnarray} 
is the Simplex code whose dual is the Hamming code. 
Clearly, the weight enumerator of the Simplex code is given by 
\begin{eqnarray}\label{eqn-feb142}
1+(q^m-1)z^{q^{m-1}}. 
\end{eqnarray}  
By the Assmus-Mattson theorem,  
the codewords of weight $q^{m-1}$ in the Simplex code form a design 
$\bD$ with the following parameters 
\begin{eqnarray}
2\mbox{ --} \left( \frac{q^m-1}{q-1}, \ q^{m-1}, \ (q-1)q^{m-2}  \right).  
\end{eqnarray} 
Our objective in this section is to study the code $\C_q(\bD)$. Note that the design $\bD$ 
is not a geometric design in the projective geometry $\PG(m-1, q)$. Hence, we are not able 
to apply Theorem \ref{thm:PG-PRM} directly, but we will make use of it indirectly. To this end, 
we need to do some preparations. 

\begin{lemma} 
The complementary design $\bD^c$ of $\bD$ is the geometric design $\PG_{m-2}(m-1,q)$ with 
parameters 
\begin{eqnarray}
2\mbox{ --} \left( \frac{q^m-1}{q-1}, \ \frac{q^{m-1}-1}{q-1}, \ \frac{q^{m-2}-1}{q-1}\right).  
\end{eqnarray} 
\end{lemma}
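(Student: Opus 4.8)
The plan is to show that the complement of $\bD$ consists precisely of the supports of the minimum-weight-complement codewords, which correspond to hyperplanes of $\PG(m-1,q)$. The key observation is that a codeword of the Simplex code has the form $\bc_a = (\tr(a\alpha^i))_{i=0}^{v-1}$ for some $a \in \gf(q^m)$, and for $a \neq 0$ its support is exactly the set of projective points on which the trace is nonzero. First I would identify, for a fixed nonzero $a$, the set of coordinates $i$ where $\tr(a\alpha^i) = 0$: since $\tr$ is a nonzero $\gf(q)$-linear functional on $\gf(q^m)$, its kernel is an $(m-1)$-dimensional $\gf(q)$-subspace of $\gf(q^m)$, hence a hyperplane of $\PG(m-1,q)$. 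Because we are indexing by coset representatives in $\gf(q^m)^*/\gf(q)^*$, the condition $\tr(a\alpha^i)=0$ depends only on the projective point, so the \emph{zero set} of $\bc_a$ is exactly the hyperplane $H_a = \{\,\cP \ni P : \tr(a \cdot P) = 0\,\}$, which has $(q^{m-1}-1)/(q-1)$ points.

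Next I would translate this into a statement about designs. By definition, $\bD$ is the support design of the weight-$q^{m-1}$ codewords of the Simplex code, and its complement $\bD^c$ has as blocks the complements (within $\cP$) of these supports. The complement of the support of $\bc_a$ is precisely its zero set $H_a$, a hyperplane. Conversely, every hyperplane of $\PG(m-1,q)$ arises as such a zero set $H_a$ for a suitable nonzero $a$ (distinct $a$ up to $\gf(q)^*$-scaling give distinct hyperplanes, and there are $v = (q^m-1)/(q-1)$ of each). Thus the block set of $\bD^c$ is exactly the set of all hyperplanes, i.e. the $(m-2)$-flats of $\PG(m-1,q)$. This identifies $\bD^c$ with $\PG_{m-2}(m-1,q)$.

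Finally I would confirm the parameters by invoking Theorem~\ref{thm-PGdesigns} with $d = m-2$. That theorem gives the block size $k = (q^{m-1}-1)/(q-1)$ and the index $\lambda = \left[\myatop{m-2}{m-3}\right]_q = (q^{m-2}-1)/(q-1)$, matching the claimed $2$-design parameters; the point count $v=(q^m-1)/(q-1)$ is unchanged under complementation. As a consistency check, one can verify directly that the complement parameters agree with the formula $\lambda^c = \lambda \binom{v-t}{k}/\binom{v-t}{k-t}$ applied to the parameters of $\bD$, though appealing to Theorem~\ref{thm-PGdesigns} is cleaner.

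The main obstacle, though it is more a point requiring care than a genuine difficulty, is justifying that the zero set of $\bc_a$ is \emph{well-defined on projective points} and is an honest hyperplane: one must check that $\tr(a\alpha^i)=0$ is invariant under replacing $\alpha^i$ by a $\gf(q)^*$-multiple (which holds by $\gf(q)$-linearity of the trace) and that the kernel has the right dimension (which holds because $\tr$ is surjective, hence nonzero, as a $\gf(q)$-linear map). The bijection between nonzero $a$ modulo $\gf(q)^*$ and hyperplanes then gives that every hyperplane occurs exactly once, so no blocks are repeated and $\bD^c$ is simple, completing the identification.
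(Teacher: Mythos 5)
Your proposal is correct and follows essentially the same route as the paper's own proof: both identify the complement of the support of a Simplex codeword $\bc_a=(\tr(a\alpha^i))_{i=0}^{v-1}$ with the hyperplane $\{\tr(a\,\cdot)=0\}$ of $\PG(m-1,q)$, note that every hyperplane arises this way, and read off the parameters from the geometric design $\PG_{m-2}(m-1,q)$. Your version merely fills in details the paper leaves implicit (well-definedness of the zero set on projective points, the bijection modulo $\gf(q)^*$-scaling, and the explicit appeal to Theorem~\ref{thm-PGdesigns}), which are correct and welcome but not a different argument.
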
  

\begin{proof}
We use the trace expression of the Simplex code given in (\ref{eqn-feb141}), 
and index the coordinates of the code with the elements in $\gf(q^m)$. Let 
$$ 
\bc_a=(\tr(a \alpha^i))_{i=0}^{v-1}
$$ 
where $a \neq 0$. Then the complement $\support(\bc_a)^c$ of the support $\support(\bc_a)$ 
of the codeword $\bc_a$ is given by  
$$ 
\support(\bc_a)^c=\{\alpha^i: 0 \leq i \leq v-1 \mbox{ and } \tr(a \alpha^i)=0\},  
$$ 
which is a hyperplane in $\PG(m-1, q)$. On the other hand, every hyperplane in $\PG(m-1, q)$ 
is of this form and corresponds to such codeword in $\mathrm{PRM}^*(1,m-1,q)$. The desired 
conclusion then follows. 
\end{proof}

The following lemma will play an important role in proving the main result of this section. 

\begin{lemma}\label{lem-feb141} 
The code $ \C_p(\bD)^\perp$ contains the all-one vector $\bone$ and the code $\C_p(\bD)$  does not contain the all-one vector $\bone$. 
\end{lemma}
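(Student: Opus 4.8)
The plan is to settle the two assertions separately, each reducing to a simple congruence modulo $p$. For the first assertion, I would observe that every row of the incidence matrix of $\bD$ is the characteristic vector of a block, and every block of $\bD$ has size $k = q^{m-1}$. Since $q = p^s$ with $s \geq 1$ and $m \geq 2$, we have $k = p^{s(m-1)} \equiv 0 \pmod{p}$. Hence the standard inner product of $\bone$ with any row $\bg_i$ of the incidence matrix equals $k \bmod p = 0$ in $\gf(p)$. As these rows span $\C_p(\bD)$, the vector $\bone$ is orthogonal to all of $\C_p(\bD)$, that is, $\bone \in \C_p(\bD)^\perp$.

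For the second assertion I would argue by contradiction, exploiting the first part together with a self-orthogonality observation. Suppose $\bone \in \C_p(\bD)$. Combined with the already established $\bone \in \C_p(\bD)^\perp$, this forces $\bone$ to be orthogonal to itself, i.e. $\langle \bone, \bone \rangle = 0$ in $\gf(p)$; but $\langle \bone, \bone \rangle$ equals the length $v = (q^m-1)/(q-1)$ reduced modulo $p$, so this would require $v \equiv 0 \pmod{p}$. However $v = 1 + q + q^2 + \cdots + q^{m-1} \equiv 1 \pmod{p}$, since $p \mid q$ kills every term except the leading $1$. This contradiction shows $\bone \notin \C_p(\bD)$.

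I expect no genuine obstacle here: the whole argument rests on the two congruences $q^{m-1} \equiv 0 \pmod{p}$ and $(q^m-1)/(q-1) \equiv 1 \pmod{p}$, together with the elementary fact that a vector lying in both a code and its dual is self-orthogonal. As a cross-check, one could instead route through the complementary design: since $\bD^c = \PG_{m-2}(m-1,q)$ has replication number $(q^{m-1}-1)/(q-1) \equiv 1 \pmod{p}$, Theorem \ref{thm-june81} yields $\bone \in \C_p(\bD^c)$, after which Theorem \ref{thm-nyear19} could be invoked to compare the two codes. The direct self-orthogonality route is shorter, though, and avoids appealing to the structure of $\bD^c$.
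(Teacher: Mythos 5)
Your proof is correct and follows essentially the same route as the paper: orthogonality of $\bone$ to every row of the incidence matrix via $q^{m-1} \equiv 0 \pmod{p}$, then ruling out $\bone \in \C_p(\bD)$ because the self-inner product $\langle \bone, \bone \rangle = v \equiv 1 \pmod{p}$ is nonzero. Your write-up just makes the two congruences and the self-orthogonality contradiction slightly more explicit than the paper does.
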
 

\begin{proof}
Since each row in the incidence matrix of the design $\bD$ has Hamming weight $q^{m-1}$, 
the all-one vector $\bone$ of length $v=(q^m-1)/(q-1)$ is orthogonal to all rows in 
the incidence matrix. As a result, $\bone \in \C_p(\bD)^\perp$. Note that the inner 
product of $\bone$ and itself is $v \pmod{q}=1$. It then follows that $\bone \not\in 
\C_p(\bD)$.  
\end{proof} 

The main result of this section is the following. 

\begin{theorem}\label{thm-majorthm2}
The code $\C_p(\bD)$ of the design $\bD$ has parameters 
\begin{eqnarray*}
\left[\frac{q^m-1}{q-1}, \ \binom{p+m-2}{m-1}^s, \ d   \right], 
\end{eqnarray*} 
where 
\begin{eqnarray}\label{eqn-afeb141}
d \geq 2 q^{m-2}. 
\end{eqnarray} 
Moreover, if $q=p$, $d=2 q^{m-2}$.
\end{theorem}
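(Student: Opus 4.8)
The plan is to exploit the relationship between the design $\bD$ and its complement $\bD^c$, which the preceding lemma identifies as the geometric design $\PG_{m-2}(m-1,q)$. The key structural input is Lemma~\ref{lem-feb141}, which tells us that $\bone \not\in \C_p(\bD)$, and a short check (to be carried out) of whether $\bone \in \C_p(\bD^c)$. Since $\bD^c = \PG_{m-2}(m-1,q)$ is a $2$-design, Theorem~\ref{thm-june81} should let me verify $\bone \in \C_p(\bD^c)$ by computing the replication number $\lambda_1$ of the hyperplane design modulo $p$; here every point lies on $(q^{m-1}-1)/(q-1)$ hyperplanes, and I expect this count to be nonzero mod $p$. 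Thus I would land in the second case of Theorem~\ref{thm-nyear19}: $\C_p(\bD^c) \supseteq \C_p(\bD)$ with $\dim(\C_p(\bD^c)) = \dim(\C_p(\bD)) + 1$.

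From there the dimension is essentially handed to us. By Theorem~\ref{thm:PG-PRM-q}(iv), the code $\C_p(\PG_{m-2}(m-1,q)) = \C_p(\bD^c)$ has dimension $\binom{p+m-2}{m-1}^s + 1$. Subtracting one for the codimension-one containment gives $\dim(\C_p(\bD)) = \binom{p+m-2}{m-1}^s$, which is exactly the claimed dimension. The length $v = (q^m-1)/(q-1)$ is immediate from the construction of the Simplex code. So the first two of the three parameters follow quickly from the complementary-design machinery assembled in Section~2 together with the Assmus--Mattson Theorem (which supplies that $\bD$ is genuinely a $2$-design).

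The minimum distance is where the real work lies. For the lower bound $d \geq 2q^{m-2}$, I would identify $\C_p(\bD)$ inside a projective generalized Reed--Muller code. Observe that the rows $\bg_i$ of the incidence matrix of $\bD$ are the complements $\bone - \bh_i$ of the (multiples of) characteristic vectors $\bh_i$ of hyperplanes, and the hyperplane characteristic vectors generate $\C_p(\bD^c) = \PRM(m-2,m-1,q)_{q/p}$ (up to permutation, via Theorems~\ref{thm:PG-PRM} and \ref{thm:PG-PRM-q}). Since $\bone \not\in \C_p(\bD)$ but every $\bg_i = \bone - \bh_i$, the code $\C_p(\bD)$ consists exactly of those elements $\sum_i c_i \bh_i$ of $\C_p(\bD^c)$ whose coefficient sum forces the $\bone$-component to vanish---equivalently, $\C_p(\bD)$ is the kernel of the ``sum of coordinates'' functional restricted to a suitable completion. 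The goal is to show $\C_p(\bD)$ is equivalent to $\PRM^*(1,m-1,q)$ (over $\gf(p)$, i.e.\ its subfield subcode), whose minimum weight is $2q^{m-2}$ by Theorem~\ref{thm:2q-power}. For $q=p$ this identification should be exact: via Theorem~\ref{thm-feb141} one has $\PRM^*(1,m-1,p) = \PRM(m-2,m-1,p)^\perp$, and I would match $\C_p(\bD)$ against the correct subcode of the PRM code, using that the complementary hyperplane vectors span the hyperplane-generated code while excluding $\bone$.

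The main obstacle is pinning down the exact identification of $\C_p(\bD)$ with $\PRM^*(1,m-1,q)$ (or its subfield subcode) rather than merely bounding it, so as to transfer Serre's minimum-weight value $2q^{m-2}$ from Theorem~\ref{thm:2q-power}. The containment $\C_p(\bD) \subseteq \C_p(\bD^c)$ gives an upper bound on the ambient code but not directly the minimum distance; one must show that the minimum-weight codewords of the relevant PRM code actually lie in $\C_p(\bD)$ and not just in $\C_p(\bD^c)$, and that no lighter words appear. For general $q = p^s$ the subfield-subcode passage may only preserve the \emph{lower} bound $d \geq 2q^{m-2}$ (since a subfield subcode can have strictly larger minimum weight), which is why the theorem claims equality only when $q=p$; handling the dimension count and the minimum-weight transfer consistently across the $q=p$ versus $q=p^s$ cases is the delicate point I would watch most carefully.
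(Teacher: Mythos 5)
Your treatment of the length and dimension is complete and is exactly the paper's argument: Lemma~\ref{lem-feb141} gives $\bone \notin \C_p(\bD)$, the replication number $(q^{m-1}-1)/(q-1) \equiv 1 \pmod{p}$ of the hyperplane design gives $\bone \in \C_p(\bD^c)$ via Theorem~\ref{thm-june81}, so Theorem~\ref{thm-nyear19} yields $\C_p(\bD) \subset \C_p(\bD^c) = \C_p(\PG_{m-2}(m-1,q))$ with codimension one, and Theorem~\ref{thm:PG-PRM-q}(iv) finishes the dimension count. The minimum-distance part, however, contains a genuine gap, and it is precisely the step you flag as ``the main obstacle'': the identification of $\C_p(\bD)$ with the subfield subcode $\mathrm{PRM}^*_{q/p}(1,m-1,q)$ is announced as a goal but never carried out. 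Your proposed route --- realizing $\C_p(\bD)$ as the kernel of the coordinate-sum functional on $\C_p(\bD^c)$ --- does not by itself tie $\C_p(\bD)$ to $\mathrm{PRM}^*$: you would still need the containment $\C_p(\bD) \subseteq \mathrm{PRM}^*_{q/p}(1,m-1,q)$, or else the fact that every codeword of $\mathrm{PRM}^*(1,m-1,q)$ has coordinate sum zero, and neither is argued. Without some such input, nothing in the proposal excludes codewords of $\C_p(\bD)$ of weight below $2q^{m-2}$; the containment $\C_p(\bD) \subset \C_p(\bD^c)$ alone only gives $d \geq (q^{m-1}-1)/(q-1)$ by Theorem~\ref{thm:PG-PRM-q}(ii), which is weaker.

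The missing algebraic ingredient, which is the one nontrivial move in the paper's proof, is this: a row of the incidence matrix of $\bD$ is the characteristic vector of the complement of a hyperplane $\{\mathbf{x} : \ell(\mathbf{x})=0\}$, and this vector is exactly the evaluation vector of the polynomial $\ell^{q-1}$. Since $\deg(\ell^{q-1}) = q-1$, this polynomial lies in $\mathrm{PP}(1,m-1,q)$, and the vector is $p$-ary, so every generator of $\C_p(\bD)$ --- hence $\C_p(\bD)$ itself --- lies in $\mathrm{PRM}^*_{q/p}(1,m-1,q)$. This containment alone already yields $d \geq 2q^{m-2}$ via Theorem~\ref{thm:2q-power}. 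The exact identification then follows by a squeeze rather than by your kernel analysis: $\mathrm{PRM}^*_{q/p}(1,m-1,q)$ is a proper subcode of $\mathrm{PRM}_{q/p}(1,m-1,q) = \C_p(\bD^c)$ (by Theorem~\ref{thm-feb141} every codeword of $\mathrm{PRM}^*(1,m-1,q)$ is orthogonal to $\bone$, whereas $\bone \cdot \bone = v \equiv 1 \pmod{p}$, so $\bone$ lies in the larger code but not the smaller), and $\C_p(\bD)$ has codimension exactly one in $\C_p(\bD^c)$, so the chain $\C_p(\bD) \subseteq \mathrm{PRM}^*_{q/p}(1,m-1,q) \subsetneq \C_p(\bD^c)$ collapses to $\C_p(\bD) = \mathrm{PRM}^*_{q/p}(1,m-1,q)$. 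When $q=p$ the subfield subcode is $\mathrm{PRM}^*(1,m-1,p)$ itself, giving $d = 2p^{m-2}$. Your closing remark --- that for $s>1$ only the lower bound survives the subfield-subcode passage, which is why equality is claimed only for $q=p$ --- is correct, but it presupposes the identification that your proposal leaves unproved.
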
 

\begin{proof}
We first prove that the all-one vector $\bone$ is a codeword of $\C_p(\PG_{m-2}(m-1, q))$. 
Note that the number of blocks of the design $\PG_{m-2}(m-1,q)$ containing a point of the 
design is 
$$ 
\lambda_1=\frac{q^{m-2}-1}{q-1} \frac{\binom{ \frac{q(q^{m-1}-1)}{q-1} }{1} }{\binom{\frac{q(q^{m-1}-1)}{q-1}}{1}}=\frac{q^{m-1}-1}{q-1}. 
$$ 
Hence, $\lambda_1 \pmod{q} =1$. Consequently, the sum of the row vectors over $\gf(q)$ 
of the incidence matrix of the design $\PG_{m-2}(m-1,q)$ is 
$$ 
(\lambda_1, ..., \lambda_1)=(1,..., 1)=\bone \in \C_p(\PG_{m-2}(m-1, q)). 
$$ 
We then deduce from Theorem \ref{thm-nyear19} and Lemma \ref{lem-feb141} that 
$$ 
\C_p(\bD) \subset \C_p(\PG_{m-2}(m-1, q)) \mbox{ and } 
\dim(\C_p(\bD)) = \dim(\C_p(\PG_{m-2}(m-1, q)))-1.  
$$  
Note that 
$$ \C_p(\bD) \subseteq \mathrm{PRM}^*_{q/p}(1,m-1,q) \subset  \mathrm{PRM}_{q/p}(1,m-1,q)=  \C_p(\PG_{m-2}(m-1, q)),$$
where $\mathrm{PRM}^*_{q/p}(1,m-1,q)$ is the subfield subcode of the code $\mathrm{PRM}^*(1,m-1,q)$.
Noticing that $\dim(\C_p(\bD)) - \dim(\C_p(\PG_{m-2}(m-1, q)))=1$, one has $\C_p(\bD) = \mathrm{PRM}^*_{q/p}(1,m-1,q)$.
Then, the desired conclusions  follow from Theorems \ref{thm:PG-PRM-q} and \ref{thm:2q-power}. 
\end{proof}

Note that the lower bound on the minimum distance $d$ given in (\ref{eqn-afeb141}) 
is the minimum distance of the code $\C_p(\PG_{m-2}(m-1, q))$. Although the difference 
of the dimensions of $\C_p(\PG_{m-2}(m-1, q))$ and $\C_p(\bD)$ is only one, the 
difference between their minimum distances could be very large for $q \geq 3$. Table 
\ref{table-feb141} 
documents the parameters of the two codes in some cases. When $m=2$, both code are MDS and optimal. 
When $(q, m)=(3,3)$, the code $\C_p(\bD)$ has parameters $[13,6,6]$ and is optimal. 
Note that the code $\C_p(\bD)$ is much better than $\C_p(\PG_{m-2}(m-1, q))$ in many 
cases in terms of error correcting capability.

\begin{table}[htbp]
\centering
\caption{The parameters of $\C_p(\bD)$ and $\C_p(\PG_{m-2}(m-1, q))$}
\label{table-feb141}
\begin{tabular}{|c|c|c|}
  \hline
$(q,m)$ & $\C_p(\bD)$  & $\C_p(\PG_{m-2}(m-1, q))$ \\  \hline
$(3,2)$ & $[4,3,2]$    & $[4,4,1]$  \\ \hline 
$(3,3)$ & $[13,6,6]$    & $[13,7,4]$  \\ \hline 
$(3,4)$ & $[40,10,18]$    & $[40,11,13]$  \\ \hline 
$(3,5)$ & $[121,15,54]$    & $[121,16,40]$  \\ \hline 
$(4,2)$ & $[5,4,2]$    & $[5,5,1]$  \\ \hline 
$(4,3)$ & $[21,9,8]$    & $[21,10,5]$  \\ \hline 
$(4,4)$ & $[85,16,32]$    & $[85,17,21]$  \\ \hline 
$(5,2)$ & $[6,5,2]$    & $[6,6,1]$  \\ \hline 
$(5,3)$ & $[31,15,10]$    & $[31,16,6]$  \\ \hline 
\end{tabular}
\end{table}
 
 In fact, experimental data strongly supports the following conjecture. 

\begin{conj}\label{conj-june261} 
Let $\bD$ be defined as before.  
The minimum distance of the code $\C_p(\bD)$ equals $2q^{m-2}$.  
\end{conj}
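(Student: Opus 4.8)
The plan is to combine the lower bound already in hand with an explicit construction of a codeword that meets it. Theorem~\ref{thm-majorthm2} identifies $\C_p(\bD)$ with the subfield subcode $\mathrm{PRM}^*_{q/p}(1,m-1,q)$ of $\mathrm{PRM}^*(1,m-1,q)$, and since a subfield subcode has minimum distance at least that of its parent code, Theorem~\ref{thm:2q-power} already gives $d(\C_p(\bD)) \geq 2q^{m-2}$. Thus the whole problem reduces to exhibiting a single nonzero codeword of $\C_p(\bD)$ of weight exactly $2q^{m-2}$. The difficulty for $q>p$ is that the minimum-weight words of the ambient code $\mathrm{PRM}^*(1,m-1,q)$ (products of $q-1$ linear forms vanishing on a common codimension-$2$ flat, by Serre's theorem) are in general $\gf(q)$-valued and hence do not descend to the subfield subcode; the crux is to locate one such word whose coordinates all lie in $\gf(p)$.

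My plan is to build this word directly from the block structure of $\bD$ rather than from the polynomial picture. The lemma identifying $\bD^c$ with $\PG_{m-2}(m-1,q)$ shows that the blocks of $\bD$ are exactly the complements $B=\cP\setminus H$ of the hyperplanes $H$ of $\PG(m-1,q)$, so each characteristic vector $\chi_B$ is a row of the incidence matrix and therefore a codeword of $\C_p(\bD)$. I would then fix two distinct hyperplanes $H_1,H_2$ meeting in the codimension-$2$ flat $W=H_1\cap H_2$, set $B_i=\cP\setminus H_i$, and form $\bg=\chi_{B_1}-\chi_{B_2}$. Being a $\gf(p)$-linear combination of rows, $\bg\in\C_p(\bD)$, and its coordinates lie in $\{-1,0,1\}\subseteq\gf(p)$ automatically.

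It then remains to compute the weight. The support of $\bg$ is the symmetric difference $B_1\triangle B_2=(H_2\setminus W)\cup(H_1\setminus W)$, and counting projective points gives $|H_i\setminus W|=\frac{q^{m-1}-1}{q-1}-\frac{q^{m-2}-1}{q-1}=q^{m-2}$, so $\wt(\bg)=2q^{m-2}$. Equivalently, writing $\chi_{B_i}=L_i^{q-1}$ for linear forms $L_i$ with $H_i=\ker L_i$, one has the factorization $L_1^{q-1}-L_2^{q-1}=\prod_{c\in\gf(q)^*}(L_1-cL_2)$, which exhibits $\bg$ as a union of $q-1$ hyperplanes through $W$, i.e. a genuine minimum-weight word in the sense of Serre that happens to be $\gf(p)$-valued. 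Combining $\wt(\bg)=2q^{m-2}$ with the lower bound yields $d(\C_p(\bD))=2q^{m-2}$.

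The construction needs only $m\geq 2$ and $H_1\neq H_2$, both harmless, so I do not expect a serious obstacle here: the only conceptual point is to recognize that the difference of two block characteristic vectors is simultaneously a minimum-weight word of the ambient Reed--Muller code and $\gf(p)$-valued, which is transparent from the incidence-matrix viewpoint even though it looks delicate from the subfield-subcode viewpoint. The one routine verification I would still carry out is that $\bg$ is genuinely nonzero and that the weight count is unaffected when $p=2$ (where $-1=1$ and $\bg$ is simply the symmetric-difference indicator of $B_1$ and $B_2$).
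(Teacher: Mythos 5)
You should know at the outset that the paper does \emph{not} prove this statement: it appears only as Conjecture~\ref{conj-june261}, supported by the experimental data of Table~\ref{table-feb141}, and what the paper actually establishes (Theorem~\ref{thm-majorthm2}) is the lower bound $d \geq 2q^{m-2}$ for all $q=p^s$ together with equality in the special case $q=p$. So there is no proof in the paper to compare against; the relevant question is whether your argument stands on its own, and as far as I can verify, it does --- it is complete, correct, and settles the conjecture. The paper's lemma in Section~\ref{sec-june261b} shows the blocks of $\bD$ are precisely the complements $\cP \setminus H$ of the hyperplanes $H$ of $\PG(m-1,q)$, so for distinct hyperplanes $H_1 \neq H_2$ your vector $\bg = (\bone - \chi_{H_1}) - (\bone - \chi_{H_2}) = \chi_{H_2} - \chi_{H_1}$ is a difference of two rows of the incidence matrix and hence lies in $\C_p(\bD)$ for every prime $p$; its support is $H_1 \triangle H_2 = (H_1\setminus W) \cup (H_2 \setminus W)$ with $W = H_1 \cap H_2$, and since $|H_i \setminus W| = \frac{q^{m-1}-1}{q-1} - \frac{q^{m-2}-1}{q-1} = q^{m-2}$, you get $\wt(\bg) = 2q^{m-2}$ exactly (also when $p=2$, where $\bg = \chi_{H_1}+\chi_{H_2}$ has the same support). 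Combined with the paper's own lower bound, this forces $d = 2q^{m-2}$, in agreement with every entry of Table~\ref{table-feb141}, including the $q=4=2^2$ rows, which are exactly the open case.

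Two remarks on the structure of your argument. First, your upper bound sidesteps the machinery that apparently made this look hard: it does not use the identification $\C_p(\bD) = \mathrm{PRM}^*_{q/p}(1,m-1,q)$ at all, only the incidence matrix itself; the identification (or merely the containment $\C_p(\bD) \subseteq \mathrm{PRM}^*(1,m-1,q)$ together with Theorem~\ref{thm:2q-power}) is needed only for the lower bound, which you correctly take from the paper. Second, your closing observation is the right one conceptually: $\bg$ corresponds to the polynomial $L_1^{q-1}-L_2^{q-1} = \prod_{c \in \gf(q)^*}(L_1 - cL_2)$, so among the Serre-type minimum-weight words of the ambient code there always exist $\gf(p)$-valued ones, and they already lie in the row span of the incidence matrix. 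Note also that, by the paper's own remark following Theorem~\ref{thm-major1st3}, your result simultaneously determines the minimum distance there as $(q-1)\cdot 2q^{m-2}$.
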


\begin{theorem}\label{thm-majorthm2dual} 
Let $\bD$ be defined as before.  
The dual code $\C_p(\bD)^\perp$ has parameters 
\begin{eqnarray*}
\left[\frac{q^m-1}{q-1}, \ \frac{q^m-1}{q-1}-\binom{p+m-2}{m-1}^s, \ d^\perp     \right], 
\end{eqnarray*} 
where $d^\perp \geq 3$. Moreover, if $q=p$, $d^\perp= p+1$.
\end{theorem}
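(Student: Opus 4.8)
The plan is to read off the length and dimension directly from Theorem \ref{thm-majorthm2}, to obtain the lower bound $d^\perp\ge 3$ from the general dual-distance estimate of Theorem \ref{thm-AKp54}, and then to pin down the exact value $d^\perp=p+1$ in the prime case by identifying $\C_p(\bD)^\perp$ with a geometric code whose minimum weight is already recorded. Since $\C_p(\bD)^\perp$ and $\C_p(\bD)$ share the length $v=(q^m-1)/(q-1)$ and $\dim\C_p(\bD)^\perp=v-\dim\C_p(\bD)$, Theorem \ref{thm-majorthm2} immediately yields the claimed length together with the dimension $\frac{q^m-1}{q-1}-\binom{p+m-2}{m-1}^s$; nothing further is required for the first two parameters.

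For the bound $d^\perp\ge 3$ I would apply Theorem \ref{thm-AKp54} over the prime field $\gf(p)$ to the $2$-design $\bD$, whose block size is $k=q^{m-1}$ and point number is $v=(q^m-1)/(q-1)$, so that $k<v$ holds. The hypothesis $\C_p(\bD)\neq\gf(p)^v$ needed there is guaranteed by Lemma \ref{lem-feb141}, since $\bone\notin\C_p(\bD)$. Using $v-1=\frac{q(q^{m-1}-1)}{q-1}$ and $k-1=q^{m-1}-1$, Theorem \ref{thm-AKp54} then gives
$$
d^\perp\ \ge\ \frac{v-1}{k-1}+1\ =\ \frac{q}{q-1}+1\ =\ 2+\frac{1}{q-1}\ >\ 2,
$$
and, because $d^\perp$ is a positive integer, this forces $d^\perp\ge 3$ for every prime power $q$.

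For the exact value when $q=p$ (so $s=1$) I would identify the dual explicitly. By Theorem \ref{thm-majorthm2} the subfield-subcode operation is trivial and $\C_p(\bD)=\mathrm{PRM}^*(1,m-1,p)$. The relation derived in the excerpt by combining Theorems \ref{thm-feb141} and \ref{thm:PG-PRM}, namely $\mathrm{PRM}^*(1,m-1,p)=\mathrm{PRM}(m-2,m-1,p)^\perp$, then gives upon dualising
$$
\C_p(\bD)^\perp=\mathrm{PRM}(m-2,m-1,p)=\C_p(\PG_1(m-1,p)),
$$
the $p$-ary code of the design of points and lines of $\PG(m-1,p)$. Its minimum weight is $\frac{p^2-1}{p-1}=p+1$ by Theorem \ref{thm:PG-PRM}(ii) with $r=2$, the minimum-weight words being the scalar multiples of the line incidence vectors; hence $d^\perp=p+1$. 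As a consistency check, the dimension $\frac{p^m-1}{p-1}-\binom{p+m-2}{m-1}$ read from Theorem \ref{thm:PG-PRM}(iv) with $r=2$ agrees with the $s=1$ instance of the dimension formula above.

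The main obstacle, and the reason the theorem asserts only $d^\perp\ge 3$ for general $q$, is that this clean chain of identifications is available just over the prime field: the duality $\mathrm{PRM}(r,m-1,p)^\perp=\mathrm{PRM}^*(m-1-r,m-1,p)$ and the geometric description $\C_p(\PG_{r-1}(m-1,p))=\mathrm{PRM}(m-r,m-1,p)$ are prime-field statements, whereas for $q=p^s$ one is forced to work with subfield subcodes, whose duals are governed by the trace construction (Theorem \ref{thm-feb161}) and are far less transparent. Consequently, determining the exact minimum distance of $\C_p(\bD)^\perp$ when $q>p$ would demand a genuinely new argument; here I would only record the bound $d^\perp\ge 3$ furnished by Theorem \ref{thm-AKp54}, which is the best the present machinery supplies.
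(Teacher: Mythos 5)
Your proposal is correct and follows essentially the same route as the paper: length and dimension read off from Theorem \ref{thm-majorthm2}, the bound $d^\perp\ge 3$ from Theorem \ref{thm-AKp54}, and for $q=p$ the identification $\C_p(\bD)=\mathrm{PRM}^*(1,m-1,p)$ dualised via Theorem \ref{thm-feb141} to $\C_p(\bD)^\perp=\mathrm{PRM}(m-2,m-1,p)$, whose minimum weight $p+1$ is given by Theorem \ref{thm:PG-PRM}. Your write-up is in fact more careful than the paper's, since you verify the hypotheses of Theorem \ref{thm-AKp54} explicitly (via Lemma \ref{lem-feb141}) and carry out the computation $\frac{v-1}{k-1}+1=2+\frac{1}{q-1}$ that makes the integrality step yielding $d^\perp\ge 3$ transparent.
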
 

\begin{proof}
The dimension of the code $\C_p(\bD)^\perp$ follows from Theorem \ref{thm-majorthm2dual}. 
Note that the design $\bD$ has parameters 
\begin{eqnarray}
2\mbox{ --} \left( \frac{q^m-1}{q-1}, \ q^{m-1}, \ (q-1)q^{m-2}  \right).  
\end{eqnarray} 
The desired lower bound on $d^\perp$ follows from Theorem \ref{thm-AKp54}. 

When $q=p$, from the proof of Theorem \ref{thm-majorthm2}, $\C_p(\bD)= \mathrm{PRM}^*(1,m-1,p)$.
By Theorem \ref{thm-feb141}, one has $\C_p(\bD)^{\perp}=\mathrm{PRM}(m-2,q)$ and the minimum distance of the code $\C_p(\bD)^{\perp}$ equals $p+1$ by Theorem \ref{thm:PG-PRM}.
\end{proof}

In fact, experimental data strongly supports the following conjecture. 

\begin{conj}\label{conj-june262}  
Let $\bD$ be defined as before.  
The minimum distance of the code $\C_p(\bD)^\perp$ equals $q+1$.  
\end{conj}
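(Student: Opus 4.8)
The plan is to settle the conjecture by pairing an explicit geometric description of $\C_p(\bD)^\perp$ with a short blocking-set counting argument. First I would make the dual transparent. The rows of the incidence matrix of $\bD$ are the indicators $\mathbf{1}_{\cP\setminus H}$ of the complements of the hyperplanes $H$ of $\PG(m-1,q)$, so a vector $\bx\in\gf(p)^{\cP}$ lies in $\C_p(\bD)^\perp$ exactly when $\sum_{P\in\cP\setminus H}x_P=0$, that is, when $\sum_{P\in H}x_P=\sum_{P\in\cP}x_P$ for every hyperplane $H$. Hence $\C_p(\bD)^\perp$ is precisely the set of $\bx$ whose sum over every hyperplane equals one fixed constant $c:=\sum_{P\in\cP}x_P$.

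For the upper bound $d^\perp\le q+1$ I would exhibit an explicit codeword: the indicator $\mathbf{1}_\ell$ of a projective line $\ell$. A line meets a hyperplane either in a single point or in all $q+1$ of its points, and since $q=p^s\equiv 0\pmod p$ both values $1$ and $q+1$ are congruent to $1$ modulo $p$. Thus $\sum_{P\in H}\mathbf{1}_\ell(P)\equiv 1$ for every $H$, so $\mathbf{1}_\ell\in\C_p(\bD)^\perp$ with $\wt(\mathbf{1}_\ell)=q+1$.

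The substantive step is the matching lower bound, and this is where I expect the real work. Let $\bz\in\C_p(\bD)^\perp$ be nonzero with support $T$ and $w:=\wt(\bz)=|T|$, and suppose for contradiction that $w\le q$. Counting incidences between $T$ and the hyperplanes, the number of hyperplanes meeting $T$ is at most $w\cdot\frac{q^{m-1}-1}{q-1}$; using $\frac{q^m-1}{q-1}=q\cdot\frac{q^{m-1}-1}{q-1}+1$ together with $w\le q$, this count is at most $\frac{q^m-1}{q-1}-1$, strictly less than the total number of hyperplanes. Consequently some hyperplane $H_0$ is disjoint from $T$, forcing $c=\sum_{P\in H_0}z_P=0$. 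But then every hyperplane sum of $\bz$ vanishes, so $\bz$ is orthogonal to all hyperplane indicators, i.e. $\bz\in\C_p(\PG_{m-2}(m-1,q))^\perp$ (recall that $\bD^c=\PG_{m-2}(m-1,q)$ from the preceding lemma). By Theorem \ref{thm:PG-PRM-q}(iii) this code has minimum weight at least $\frac{q^2-1}{q-1}+1=q+2>q\ge w$, whence $\bz=\bzero$, a contradiction. Therefore $w\ge q+1$, and combined with the line codeword this yields $d^\perp=q+1$.

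The crux is the lower bound, and the idea that drives it is a twofold reduction: the counting estimate forces the constant $c$ of any too-light codeword to be $0$, and once $c=0$ the problem collapses onto the well-understood dual $\C_p(\PG_{m-2}(m-1,q))^\perp$, whose minimum weight $q+2$ already exceeds $q$. The only points requiring separate care are the degenerate small cases (for instance $m=2$, where the line is all of $\PG(1,q)$ and $\C_p(\PG_0(1,q))^\perp=\{\bzero\}$), but these present no difficulty. This plan appears to resolve Conjecture \ref{conj-june262} in full.
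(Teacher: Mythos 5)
Your proposal targets a statement the paper does not actually prove: Conjecture \ref{conj-june262} is left \emph{open} there; the paper establishes only $d^\perp\ge 3$ for general $q=p^s$ and the equality $d^\perp=p+1$ in the prime case $q=p$ (Theorem \ref{thm-majorthm2dual}), citing experimental support for the rest. Checking your argument step by step, I find it correct, and it settles the conjecture using only facts already quoted in the paper. The dual characterization is right (the rows of the incidence matrix of $\bD$ are $\bone-\mathbf{1}_H$ over the hyperplanes $H$, so membership in $\C_p(\bD)^\perp$ means every hyperplane sum equals the total sum $c$); the line indicator works because $|\ell\cap H|\in\{1,\,q+1\}$ and both values are $\equiv 1\pmod{p}$, giving $d^\perp\le q+1$; and the lower bound is sound: the union bound $w\cdot\frac{q^{m-1}-1}{q-1}\le\frac{q^m-1}{q-1}-1$ for $w\le q$ (the classical fact that at most $q$ points cannot block every hyperplane) produces a hyperplane disjoint from the support, forces $c=0$, and thus places the codeword in $\C_p(\PG_{m-2}(m-1,q))^\perp$, whose minimum weight is at least $\frac{q^2-1}{q-1}+1=q+2$ by Theorem \ref{thm:PG-PRM-q}(iii). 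The essential difference from what the paper does is \emph{where} the design-theoretic bound of Theorem \ref{thm-AKp54} is applied: the authors apply it to $\bD$ itself, whose blocks of size $q^{m-1}$ yield only $\frac{v-1}{k-1}+1=\frac{q}{q-1}+1$, hence $d^\perp\ge 3$, whereas you first use the blocking-set count to transfer the problem to the complementary design $\bD^c=\PG_{m-2}(m-1,q)$, whose small blocks make the same bound give $q+2$; that reduction is exactly the idea the paper lacks. Two small touches would make the write-up airtight: state explicitly that the block set of $\bD$ consists of the $(q^m-1)/(q-1)$ \emph{distinct} hyperplane complements (each support arises from $q-1$ Simplex codewords), and note that in the degenerate case $m=2$ one has $\C_p(\bD)^\perp=\gf(p)\bone$, of weight $q+1$, so the claim holds there as well. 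As written, your argument upgrades Theorem \ref{thm-majorthm2dual} and disposes of Conjecture \ref{conj-june262}.
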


\section{Linear codes from the $t$-designs held in the generalised Reed-Muller codes}\label{sec-gRMcode}

Our task in this section is to study the linear codes from the $t$-designs held in the generalised Reed-Muller codes. 
To this end, we have to introduce 
some known results about the codes of the designs in the affine geometry $\AG(m, q)$ and the generalised 
Reed-Muller codes in Section \ref{sec-june262a}, 
as they are needed in Section \ref{sec-june262b}.  Hence, Section \ref{sec-june262a} below is not meant to be a survey, 
but a recall of some auxiliary results needed in Section \ref{sec-june262b}.   

\subsection{The codes of the designs in the affine geometry $\AG(m, q)$}\label{sec-june262a}

The affine geometry $\AG(m, q)$, where the points are the vectors in the vector 
space $\gf(q)^m$, the lines are the cosets of all the one-dimensional subspaces, 
the planes are the cosets of the two-dimensional subspaces, the $i$-flats are 
the cosets of the $i$-dimensional subspaces, and the hyperplanes are the cosets 
of the $(m-1)$-dimensional subspaces of $\gf(q)^m$. The $d$-flats of 
$\gf(q)^m$ can be employed to construct $2$-designs. 

\begin{theorem}\cite{BJL}\label{thm-AGdesigns}
Let $\cB$ denote the sets of all $d$-flats in 
$\gf(q)^m$, and $\cP$ the set of all vectors in $\gf(q)^m$, and $\cI$ the containment relation. Then  the triple $\AG_d(m, q):=(\cP, \cB, \cI)$ is $2$-$(v, k, \lambda)$ design, where   
$$ 
v=q^m, \ k = q^d, \ \lambda=\left[  \myatop{m-1}{d-1} \right]_q, 
$$
and the Gaussian coefficients are defined by 
$$ 
\left[ \myatop{n}{i} \right]_q=\frac{(q^n-1)(q^{n-1}-1) \cdots (q^{n-i+1}-1)}{(q^i-1)(q^{i-1}-1) \cdots (q-1)}. 
$$
In addition, the number of blocks in this design is 
$$ 
b=q^{m-d} \left[ \myatop{m}{d} \right]_q. 
$$
In particular, $\AG_1(m, q)$ is a Steiner system\index{Steiner system} $S(2, q, q^m)$. 
When $d \geq 2$, $\AG_d(m, 2)$ is a $3$-design. In particular, $\AG_2(m, 2)$ is 
a Steiner system $S(3, 4, 2^m)$.    
\end{theorem}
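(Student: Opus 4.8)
The plan is to establish the design property by a transitivity argument and then read off all parameters by counting subspaces of $\gf(q)^m$. First I would fix the setup: the point set is $\cP=\gf(q)^m$, so $v=q^m$, and a block is a $d$-flat, i.e.\ a coset $c+U$ of a $d$-dimensional subspace $U\subseteq\gf(q)^m$; each such coset has exactly $q^d$ points, giving $k=q^d$. This already pins down $v$ and $k$.

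To prove the $2$-design property I would invoke the affine group $\AGL(m,q)=\gf(q)^m\rtimes\GL(m,q)$ acting by $x\mapsto Ax+b$. Two facts drive the argument: (a) every affine map sends a $d$-flat to a $d$-flat, since $A(c+U)+b=(Ac+b)+AU$ and $AU$ is again a $d$-dimensional subspace; and (b) $\AGL(m,q)$ is $2$-transitive on $\cP$, because translations are transitive on points while the stabilizer $\GL(m,q)$ of the origin is transitive on nonzero vectors. Combining (a) and (b), the number of $d$-flats through a pair of distinct points is independent of the pair, so $\AG_d(m,q)$ is a $2$-design. To compute $\lambda$ I would translate one of two fixed distinct points to $0$; a block through both then becomes a $d$-subspace containing a fixed nonzero vector $w$, and by the correspondence between $d$-subspaces containing $\langle w\rangle$ and $(d-1)$-subspaces of $\gf(q)^m/\langle w\rangle\cong\gf(q)^{m-1}$, their number is $\left[\myatop{m-1}{d-1}\right]_q$, which is $\lambda$.

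For the block count $b$ I would count the $d$-subspaces of $\gf(q)^m$, namely $\left[\myatop{m}{d}\right]_q$, each contributing $q^{m-d}$ distinct cosets, to get $b=q^{m-d}\left[\myatop{m}{d}\right]_q$; as a cross-check this matches the standard identity $b=\lambda\binom{v}{2}/\binom{k}{2}$ after using $\left[\myatop{m}{d}\right]_q=\left[\myatop{m-1}{d-1}\right]_q(q^m-1)/(q^d-1)$. The case $d=1$ gives $\lambda=\left[\myatop{m-1}{0}\right]_q=1$, hence the Steiner system $S(2,q,q^m)$.

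The main obstacle, and the only genuinely $q$-dependent point, is the $3$-design claim for $q=2$. Here I would upgrade the transitivity argument: over $\gf(2)$ any two distinct nonzero vectors are automatically linearly independent (a nontrivial relation $aQ+bR=0$ would force $Q=0$, $R=0$, or $Q=R$, all excluded), so $\GL(m,2)$ is transitive on ordered pairs of distinct nonzero vectors and therefore $\AGL(m,2)$ is $3$-transitive on $\gf(2)^m$. Together with fact (a) this forces a constant number of blocks through every triple of distinct points, so $\AG_d(m,2)$ is a $3$-design whenever $k=2^d\geq 3$, i.e.\ $d\geq 2$. Translating one point of a triple to the origin and counting $d$-subspaces containing a fixed $2$-dimensional subspace yields $\lambda_3=\left[\myatop{m-2}{d-2}\right]_2$, and for $d=2$ this equals $1$, giving the Steiner system $S(3,4,2^m)$. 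The delicate point is exactly this success of $3$-transitivity: for $q>2$ the collinearity of three points is an affine invariant that obstructs $3$-transitivity, which is precisely why the $3$-design conclusion is special to $q=2$.
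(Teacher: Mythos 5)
Your proof is correct. The paper offers no proof of this theorem at all---it is quoted from the design-theory literature \cite{BJL} as a known auxiliary result---so there is no internal argument to compare against; your route via the $2$-transitivity of $\AGL(m,q)$ on $\gf(q)^m$ (and its $3$-transitivity on $\gf(2)^m$, which holds precisely because two distinct nonzero vectors over $\gf(2)$ are automatically linearly independent) together with the quotient-space count giving $\lambda=\left[\myatop{m-1}{d-1}\right]_q$ is the standard textbook argument, and all parameter computations, including the block count $b=q^{m-d}\left[\myatop{m}{d}\right]_q$, the consistency check $b\binom{k}{2}=\lambda\binom{v}{2}$, and the two Steiner cases, are sound.
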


To study the code of the design $\AG_r(m, q)$, we need to define a cyclic code. 
Let $q$ be a prime power as before. For any integer $j=\sum_{i=0}^{m-1}j_iq^i$, where $0 \leq j_i 
\leq q-1$ for all $0 \leq i \leq m-1$ and $m$ is a positive integer, we define 
\begin{eqnarray}\label{eqn-qweight}
\wt_q(j)=\sum_{i=0}^{m-1} j_i, 
\end{eqnarray}
where the sum is taken over the ring of integers, and is called the $q$-weight of $j$. 

Let $t \geq 0$ be an integer with $t=a(q-1)+b \leq m(q-1)$, where $0 \leq b \leq q-1$. 
We define a cyclic code $\M^t$ over $\gf(q)$ with length $q^m-1$ and defining set 
$$ 
\{i: 1 \leq i \leq q^m-1, \ \wt_q(i) < t\}. 
$$ 
Let $\overline{\M^t}$ denote the extended code of $\M^t$. The following theorem in the case 
that $q$ is a prime was proved in \cite{AK98}. It is also true for $q$ being any prime power.

\begin{theorem} \cite{AK98} 
Let $0 \leq r \leq m$. 
The code $\overline{\M^t}$ over $\gf(q)$ has length $q^m$, dimension 
$$ 
|\{i: 0 \leq i \leq q^m-1, \ \wt_q(i) \leq m(q-1)-t \}| 
$$ 
and minimum weight $(b+1)q^a$, where $t=a(q-1)+b$, $0 \leq a \leq m-1$, $0 \leq b <q-1$ 
and $(a, b) \neq (0,0)$.  
\end{theorem}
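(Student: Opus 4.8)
The plan is to treat the two assertions separately: the dimension count is a direct consequence of the cyclic structure of $\M^t$, while the minimum weight is obtained by identifying $\overline{\M^t}$ with a generalised Reed-Muller code and invoking the classical minimum-distance formula. The length claim is immediate, since extending a cyclic code of length $q^m-1$ by one coordinate yields length $q^m$.

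For the dimension, the key observation is that the $q$-weight $\wt_q$ is invariant under multiplication by $q$ modulo $q^m-1$, since this operation merely cyclically permutes the $q$-ary digits of an exponent. Hence the defining set $T=\{i:\ 1\le i\le q^m-1,\ \wt_q(i)<t\}$ is already a union of $q$-cyclotomic cosets, so it is the complete defining set of the cyclic code $\M^t$ and $\dim(\M^t)=(q^m-1)-|T|$. I would then apply the involution $i\mapsto (q^m-1)-i$, which replaces each $q$-ary digit $j_k$ by $q-1-j_k$ and therefore sends $\wt_q(i)$ to $m(q-1)-\wt_q(i)$; this converts $q^m-|\{i:\ \wt_q(i)<t\}|$ into $|\{i:\ 0\le i\le q^m-1,\ \wt_q(i)\le m(q-1)-t\}|$. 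Since passing to the extended code appends one parity coordinate via an injective linear map, $\dim(\overline{\M^t})=\dim(\M^t)$, giving the stated dimension.

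For the minimum weight, I would invoke the standard dictionary (Delsarte--Goethals--MacWilliams) between primitive cyclic codes described by $q$-weight conditions on their defining sets and generalised Reed-Muller codes evaluated on $\gf(q)^m$: under this correspondence $\overline{\M^t}$ is, up to a coordinate permutation, the generalised Reed-Muller code $\RM_q(m(q-1)-t,\,m)$ of length $q^m$. The dimension just computed matches $\dim \RM_q(m(q-1)-t,m)=|\{j:\ \wt_q(j)\le m(q-1)-t\}|$, which is a reassuring consistency check. It then remains to feed $\nu=m(q-1)-t$ into the classical formula for the minimum distance of $\RM_q(\nu,m)$: writing $\nu=R(q-1)+S$ with $0\le S<q-1$, this distance equals $(q-S)q^{m-1-R}$. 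Substituting $t=a(q-1)+b$ gives $R=m-a-1$ and $S=q-1-b$ when $0<b<q-1$, and $R=m-a$, $S=0$ when $b=0$; in both cases the formula collapses to $(b+1)q^{a}$, as claimed. The hypothesis $(a,b)\ne(0,0)$ guarantees $t\ge 1$, so that $\nu<m(q-1)$ and the code is a proper subcode.

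The main obstacle is the minimum-weight half. The dimension is essentially bookkeeping, but the minimum distance rests on two nontrivial ingredients: the precise polynomial identification of $\overline{\M^t}$ with $\RM_q(m(q-1)-t,m)$, and the minimum-distance theorem for generalised Reed-Muller codes itself (whose standard proof exhibits a minimum-weight word as a product of affine hyperplane indicators and supplies a matching lower bound, e.g.\ via the $q$-weight/BCH estimates). Since \cite{AK98} established the statement for prime $q$, the only genuinely new point for a general prime power $q$ is to confirm that this polynomial dictionary and the Reed-Muller minimum-distance formula remain valid over $\gf(q)$; both do, because the generalised Reed-Muller theory is developed over arbitrary finite fields.
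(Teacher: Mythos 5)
Your proposal is correct: the cyclotomic-coset/involution argument gives the stated dimension, and identifying $\overline{\M^t}$ with the generalised Reed--Muller code $\cR_q(m(q-1)-t,m)$ and substituting $t=a(q-1)+b$ into the classical minimum-distance formula does collapse to $(b+1)q^a$ in both the $b=0$ and $0<b<q-1$ cases. The paper itself gives no proof of this theorem --- it is quoted from \cite{AK98} with the prime-power case simply asserted --- and your argument is exactly the standard one underlying that citation, consistent with how the paper elsewhere equates extended cyclic codes whose defining sets are described by $q$-weight conditions with generalised Reed--Muller codes and their known parameters.
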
 

The next result will be used later. 

\begin{theorem}\label{thm-Feb71}   \cite{AK98}
Let $0 \leq r \leq m$. 
The code $\C_{q}(\AG_r(m, q))$ of the design $\AG_r(m, q)$ of points and $r$-flats 
of the affine geometry $\AG(m, q)$ is the code $\overline{\M^{r(q-1)}}$ with minimum 
weight $q^r$ and dimension 
$$ 
|\{i: 0 \leq i \leq q^m-1, \ \wt_q(i) \leq (m-r)(q-1) \}|.  
$$  
\end{theorem}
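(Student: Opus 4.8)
The plan is to reduce the entire statement to the single code identity
$$\C_q(\AG_r(m,q)) = \overline{\M^{r(q-1)}},$$
since the minimum weight and the dimension are then read off from the preceding theorem on $\overline{\M^{t}}$ applied with $t=r(q-1)$: here $a=r$ and $b=0$, so the minimum weight is $(b+1)q^{a}=q^{r}$ and the dimension is $|\{i:0\le i\le q^m-1,\ \wt_q(i)\le m(q-1)-r(q-1)\}|=|\{i:0\le i\le q^m-1,\ \wt_q(i)\le (m-r)(q-1)\}|$, exactly as claimed. This covers the range $1\le r\le m-1$; the two boundary cases are handled by inspection, since for $r=0$ the point indicators span all of $\gf(q)^{q^m}$ and for $r=m$ the unique $m$-flat gives $\gf(q)\bone$, both agreeing with the asserted parameters.

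To prove the identity I index the $q^m$ coordinates by the elements of $\gf(q^m)\cong\gf(q)^m$, attaching the extension coordinate of $\overline{\M^{r(q-1)}}$ to $0$. Writing a word as $(c_\beta)_{\beta\in\gf(q^m)}$, the defining set of $\M^{r(q-1)}$ together with the overall-parity condition introduced by the extension yields the spectral description
$$\overline{\M^{r(q-1)}}=\Bigl\{(c_\beta)_\beta:\ \sum_{\beta\in\gf(q^m)}c_\beta\,\beta^{\,i}=0\ \text{ for all }i\text{ with }0\le i\le q^m-1,\ \wt_q(i)<r(q-1)\Bigr\},$$
where $i=0$ (with $\beta^0:=1$) recovers the sum-zero condition of the extension and the remaining $i$ are the nonzero elements of the defining set.

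First I would prove the inclusion $\C_q(\AG_r(m,q))\subseteq\overline{\M^{r(q-1)}}$ by checking that the characteristic vector $\chi_F$ of every $r$-flat $F$ meets these spectral conditions, i.e.\ $\sum_{\beta\in F}\beta^{\,i}=0$ whenever $\wt_q(i)<r(q-1)$. Writing $F=\{a+\sum_{j=1}^{r}t_j\mathbf b_j:\ t\in\gf(q)^r\}$ for a basis $\mathbf b_1,\dots,\mathbf b_r$ of the underlying $r$-dimensional subspace, and expanding $i=\sum_k i_k q^k$ in base $q$, one has $\beta^{\,i}=\prod_k(\beta^{q^k})^{i_k}$; since $t_j^{q^k}=t_j$, each Frobenius power $\beta^{q^k}=a^{q^k}+\sum_j t_j\mathbf b_j^{q^k}$ is affine-linear in $t$, so $\beta^{\,i}$ is a polynomial in $t$ of total degree at most $\sum_k i_k=\wt_q(i)<r(q-1)$. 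Summing such a polynomial over all of $\gf(q)^r$ yields $0$, since after reducing modulo the relations $t_j^{q}=t_j$ (which does not raise the total degree) the only reduced monomial with nonzero sum over $\gf(q)^r$, namely $t_1^{q-1}\cdots t_r^{q-1}$ of degree $r(q-1)$, cannot appear. Hence $\chi_F\in\overline{\M^{r(q-1)}}$.

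The hard part will be the reverse inclusion, equivalently the assertion that the $r$-flats span a code of full dimension $\dim\overline{\M^{r(q-1)}}$. I would attack this through the dual, proving that
$$\C_q(\AG_r(m,q))^{\perp}=\Bigl\{w:\ \sum_{\beta\in F}w_\beta=0\ \text{ for every }r\text{-flat }F\Bigr\}$$
coincides with $\overline{\M^{r(q-1)}}^{\perp}$, the generalised Reed-Muller code consisting of the functions of degree at most $r(q-1)-1$. One containment is again the power-sum computation above (a function of degree $<r(q-1)$ sums to $0$ over every $r$-flat); the delicate converse is to show that any $w$ of degree $\ge r(q-1)$ has nonzero sum over some $r$-flat aligned with a top-degree monomial of $w$. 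An alternative and perhaps cleaner finish uses the preceding theorem directly: it gives $d(\overline{\M^{r(q-1)}})=q^{r}$, and each $\chi_F$ has weight exactly $q^{r}$, so the $\chi_F$ are minimum-weight codewords of $\overline{\M^{r(q-1)}}$; invoking the classical facts that the minimum-weight words of this code are precisely the scalar multiples of $r$-flat indicators and that they generate it yields $\C_q(\AG_r(m,q))=\overline{\M^{r(q-1)}}$ at once. Either way, the dimension and minimum-weight claims then follow from the preceding theorem as in the first paragraph.
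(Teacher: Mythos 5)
First, a point of comparison: the paper gives no proof of this statement at all --- it is quoted directly from Assmus and Key \cite{AK98} --- so your proposal can only be judged on its own terms. The parts you actually carry out are correct. The reduction of the minimum-weight and dimension claims to the preceding theorem on $\overline{\M^t}$ (with $t=r(q-1)$, so $a=r$, $b=0$) is right, and you correctly flag that $r=0$ and $r=m$ fall outside that theorem's hypotheses and need separate inspection. The spectral description of $\overline{\M^{r(q-1)}}$, with the extension coordinate attached to $0\in\gf(q^m)$ and the $i=0$ power sum recovering the parity check, is the standard one and is accurate. The forward inclusion $\C_q(\AG_r(m,q))\subseteq\overline{\M^{r(q-1)}}$ is proven in full and correctly: the Frobenius expansion $\beta^i=\prod_k(\beta^{q^k})^{i_k}$, the affine-linearity of each $\beta^{q^k}$ in the flat parameters $t\in\gf(q)^r$, and the fact that the only reduced monomial with nonzero sum over $\gf(q)^r$ is $t_1^{q-1}\cdots t_r^{q-1}$, of degree exactly $r(q-1)$, together give precisely what is needed.

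The genuine gap is the reverse inclusion, which is the entire substance of the theorem. Your Route A sets up the duality correctly ($\overline{\M^{r(q-1)}}^{\perp}$ is the generalized Reed-Muller code of degree at most $r(q-1)-1$), but the step you yourself call delicate --- that every polynomial of degree at least $r(q-1)$ has nonzero sum over some $r$-flat --- is exactly the hard direction, and you give no argument for it; as written, Route A is not a proof. Route B is logically sound but outsources that same hard direction to two classical theorems of Delsarte, Goethals and MacWilliams: that the minimum-weight codewords of $\overline{\M^{r(q-1)}}$ are precisely the scalar multiples of the $r$-flat characteristic vectors, and that these codewords generate the code. Those two facts are at least as deep as the statement being proven, and they are essentially what the paper's citation of \cite{AK98} encapsulates; so Route B reconstructs the literature's derivation by citation rather than supplying an independent argument. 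In short, everything you wrote is correct, but the heart of the theorem --- that the flat indicators span all of $\overline{\M^{r(q-1)}}$ --- is left either open (Route A) or cited (Route B).
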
 

As corollaries of Theorem \ref{thm-Feb71}, we have the next two results.

\begin{corollary}\label{cor-feb121}  \cite{Hamada68,Hamada73} 
The code $\C_{q}(\AG_{m-1}(m, q))$ of the geometric design $\AG_{m-1}(m, q)$ of points and $(m-1)$-flats 
of the affine geometry $\AG(m, q)$ has length $q^m$, minimum weight $q^{m-1}$ and dimension 
$\binom{m+p-1}{m}^s$, where $q=p^s$. 
\end{corollary}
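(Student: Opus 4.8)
The plan is to read the statement off Theorem~\ref{thm-Feb71} by setting $r=m-1$. With this substitution the code in question is $\C_q(\AG_{m-1}(m,q))=\overline{\M^{(m-1)(q-1)}}$, its length is $q^m$ by construction of the extended code, and its minimum weight is $q^r=q^{m-1}$; both of these are immediate and require no further work. The entire content of the corollary therefore sits in the dimension formula, so that is where I would concentrate.

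By Theorem~\ref{thm-Feb71} the dimension equals
\[
N:=\left|\left\{i:0\le i\le q^m-1,\ \wt_q(i)\le (m-r)(q-1)\right\}\right|
=\left|\left\{i:0\le i\le q^m-1,\ \wt_q(i)\le q-1\right\}\right|,
\]
since $(m-r)(q-1)=q-1$ when $r=m-1$. Writing $i=\sum_{l=0}^{m-1}i_l q^l$ with $0\le i_l\le q-1$, this is the number of tuples $(i_0,\dots,i_{m-1})$ of nonnegative integers with $\sum_l i_l\le q-1$, the upper constraints $i_l\le q-1$ being automatic. When $q=p$ a stars-and-bars count (adjoin a slack coordinate and count weak compositions of $q-1$ into $m+1$ parts) gives $N=\binom{m+p-1}{m}$, which is exactly the claimed dimension in the case $s=1$.

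For $q=p^s$ with $s\ge 2$ the closed form I must reach is $\binom{m+p-1}{m}^s$, and here the aggregate $q$-weight count is not the right object: a direct count of the set above gives $\binom{m+q-1}{m}$, which does \emph{not} equal $\binom{m+p-1}{m}^s$ once $s>1$. The resolution I would pursue is that the true dimension is the $p$-rank of the incidence matrix, which by Theorem~\ref{thm-Cp-Cq} is field-independent and hence equals $\dim_{\gf(q)}\C_q(\AG_{m-1}(m,q))$, and that this $p$-rank is governed by a \emph{per-base-$p$-digit} condition rather than the total $q$-weight, as in Hamada's determination \cite{Hamada68,Hamada73}. Concretely, I would expand each $q$-ary digit $i_l=\sum_{j=0}^{s-1}i_{l,j}p^j$ in base $p$ and show that the exponents contributing to the dimension are precisely those for which, in every digit position $j$, one has $\sum_{l=0}^{m-1}i_{l,j}\le p-1$. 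These $s$ conditions are independent, so the count factorizes into $s$ identical factors, each equal to the stars-and-bars count $\binom{m+p-1}{m}$, yielding $\binom{m+p-1}{m}^s$.

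The main obstacle is exactly this last step for $s\ge 2$: proving that the dimension decouples across the $s$ base-$p$ digit positions, so that a single aggregate $q$-weight bound is replaced by $s$ separate $p$-weight bounds. Everything else---the length, the minimum weight $q^{m-1}$, and the $s=1$ specialization---falls out formally from Theorem~\ref{thm-Feb71} together with an elementary composition count.
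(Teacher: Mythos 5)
Your diagnosis is exactly right, and it exposes a defect in the paper's own presentation rather than a defect in your argument. For $s=1$ your derivation is precisely the intended one: setting $r=m-1$ in Theorem~\ref{thm-Feb71} gives length $q^m$, minimum weight $q^{m-1}$, and dimension $|\{i: 0\le i\le q^m-1,\ \wt_q(i)\le q-1\}|=\binom{m+q-1}{m}$, which equals $\binom{m+p-1}{m}$ when $q=p$. For $s\ge 2$, however, Corollary~\ref{cor-feb121} is genuinely \emph{inconsistent} with Theorem~\ref{thm-Feb71} as stated: for $q=4$, $m=2$ the theorem would give dimension $\binom{5}{2}=10$, while the corollary asserts $\binom{3}{2}^2=9$, and $9$ is the correct value. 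The resolution is that Theorem~\ref{thm-Feb71} (quoted from Assmus--Key, where it is proved for prime $q$) is simply false for non-prime $q$: the $\gf(q)$-span of the incidence vectors of $(m-1)$-flats has dimension equal to the rank of the $(0,1)$-incidence matrix, which by the rank argument in Theorem~\ref{thm-Cp-Cq} is the $p$-rank; for $s\ge 2$ this $p$-rank, Hamada's $\binom{m+p-1}{m}^s$, is strictly smaller than $\dim \overline{\M^{(m-1)(q-1)}}=\binom{m+q-1}{m}$, so the design code is a \emph{proper} subcode of the generalized Reed--Muller code --- the same subfield-subcode phenomenon recorded for the projective case in Theorem~\ref{thm:PG-PRM-q}.

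Consequently, the ``main obstacle'' you identify --- the per-base-$p$-digit decoupling producing one factor $\binom{m+p-1}{m}$ for each of the $s$ digit positions --- is not something the paper overcomes either. The paper offers no proof of Corollary~\ref{cor-feb121} beyond the (invalid for $s\ge 2$) appeal to Theorem~\ref{thm-Feb71} together with the citations to \cite{Hamada68,Hamada73}, and those citations are exactly where the $s\ge 2$ dimension comes from. Your sketch is the right route: the design code over $\gf(p)$ sits inside the subfield subcode of the order-$(q-1)$ GRM code, whose defining-set condition (that $\wt_q(p^j i \bmod (q^m-1))\le q-1$ for every $j$, i.e., stability under the $p$-cyclotomic closure) collapses to your digit-wise condition $\sum_{l}i_{l,j}\le p-1$ for each $j$, whence the count factorizes as $\binom{m+p-1}{m}^s$. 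But carrying this out in full amounts to reproving Hamada's theorem, which the paper never does; note also that even the minimum-weight claim for $s\ge 2$ rests on the geometric-code literature (compare Theorem~\ref{thm:PG-PRM-q}(ii)) rather than on Theorem~\ref{thm-Feb71}. So your proposal matches the paper for $s=1$, and for $s\ge 2$ it correctly identifies that the statement must be imported from Hamada --- which is what the paper silently does.
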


\begin{corollary}   \cite{Hamada68,Hamada73} 
The code $\C_{q}(\AG_{1}(m, q))$ of the geometric design $\AG_{1}(m, q)$ of points and lines  
of the affine geometry $\AG(m, q)$ has length $q^m$, minimum weight $q$. The dimension of the 
code is $q^m-\binom{m+q-2}{m}$ if $q$ is a prime. 

In particular, the code $\C_{3}(\AG_{1}(m, 3))$ of the Steiner triple system of 
points and lines of $\AG(m, 3)$ has parameters $[3^m, 3^m-1-m, 3]$. 
\end{corollary}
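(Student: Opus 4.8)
The plan is to obtain the corollary as the case $r=1$ of Theorem~\ref{thm-Feb71}. Putting $r=1$ there, the blocks are the lines, which have $q^{1}=q$ points, while the point set has size $q^{m}$; hence the length is $q^{m}$ and the minimum weight is $q$ immediately, and the identification $\C_{q}(\AG_1(m,q))=\overline{\M^{q-1}}$ is in place. It then remains only to evaluate the dimension, which Theorem~\ref{thm-Feb71} gives (for $q$ prime) as the $q$-weight count
\begin{eqnarray*}
D=\left| \{i: 0 \leq i \leq q^{m}-1, \ \wt_q(i) \leq (m-1)(q-1)\} \right|.
\end{eqnarray*}
The whole task therefore collapses to a base-$q$ digit-sum count, and everything after this point is purely combinatorial.

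First I would pass to the complement. Identifying each $i$ with its base-$q$ digit vector $(j_0,\dots,j_{m-1})$, where $0\le j_k\le q-1$, the map $\phi\colon(j_0,\dots,j_{m-1})\mapsto(q-1-j_0,\dots,q-1-j_{m-1})$ is a bijection of $\{0,\dots,q^{m}-1\}$ onto itself that carries a vector of $q$-weight $w$ to one of $q$-weight $m(q-1)-w$. Consequently $\phi$ sends $\{\wt_q(i)>(m-1)(q-1)\}$ bijectively onto $\{\wt_q(i)\le q-2\}$, so that
\begin{eqnarray*}
D=q^{m}-\left|\{i:\wt_q(i)\le q-2\}\right|.
\end{eqnarray*}

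Next I would count this last set by stars and bars. Since the digit sum is bounded by $q-2$, each individual digit automatically satisfies $j_k\le q-2<q-1$, so the per-digit ceiling is vacuous and the count equals the number of non-negative integer solutions of $j_0+\dots+j_{m-1}\le q-2$; adjoining a slack coordinate turns this into an equation in $m+1$ variables summing to $q-2$, giving $\binom{m+q-2}{m}$ solutions. This yields $D=q^{m}-\binom{m+q-2}{m}$, as claimed. For the final assertion I would set $q=3$: then $\binom{m+q-2}{m}=\binom{m+1}{m}=m+1$, the dimension becomes $3^{m}-1-m$, the minimum weight is $3$, and since $\AG_1(m,3)$ has block size $q=3$ and index $\lambda=\left[\myatop{m-1}{0}\right]_3=1$ it is the Steiner triple system $S(2,3,3^{m})$, whose code has parameters $[3^{m},3^{m}-1-m,3]$.

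There is no genuine obstacle here, as the statement is a direct corollary and the only real work is the digit-sum bookkeeping. The single point meriting care is the hypothesis ``$q$ prime'': the closed-form dimension rests on reading the dimension of $\overline{\M^{q-1}}$ off the $q$-weight count, which is safe for prime $q$ but acquires a product-type correction for genuine prime powers $q=p^{s}$ (compare the exponent $s$ appearing in Corollary~\ref{cor-feb121}); accordingly I would keep the primality assumption exactly where the statement places it, attached to the dimension claim alone, while the length and minimum weight hold for every prime power.
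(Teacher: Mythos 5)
Your proof is correct and takes essentially the same route as the paper: the paper presents this statement as an immediate corollary of Theorem \ref{thm-Feb71} with $r=1$, leaving the dimension count implicit, and your digit-complementation plus stars-and-bars evaluation of $\left|\{i: \wt_q(i)\le (m-1)(q-1)\}\right| = q^m-\binom{m+q-2}{m}$ is exactly the bookkeeping needed. Your closing caveat, keeping the primality hypothesis attached only to the dimension claim, is also well placed, since for $q=p^s$ with $s>1$ the dimension is instead governed by Hamada's product-type formula (compare Corollary \ref{cor-feb121}).
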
 

\subsection{Linear codes from the $t$-designs held in the generalised Reed-Muller codes}\label{sec-june262b}

Let $\ell$ be a positive integer with $1 \leq \ell <(q-1)m$. The $\ell$-th order 
\emph{punctured generalized Reed-Muller code}\index{punctured generalized Reed-Muller code} 
$\cR_q(\ell, m)^*$ over $\gf(q)$ is the cyclic code of length $n=q^m-1$ with generator polynomial 
\begin{eqnarray}\label{eqn-generatorpolyPGRMcode}
g(x) = \sum_{\myatop{1 \leq j \leq n-1}{ \wt_q(j) < (q-1)m-\ell}} (x - \alpha^j), 
\end{eqnarray}
where $\alpha$ is a generator of $\gf(q^m)^*$. Since $\wt_q(j)$ is a constant function on 
each $q$-cyclotomic coset modulo $n=q^m-1$, $g(x)$ is a polynomial over $\gf(q)$. 

The parameters of the punctured generalized Reed-Muller code $\cR_q(\ell, m)^*$ are known 
and summarized in the next theorem. 

\begin{theorem}\label{thm-GPRMcode}  \cite{AK98} 
For any $\ell$ with $0 \leq \ell <(q-1)m$, $\cR_q(\ell, m)^*$ is a cyclic code over $\gf(q)$ 
with length $n=q^m-1$, dimension 
$$ 
\kappa=\sum_{i=0}^\ell \sum_{j=0}^{m} (-1)^j \binom{m}{j} \binom{i-jq+m-1}{i-jq} 
$$
and minimum weight $d=(q-\ell_0)q^{m-\ell_1-1}-1$, where $\ell=\ell_1(q-1)+\ell_0$ and 
$0 \leq \ell_0 < q-1$. 
\end{theorem}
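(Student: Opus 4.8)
The plan is to recognise $\cR_q(\ell, m)^*$ as one of the cyclic codes $\M^t$ already at hand and then to read off all three assertions. Comparing defining sets, the roots of the generator polynomial of $\cR_q(\ell, m)^*$ are exactly $\{\alpha^j : 1 \le j \le n-1,\ \wt_q(j) < (q-1)m - \ell\}$, which is the defining set of $\M^t$ with $t = (q-1)m - \ell$; hence $\cR_q(\ell, m)^* = \M^t$ for this $t$. Cyclicity is then immediate: multiplication of an exponent by $q$ modulo $n$ cyclically permutes its base-$q$ digits and therefore preserves $\wt_q$, so the root set is a union of $q$-cyclotomic cosets, the generator polynomial lies in $\gf(q)[x]$ and divides $x^n - 1$, and the code has length $n = q^m - 1$ over $\gf(q)$.

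For the dimension I would pass to the extension $\overline{\M^t}$, which has the same dimension as $\M^t$ because the overall-parity embedding $c \mapsto (c, -\sum_i c_i)$ is a linear isomorphism onto $\overline{\M^t}$. The stated parameters of $\overline{\M^t}$ give $\dim \overline{\M^t} = |\{i : 0 \le i \le q^m - 1,\ \wt_q(i) \le m(q-1) - t\}|$, and substituting $t = (q-1)m - \ell$ reduces the weight condition to $\wt_q(i) \le \ell$. It then remains to count the integers in $[0, q^m - 1]$ with base-$q$ digit sum at most $\ell$; writing this as $\sum_{i=0}^{\ell} N(i)$ with $N(i)$ the number of digit vectors in $\{0, \dots, q-1\}^m$ of digit sum $i$, and extracting $N(i)$ from the generating function $\bigl((1-z^q)/(1-z)\bigr)^m$ by inclusion--exclusion on the upper-bound constraints, gives $N(i) = \sum_{j=0}^m (-1)^j \binom{m}{j}\binom{i - jq + m - 1}{i - jq}$. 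Summing over $i$ reproduces the stated double sum for $\kappa$.

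For the minimum distance I would use that $\M^t$ is the puncturing of $\overline{\M^t}$ at the parity coordinate. First translate $t = a(q-1)+b$ into $\ell = \ell_1(q-1) + \ell_0$ (with $0 \le \ell_0 < q-1$): one finds $a = m-\ell_1-1,\ b = q-1-\ell_0$ when $\ell_0 \ge 1$, and $a = m-\ell_1,\ b=0$ when $\ell_0 = 0$, so in every case the stated minimum weight $(b+1)q^a$ of $\overline{\M^t}$ equals $(q-\ell_0)q^{m-\ell_1-1}$. Since deleting one coordinate lowers the minimum distance by at most one, $d(\cR_q(\ell,m)^*) \ge (q-\ell_0)q^{m-\ell_1-1}-1$. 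For the reverse inequality I need a minimum-weight word of $\overline{\M^t}$ that is nonzero on the parity coordinate, because deleting that coordinate then yields a word of $\cR_q(\ell, m)^*$ of weight exactly one less. Here I invoke that $\overline{\M^t}$ is the generalised Reed-Muller code of order $\ell$ on $\gf(q)^m$ (the affine-geometry instances $t = r(q-1)$ are Theorem \ref{thm-Feb71}, the general case being classical), whose parity coordinate is the evaluation at the origin; since $\AGL(m,q)$ acts as automorphisms transitively on the $q^m$ coordinates, any nonzero value of a fixed minimum-weight word can be transported onto the origin coordinate, producing the required word. This forces $d = (q-\ell_0)q^{m-\ell_1-1}-1$.

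The cyclicity and dimension are essentially bookkeeping once $\cR_q(\ell,m)^* = \M^t$ is established, so I expect the minimum-distance step to be the crux. Everything there rests on the exact minimum weight of the generalised Reed-Muller codes $\overline{\M^t}$, which I am importing, and on the one genuinely geometric point --- that some minimum-weight codeword does not vanish at the punctured (origin) coordinate --- which must be checked rather than assumed, since a priori puncturing might leave the minimum weight unchanged.
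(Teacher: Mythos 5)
The paper contains no proof of this theorem to compare against: it is quoted directly from \cite{AK98}, so your reconstruction has to be judged on its own. It is essentially correct. The identification $\cR_q(\ell,m)^* = \M^t$ with $t=(q-1)m-\ell$ is immediate from the two defining sets, and cyclicity follows as you say because $\wt_q$ is constant on $q$-cyclotomic cosets. The dimension count is right: $\dim \M^t = \dim \overline{\M^t}$ via the parity embedding, the quoted dimension of $\overline{\M^t}$ becomes $|\{i : 0\le i\le q^m-1,\ \wt_q(i)\le \ell\}|$ after substituting $t=(q-1)m-\ell$, and extracting the digit-sum count from $\bigl((1-z^q)/(1-z)\bigr)^m$ by inclusion--exclusion reproduces the stated $\kappa$ exactly. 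Your minimum-distance argument is also sound: the translation between $t=a(q-1)+b$ and $\ell=\ell_1(q-1)+\ell_0$ is correct in both cases (including $\ell_0=0$), puncturing at the parity coordinate loses at most one from the weight, and the matching upper bound needs precisely the geometric fact you isolate, namely a minimum-weight codeword of $\overline{\M^t}$ that is nonzero at the punctured coordinate. Two remarks on that last step. First, it does rest on the classical identification of $\overline{\M^t}$ with the evaluation-defined generalized Reed--Muller code and its $\AGL(m,q)$-invariance (Kasami--Lin--Peterson, Assmus--Key); this import is genuinely nontrivial, but you flag it honestly, and it is of the same standing as the $\overline{\M^t}$ parameter theorem you are already quoting. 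Alternatively, you could stay inside the paper: by Theorem \ref{thm-DGM70design} the supports of the minimum-weight codewords of $\cR_q(\ell,m)$ form a $2$-design, and every point of a nontrivial $2$-design lies on a block, so some minimum-weight codeword is nonzero at the parity coordinate; this replaces the transitivity argument. Second, a trivial edge case: for $\ell=0$ one has $t=(q-1)m$, i.e.\ $a=m$, which falls outside the range $0\le a\le m-1$ of the quoted theorem on $\overline{\M^t}$; but then $\cR_q(0,m)^*$ is the repetition code of length $q^m-1$, for which all three assertions are checked by hand. With these caveats noted, your derivation is a valid and self-contained route to a result the paper itself only cites.
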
 

The following is also well known in the literature and will be needed later. 

\begin{theorem}\label{thm-puncgrm1}  \cite{Dingbk18} 
It is also know that 
$\cR_q(1, m)^*$ has parameters $[q^m-1, m+1, (q-1)q^{m-1}-1]$ and weight enumerator 
$$ 
1+(q-1)(q^m-1)z^{(q-1)q^{m-1}-1} + (q^m-1)z^{(q-1)q^{m-1}} + (q-1)z^{q^m-1}. 
$$ 
The dual code $(\cR_q(1, m)^*)^\perp$ has parameters $[q^m-1, q^m-m-2, d^\perp]$, where 
$d^\perp = 4$ if $q=2$, and $d^\perp = 3$ if $q \geq 3$. 
\end{theorem}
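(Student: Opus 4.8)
The plan is to establish the parameters of $\cR_q(1,m)^*$ via Theorem~\ref{thm-GPRMcode} with $\ell=1$, then read off the weight enumerator, and finally handle the dual separately. First I would specialise the dimension formula: with $\ell=1$ the outer sum has only the terms $i=0,1$. For $i=0$ only $j=0$ survives (giving $\binom{m-1}{0}=1$), and for $i=1$ only $j=0$ survives (giving $\binom{m}{m-1}=m$), so $\kappa = m+1$, as claimed. For the minimum weight, writing $\ell=1=\ell_1(q-1)+\ell_0$ gives $\ell_1=0$, $\ell_0=1$, whence $d=(q-1)q^{m-1}-1$. This disposes of the length, dimension and minimum distance for $\cR_q(1,m)^*$.

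The main work is the weight enumerator. The cleanest route is to identify $\cR_q(1,m)^*$ explicitly as the punctured first-order Reed--Muller code, i.e.\ the trace/affine-linear code
\begin{eqnarray*}
\left\{ \left( \Tr(a \alpha^i) + c \right)_{i=0}^{q^m-2} : a \in \gf(q^m),\ c \in \gf(q) \right\},
\end{eqnarray*}
whose length is $q^m-1$ and whose dimension is exactly $m+1$ (the $a$-part contributes $m$ coordinates of freedom and the constant $c$ one more, matching $\kappa$). Then I would count codewords by weight directly. A codeword is the restriction to the nonzero points $\alpha^i$ of an affine function $x\mapsto \Tr(ax)+c$ on $\gf(q^m)$. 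For $a=c=0$ we get the zero word. For $a=0$, $c\neq 0$ the word is $c\cdot\bone$, which has full weight $q^m-1$ and there are $q-1$ such words, explaining the term $(q-1)z^{q^m-1}$. For $a\neq 0$, the function $\Tr(ax)$ vanishes on a hyperplane of $\gf(q^m)$ of size $q^{m-1}$; after removing the origin, the number of zero coordinates among the $\alpha^i$ is $q^{m-1}-1$ if $c=0$ (since $0$ lies on the hyperplane), giving weight $(q^m-1)-(q^{m-1}-1)=(q-1)q^{m-1}$, and there are $q^m-1$ such words (one per nonzero $a$), which is the term $(q^m-1)z^{(q-1)q^{m-1}}$. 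For $a\neq 0$ and $c\neq 0$, the affine function $\Tr(ax)+c$ vanishes on an affine hyperplane not through the origin, of size $q^{m-1}$, all of whose points are nonzero, giving $q^{m-1}$ zero coordinates and weight $(q-1)q^{m-1}-1$; there are $(q-1)(q^m-1)$ such pairs $(a,c)$, yielding $(q-1)(q^m-1)z^{(q-1)q^{m-1}-1}$. Summing these four contributions reproduces the stated enumerator and simultaneously reconfirms $d=(q-1)q^{m-1}-1$.

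For the dual, the dimension is forced: $(\cR_q(1,m)^*)^\perp$ has length $q^m-1$ and dimension $(q^m-1)-(m+1)=q^m-m-2$, as asserted. The remaining claim is the value of $d^\perp$. Since $\cR_q(1,m)^*$ is cyclic with defining set $\{\,j : \wt_q(j) < (q-1)m-1\,\}$, its dual is again cyclic and its defining set is determined by the complementary $q$-cyclotomic cosets; the minimum distance can be obtained from a BCH-type / consecutive-zeros argument on the dual, or equivalently by recognising $(\cR_q(1,m)^*)^\perp$ as (a shortening/puncturing of) a Hamming-like code. I would argue that for $q\geq 3$ there exist three columns of a parity-check matrix that are linearly dependent (forcing $d^\perp\leq 3$) while no two are (forcing $d^\perp\geq 3$), giving $d^\perp=3$; and that for $q=2$ the corresponding minimal dependence involves four columns, giving $d^\perp=4$, the binary case being exactly the punctured binary Reed--Muller / shortened Hamming situation.

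The step I expect to be the main obstacle is the dual minimum distance, specifically pinning down why $d^\perp$ jumps from $4$ (for $q=2$) to $3$ (for $q\geq 3$). This hinges on a careful column-dependency analysis of a parity-check matrix of $\cR_q(1,m)^*$ over $\gf(q)$: one must rule out weight-$2$ dual words in all cases and then exhibit a genuine weight-$3$ relation precisely when $q\geq 3$, while showing the smallest relation is weight $4$ when $q=2$. The forward direction of the enumerator computation is routine linear-algebra bookkeeping once the affine-function model is in place; the delicate count is verifying the multiplicities $(q-1)$, $(q^m-1)$, and $(q-1)(q^m-1)$ partition all $q^{m+1}$ codewords, which I would confirm by checking $\sum A_i = q^{m+1}$ as a consistency test.
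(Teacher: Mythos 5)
You should first note that the paper itself offers no proof of this theorem: it is quoted from \cite{Dingbk18} as a known result, so your proposal has to stand on its own rather than be compared with an in-paper argument. Your analysis of the primal code is essentially complete and correct: the specialisation of Theorem \ref{thm-GPRMcode} to $\ell=1$, the affine-function model, the four-way weight count, and the consistency check $\sum_i A_i=q^{m+1}$ are all sound. Two caveats. First, for $q=2$ the decomposition $\ell=\ell_1(q-1)+\ell_0$ with $0\le\ell_0<q-1$ forces $(\ell_1,\ell_0)=(1,0)$, not $(0,1)$; the resulting value $d=2\cdot 2^{m-2}-1=(q-1)q^{m-1}-1$ happens to coincide, but your derivation as written only covers $q\ge 3$. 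Second, identifying the cyclic code defined by the generator polynomial in (\ref{eqn-generatorpolyPGRMcode}) with the trace code $\{(\Tr(a x)+c)_{x\in\gf(q^m)^*}\}$ by a dimension count alone is not a proof of equality; you need the standard trace (Mattson--Solomon/Delsarte) representation, observing that the nonzeros of $\cR_q(1,m)^*$ are $\alpha^0$ together with the cyclotomic coset of the exponents $j$ with $\wt_q(j)=(q-1)m-1$, which is exactly the coset of $-1$.

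The genuine gap is the dual minimum distance, which is the only nontrivial content of the second half of the statement and which you leave at ``I would argue.'' It can be closed directly from your own setup, so let me spell it out. A weight-$w$ word of $(\cR_q(1,m)^*)^\perp$ is a choice of distinct points $x_1,\dots,x_w\in\gf(q^m)^*$ and scalars $\lambda_1,\dots,\lambda_w\in\gf(q)^*$ with $\sum_i\lambda_i(\Tr(ax_i)+c)=0$ for all $(a,c)$; taking $a=0$, $c=1$ gives $\sum_i\lambda_i=0$, and then nondegeneracy of the trace form gives $\sum_i\lambda_i x_i=0$. So dual words of weight $w$ are exactly affine dependencies among $w$ distinct nonzero points of $\gf(q)^m$. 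Weight $2$ is impossible, since $\lambda_2=-\lambda_1$ forces $\lambda_1(x_1-x_2)=0$, so $d^\perp\ge 3$ always. For $q\ge 3$ and $m\ge 2$, take three distinct points $x_i=x_0+t_iv$ on an affine line avoiding the origin ($x_0,v$ linearly independent, $t_1,t_2,t_3$ distinct); then $(t_2-t_3)x_1+(t_3-t_1)x_2+(t_1-t_2)x_3=0$ with nonzero coefficients summing to $0$, giving $d^\perp=3$. For $q=2$ every $\lambda_i$ equals $1$, so a weight-$3$ word would require $1+1+1=0$ in $\gf(2)$, a contradiction, whence $d^\perp\ge 4$; and $\{x,y,z,x+y+z\}$ with $x,y,z$ linearly independent (available since $m\ge 3$ whenever the dual is nontrivial) is a weight-$4$ dependency, giving $d^\perp=4$. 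With these two additions, your proposal becomes a complete, self-contained proof of Theorem \ref{thm-puncgrm1}.
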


%\begin{theorem}\label{thm-dualPGRMcode}
For $0 \leq \ell < m(q-1)$, the code $(\cR_q(\ell, m)^*)^\perp$ is the cyclic code with generator polynomial 
\begin{eqnarray}\label{eqn-dualPGRMcodegenerator}
g^\perp(x) = \sum_{\myatop{0 \leq j \leq n-1}{ \wt_q(j) < \ell}} (x - \alpha^j), 
\end{eqnarray}
where $\alpha$ is a generator of $\gf(q^m)^*$. In addition, 
$$ 
(\cR_q(\ell, m)^*)^\perp = (\gf(q)\bone)^\perp \cap \cR_q(m(q-1)-1-\ell, m)^*, 
$$
where $\bone$ is the all-one vector in $\gf(q)^n$ and $\gf(q)\bone$ denotes the code over $\gf(q)$ with 
length $n$ generated by $\bone$.  
%\end{theorem}

The parameters of the dual of the punctured generalized Reed-Muller code are summarized as follows \cite{AK92}. 
%\begin{corollary} 
For $0 \leq \ell < m(q-1)$, the code $(\cR_q(\ell, m)^*)^\perp$ has length $n=q^m-1$, dimension 
$$ 
\kappa=n-\sum_{i=0}^\ell \sum_{j=0}^{m} (-1)^j \binom{m}{j} \binom{i-jq+m-1}{i-jq},  
$$
and minimum weight 
\begin{eqnarray}\label{eqn-lbmwtdualGPRMcode}
d \geq (q-\ell'_0)q^{m-\ell'_1-1},
\end{eqnarray}
where $m(q-1)-1-\ell=\ell'_1 (q-1)+\ell'_0$ and 
$0 \leq \ell'_0 < q-1$. 
%\end{corollary} 

The generalized Reed-Muller code $\cR_q(\ell, m)$ is defined to be the extended code 
of $\cR_q(\ell, m)^*$, and its parameters are given below \cite{AK98}. 
%\begin{theorem}\label{thm-GRMcodePara}
Let $0 \leq \ell < q(m-1)$. Then the generalized Reed-Muller code $\cR_q(\ell, m)$ has length $n=q^m$, dimension 
$$ 
\kappa=\sum_{i=0}^\ell \sum_{j=0}^{m} (-1)^j \binom{m}{j} \binom{i-jq+m-1}{i-jq},  
$$
and minimum weight 
\begin{eqnarray*}
d = (q-\ell_0)q^{m-\ell_1-1},
\end{eqnarray*}
where $\ell=\ell_1 (q-1)+\ell_0$ and 
$0 \leq \ell_0 < q-1$.  
%\end{theorem} 

The following is a well known result \cite{AK98} and will be needed shortly. 

\begin{theorem}\label{thm-DGM70}
Let $0 \leq \ell < q(m-1)$ and $\ell=\ell_1 (q-1)+\ell_0$, where $0 \leq \ell_0 < q-1$.  
The total number $A_{(q-\ell_0)q^{m-\ell_1-1}}$ of minimum weight codewords in $\cR_q(\ell, m)$ 
is given by 
\begin{eqnarray*}
A_{(q-\ell_0)q^{m-\ell_1-1}} =
(q-1) \frac{q^{\ell_1} (q^{m}-1) (q^{m-1}-1) \cdots (q^{\ell_1+1}-1)}{(q^{m-\ell_1}-1)(q^{m-\ell_1-1}-1) \cdots (q-1)}N_{\ell_0}, 
\end{eqnarray*} 
where 
\begin{eqnarray*}
N_{\ell_0} = \left\{ 
\begin{array}{ll}
1   & \mbox{ if } \ell_0=0, \\ 
\binom{q}{\ell_0} \frac{q^{m-\ell_1}-1}{q-1} & \mbox{ if } 0 < \ell_0 < q-1.  
\end{array} 
\right. 
\end{eqnarray*}
\end{theorem}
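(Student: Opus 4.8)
The plan is to combine the known geometric description of the minimum-weight codewords of $\cR_q(\ell,m)$ with a direct bijective count, organised so that each such codeword is assigned to a unique affine flat and the counting is then done flat by flat. Writing $\ell=\ell_1(q-1)+\ell_0$ with $0\le \ell_0<q-1$, the minimum weight is $d=(q-\ell_0)q^{m-\ell_1-1}$ (as recorded just before the statement), and the goal is to show that the number of codewords of this weight factors as (number of $(m-\ell_1)$-flats) $\times$ (number of minimum-weight codewords supported on a fixed such flat), which will reproduce the middle factor and the factor $(q-1)N_{\ell_0}$ respectively.

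First I would recall the structure of a minimum-weight codeword: up to a nonzero scalar it is the evaluation on $\gf(q)^m$ of a polynomial
$$ \prod_{i=1}^{\ell_1}\bigl(1-L_i(\bx)^{q-1}\bigr)\cdot\prod_{j=1}^{\ell_0}\bigl(L_0(\bx)-\beta_j\bigr), $$
where $L_1,\dots,L_{\ell_1},L_0$ are affine forms with linearly independent linear parts and $\beta_1,\dots,\beta_{\ell_0}$ are distinct elements of $\gf(q)$. The first factor is the indicator of the flat $F=\{L_1=\cdots=L_{\ell_1}=0\}$, an $(m-\ell_1)$-flat of $\AG(m,q)$, and the second factor deletes the $\ell_0$ parallel hyperplanes of $F$ on which $L_0$ equals some $\beta_j$; the support is thus a union of $q-\ell_0$ parallel hyperplanes of $F$, of size $(q-\ell_0)q^{m-\ell_1-1}$. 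The key bookkeeping observation is that $F$ is \emph{uniquely} recoverable from the codeword: since $\ell_0<q-1$ forces $q-\ell_0\ge 2$, the support contains at least two distinct parallel hyperplanes of $F$, whose affine span is all of $F$. Hence the minimum-weight codewords partition according to $F$.

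By Theorem~\ref{thm-AGdesigns} the number of $(m-\ell_1)$-flats of $\AG(m,q)$ is $q^{\ell_1}\left[\myatop{m}{m-\ell_1}\right]_q$; pairing the two Gaussian-coefficient products (the identity $\prod_{i=1}^{m-\ell_1}(q^i-1)\prod_{i=m-\ell_1+1}^m(q^i-1)=\prod_{i=1}^{\ell_1}(q^i-1)\prod_{i=\ell_1+1}^m(q^i-1)$) rewrites this as the middle factor $q^{\ell_1}\prod_{i=\ell_1+1}^m(q^i-1)/\prod_{i=1}^{m-\ell_1}(q^i-1)$ appearing in the statement. It then remains to count the minimum-weight codewords supported on a fixed $F$. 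Identifying $F$ with $\gf(q)^{m-\ell_1}$, such a codeword is constant on each member of some parallel class of hyperplanes of $F$ and vanishes on exactly $\ell_0$ of its $q$ members. I would count by first choosing the parallel class, of which there are $\frac{q^{m-\ell_1}-1}{q-1}$, and then the admissible value patterns on the $q$ hyperplanes: fixing any $L_0$ realising the class, a pattern is exactly $\phi\circ L_0$ with $\phi(t)=c\prod_j(t-\beta_j)$ of degree precisely $\ell_0$ (the degree bound $\ell_1(q-1)+\deg\phi\le\ell$ together with having $\ell_0$ distinct roots forces $\deg\phi=\ell_0$), giving $(q-1)\binom{q}{\ell_0}$ patterns. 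The one point needing verification is that this count does not depend on the choice of $L_0$: replacing $L_0$ by $aL_0+b$ only reparametrises the admissible $\phi$ by an affine substitution, which preserves degree and distinctness of roots. Multiplying gives $\frac{q^{m-\ell_1}-1}{q-1}\cdot(q-1)\binom{q}{\ell_0}=(q-1)N_{\ell_0}$ when $\ell_0>0$, while $\ell_0=0$ degenerates to the $q-1$ scalar multiples of the indicator of $F$, consistent with $N_0=1$. Combining the two factors yields the claimed formula.

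I expect the main obstacle to be the structure theorem for minimum-weight codewords: once it is in hand, the remaining work is a careful but routine count (the only genuine subtleties being the uniqueness of $F$ and the $L_0$-independence above). If one is unwilling to invoke that classification as a black box, the hard part becomes proving that \emph{every} codeword of weight $(q-\ell_0)q^{m-\ell_1-1}$ has a support of the stated shape, i.e.\ that no such codeword can arise from an essentially different low-degree polynomial; this is exactly the content of the Delsarte--Goethals--MacWilliams analysis and would require a separate argument bounding the number of zeros of reduced polynomials of degree $\le\ell$ on $\gf(q)^m$.
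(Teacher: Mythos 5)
There is no internal proof to compare against here: the paper quotes this theorem verbatim from \cite{AK98} (the result is the classical Delsarte--Goethals--MacWilliams count, whence the label) and never proves it, so your proposal is supplying an argument where the paper supplies a citation. As a derivation of the formula from the DGM structure theorem, your count is correct, and the bookkeeping does come out as you claim: by Theorem \ref{thm-AGdesigns} the number of $(m-\ell_1)$-flats of $\AG(m,q)$ is $q^{\ell_1}\left[\myatop{m}{m-\ell_1}\right]_q$, which is exactly the middle factor in the statement, and your per-flat count $\frac{q^{m-\ell_1}-1}{q-1}\cdot(q-1)\binom{q}{\ell_0}$ for $0<\ell_0<q-1$ (degenerating to the $q-1$ scalar multiples of the indicator of $F$ when $\ell_0=0$) is exactly $(q-1)N_{\ell_0}$. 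One point you leave implicit but need for injectivity of the count: besides the uniqueness of $F$ (which you do argue via $q-\ell_0\ge 2$), you also need uniqueness of the parallel class inside $F$ when $\ell_0>0$; this holds because the support is then a \emph{proper} nonempty union of cosets of a hyperplane direction $H$ of $F$, and a set that is simultaneously a union of cosets of two distinct hyperplane directions $H\neq H'$ is a union of cosets of $H+H'$, hence empty or all of $F$. The honest caveat you state at the end is the right one: all of the genuine difficulty lives in the classification of minimum-weight codewords as products $c\prod_{i=1}^{\ell_1}\bigl(1-L_i(\bx)^{q-1}\bigr)\prod_{j=1}^{\ell_0}\bigl(L_0(\bx)-\beta_j\bigr)$, which your proposal consumes as a black box and does not prove. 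Since the paper itself takes the entire statement on citation, invoking that classification is a defensible parity of rigor, but your text should cite it precisely (Delsarte--Goethals--MacWilliams, or \cite[Section 5.4]{AK98}) rather than leave it as folklore; with that reference in place, the rest of your argument is a complete and correct reduction of the stated formula to that structure theorem.
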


The generalized Reed-Muller codes $\cR_q(\ell, m)$ can also be defined with a multivariate polynomial approach. 
The reader is referred to \cite[Section 5.4]{AK98} for details. For $\ell < (q-1)m$, it was shown in \cite{AK98} that  
$$ 
\cR_q(\ell, m)^\perp = \cR_q(m(q-1)-1-\ell, m). 
$$

The general affine group $\GA_1(\gf(q))$ is defined by 
$$ 
\GA_1(\gf(q))=\{ax+b: a \in \gf(q)^*, \ b \in \gf(q) \}, 
$$ 
which acts on $\gf(q)$ doubly transitively \cite[Section 1.7]{Dingbk18}.  A linear code $\C$ 
of length $q$ is said to be affine-invariant if $\GA_1(\gf(q))$ fixes $\C$ \cite{Charpin90}. For affine-invariant 
codes we use the elements of $\gf(q)$ to index the coordinates of their codewords. 

Let $\ell$ be a positive integer with $1 \leq \ell <(q-1)m$, and let $q$ be a prime. Then $\cR_q(\ell, m)$ 
is affine-invariant, and the automorphism group $\Aut(\cR_q(\ell, m))$ is doubly transitive. These are well 
known facts about the generalized Reed-Muller codes $\cR_q(\ell, m)$. The results in the next two theorems 
are also well known (see \cite{AK92} or \cite{Dingbk18}) and follow from Theorems \ref{thm-designCodeAutm} 
and \ref{thm-DGM70}.

\begin{theorem}\label{thm-gRMcodeExt2Desi} 
Let $\ell$ be a positive integer with $1 \leq \ell <(q-1)m$. 
Then the supports of the codewords of weight $i >0$ in 
$\cR_q(\ell, m)$ form a $2$-design, provided that $A_i \neq 0$. 
\end{theorem}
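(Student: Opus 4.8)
The plan is to derive this statement as a direct consequence of the automorphism-group criterion in Theorem \ref{thm-designCodeAutm}, taking $t = 2$. The single nontrivial input is that $\GAut(\cR_q(\ell, m))$ acts doubly transitively on the coordinate set of the code: this is exactly the fact recorded immediately above the statement, namely that for $1 \leq \ell < (q-1)m$ the code $\cR_q(\ell, m)$ is affine-invariant and its automorphism group is $2$-transitive (the affine group acting on the $q^m$ coordinates is already $2$-transitive). Granting this, Theorem \ref{thm-designCodeAutm} yields at once that the codewords of any fixed weight $i \geq 2$ of $\cR_q(\ell, m)$ hold a $2$-design, so the supports of all weight-$i$ codewords are the blocks of a $2$-$(q^m, i, \lambda)$ design for a suitable $\lambda$.

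To reconcile this with the stated hypothesis ``$i > 0$ and $A_i \neq 0$'', I would verify that $\cR_q(\ell, m)$ has no codeword of weight $1$, so that every admissible weight automatically satisfies $i \geq 2 = t$. This follows from the minimum-weight formula $d = (q - \ell_0) q^{m - \ell_1 - 1}$, where $\ell = \ell_1(q-1) + \ell_0$ with $0 \leq \ell_0 < q - 1$: since $\ell_0 \leq q - 2$ we have $q - \ell_0 \geq 2$, whence $d \geq 2$ and therefore $A_1 = 0$. Consequently any $i$ with $i > 0$ and $A_i \neq 0$ obeys $i \geq d \geq 2$, and Theorem \ref{thm-designCodeAutm} applies to it, giving the claimed $2$-design.

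The main (and essentially only) obstacle is the double transitivity of the automorphism group; but this is a classical property of the generalized Reed-Muller codes, already invoked in the sentence preceding the theorem, so in this context it is legitimately cited rather than reproved. I would also remark that Theorem \ref{thm-DGM70} is not needed for the present qualitative claim — $2$-transitivity alone forces the $2$-design property for every occurring weight — and that it enters only when one wishes to pin down the explicit parameters $(v, k, \lambda)$ of these designs, in particular the replication index $\lambda$ of the minimum-weight design, which is read off from the codeword count supplied there.
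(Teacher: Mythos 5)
Your proposal is correct and takes essentially the same route as the paper: the paper also obtains this theorem by invoking the double transitivity of $\Aut(\cR_q(\ell, m))$ (affine-invariance) together with Theorem \ref{thm-designCodeAutm} with $t=2$, and, as you observe, Theorem \ref{thm-DGM70} enters only in the subsequent theorem where the explicit design parameters are computed. Your verification that $d \geq 2$ (so every $i>0$ with $A_i \neq 0$ automatically satisfies $i \geq 2$) is a detail the paper leaves implicit.
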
 

\begin{theorem}\label{thm-DGM70design}
Let $0 \leq \ell < q(m-1)$ and $\ell=\ell_1 (q-1)+\ell_0$, where $0 \leq \ell_0 < q-1$. 
The supports of minimum weight codewords in $\cR_q(\ell, m)$ form a $2$-$(q^m, (q-\ell_0)q^{m-\ell_1-1}, \lambda)$ design, where 
$$ 
\lambda= \frac{A_{(q-\ell_0)q^{m-\ell_1-1}}}{q-1} \frac{\binom{(q-\ell_0)q^{m-\ell_1-1}}{2}}{\binom{q^m}{2}}
$$ 
and $A_{(q-\ell_0)q^{m-\ell_1-1}}$ was given in Theorem \ref{thm-DGM70}. 
\end{theorem}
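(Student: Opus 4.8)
The plan is to read off the block parameters from the known structure of $\cR_q(\ell, m)$, obtain the $2$-design property for free from the automorphism machinery already recorded, and then pin down $\lambda$ by a block count fed into the standard incidence identity. First I would fix the ambient data: the coordinates of $\cR_q(\ell, m)$ are indexed by the $v = q^m$ points of the affine space, and the minimum distance is $d = (q-\ell_0)q^{m-\ell_1-1}$ as stated just above Theorem \ref{thm-DGM70}, so every block under consideration has size $k = (q-\ell_0)q^{m-\ell_1-1}$. Since this is the minimum weight we have $A_d \neq 0$, so Theorem \ref{thm-gRMcodeExt2Desi} applies directly with $i = d$ and tells us immediately that the supports of the minimum weight codewords form a $2$-design. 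This already delivers a $2$-$(q^m, (q-\ell_0)q^{m-\ell_1-1}, \lambda)$ design, and the only remaining work is to evaluate $\lambda$.

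To determine $\lambda$ I would count the number $b$ of blocks, i.e. the number of \emph{distinct} supports occurring among minimum weight codewords, and then invoke the elementary double count valid in any $2$-design: counting in two ways the incidences between unordered point-pairs and blocks containing them gives $\lambda \binom{v}{2} = b \binom{k}{2}$, hence $\lambda = b\binom{k}{2}/\binom{v}{2}$. For the block count, observe that for a minimum weight codeword $\bc$ every nonzero scalar multiple $a\bc$ with $a \in \gf(q)^*$ is again a minimum weight codeword with the same support $\support(\bc)$, producing $q-1$ codewords on each block. Granting that these are the \emph{only} minimum weight codewords with that support (so that two minimum weight codewords sharing a support are necessarily proportional), the $A_d$ minimum weight codewords partition into $A_d/(q-1)$ proportionality classes, one per block, whence $b = A_{(q-\ell_0)q^{m-\ell_1-1}}/(q-1)$. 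Substituting this into $\lambda = b\binom{k}{2}/\binom{v}{2}$ reproduces exactly the displayed expression.

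The main obstacle is precisely the proportionality claim underlying $b = A_d/(q-1)$: that minimum weight codewords with a common support must be scalar multiples of one another. This is the one genuinely structural ingredient and is not formal; it rests on the classical Delsarte--Goethals--MacWilliams description of the minimum weight codewords of generalized Reed--Muller codes (cf.\ \cite{AK98}), which identifies these codewords with scalar multiples of incidence vectors of well-understood affine configurations and shows that inequivalent configurations yield distinct supports. Everything surrounding this step is routine: the $2$-design conclusion is a direct citation of Theorem \ref{thm-gRMcodeExt2Desi}, the value $A_d$ is supplied verbatim by Theorem \ref{thm-DGM70}, and the passage from $b$ to $\lambda$ is a one-line counting identity.
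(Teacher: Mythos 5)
Your proposal is correct and takes essentially the same route as the paper, which gives no standalone proof but states that the result is well known and follows from Theorem \ref{thm-designCodeAutm} (double transitivity of the automorphism group, packaged as Theorem \ref{thm-gRMcodeExt2Desi}) combined with the codeword count $A_{(q-\ell_0)q^{m-\ell_1-1}}$ of Theorem \ref{thm-DGM70}. Your write-up additionally makes explicit the step the paper leaves implicit: that minimum weight codewords sharing a support must be scalar multiples of one another (the Delsarte--Goethals--MacWilliams characterization cited from \cite{AK98}), so the number of blocks is $A_{(q-\ell_0)q^{m-\ell_1-1}}/(q-1)$, after which the displayed $\lambda$ is the standard pair--block double count.
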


Note that $\cR_q(\ell, m)$ does not hold $3$-designs when $q>2$. 
It is known that 
$\cR_q(1, m)$ has parameters $[q^m, 1+m, (q-1)q^{m-1}]$ and weight enumerator  
\begin{eqnarray}\label{eqn-wtenumerator1stRMcode}
1+q(q^m-1)z^{(q-1)q^{m-1}} +(q-1)z^{q^m}.
\end{eqnarray} 
Furthermore, the supports of all minimum weight codewords in $\cR_q(1, m)$ form a 
$2$-$(q^m, (q-1)q^{m-1}, (q-1)q^{m-1}-1)$ design \cite{Dingbk18}.

We are now ready to present another result of this paper in the following theorem.

\begin{theorem}\label{thm-major1st} 
Let $\bD_{(q-1)q^{m-1}}(\cR_q(1, m))$ denote the $2$-design formed by the codewords of weight $(q-1)q^{m-1}$ in 
$\cR_q(1, m)$. Then $\C_p(\bD_{(q-1)q^{m-1}}(\cR_q(1, m)))$ has parameters 
$$ 
\left[q^m, \ \binom{p+m-1}{m}^s, \ q^{m-1}   \right],  
$$ 
where $q=p^s$. 
\end{theorem}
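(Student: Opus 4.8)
The plan is to identify $\C_p(\bD)$, where $\bD := \bD_{(q-1)q^{m-1}}(\cR_q(1,m))$, with the code $\C_p(\AG_{m-1}(m,q))$ of the affine hyperplane design, whose parameters are already recorded in Corollary~\ref{cor-feb121}. The first step is to pin down $\bD$ and its complement geometrically. A codeword of $\cR_q(1,m)$ of weight $(q-1)q^{m-1}$ is the evaluation vector of a nonconstant affine function $x \mapsto \langle a, x\rangle + b$ with $a \neq \bzero$; its support is the complement in $\gf(q)^m$ of the affine hyperplane $\{x : \langle a,x\rangle = -b\}$. Hence every block of $\bD$ is the complement of a hyperplane, and consequently the complementary design $\bD^c$ is exactly the geometric design $\AG_{m-1}(m,q)$ of points and $(m-1)$-flats (Theorem~\ref{thm-AGdesigns}).

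The second step is to locate the all-one vector $\bone$ relative to the two codes $\C_p(\bD)$ and $\C_p(\bD^c)$, so that Theorem~\ref{thm-nyear19} can be applied over $\gf(p)$. I would apply Theorem~\ref{thm-june81} to each design. For $\bD$, which is a $2$-$(q^m, (q-1)q^{m-1}, (q-1)q^{m-1}-1)$ design, the parameter $\lambda_1$ equals the replication number $q^m-1$, so $\lambda_1 \equiv -1 \not\equiv 0 \pmod p$ and therefore $\bone \in \C_p(\bD)$. For $\bD^c = \AG_{m-1}(m,q)$ one computes $\lambda_1 = (q^m-1)/(q-1) \equiv 1 \not\equiv 0 \pmod p$, so $\bone \in \C_p(\bD^c)$ as well. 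Both memberships descend from $\gf(q)$ to $\gf(p)$ because summing the $0/1$ rows of an incidence matrix over $\gf(p)$ already yields a nonzero scalar multiple of $\bone$.

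With $\bone \in \C_p(\bD) \cap \C_p(\bD^c)$, the third bullet of Theorem~\ref{thm-nyear19} forces $\C_p(\bD) = \C_p(\bD^c) = \C_p(\AG_{m-1}(m,q))$. The length $q^m$, the dimension $\binom{p+m-1}{m}^s = \binom{m+p-1}{m}^s$, and the minimum distance $q^{m-1}$ then follow directly from Corollary~\ref{cor-feb121} together with Theorem~\ref{thm-Cp-Cq}, which transfers the $\gf(q)$-dimension and the minimum distance of $\C_q(\AG_{m-1}(m,q))$ to its subfield subcode $\C_p(\AG_{m-1}(m,q))$.

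The one step that requires genuine care is the claim $\bone \in \C_p(\bD)$, since the analogous Lemma~\ref{lem-feb141} for the Simplex code reaches the opposite conclusion. The distinction is that here $v = q^m \equiv 0 \pmod p$, so $\bone$ is self-orthogonal and can lie simultaneously in $\C_p(\bD)$ and in $\C_p(\bD)^\perp$; the self-orthogonality argument that excluded $\bone$ in the projective setting, where $v = (q^m-1)/(q-1) \equiv 1 \pmod p$, simply does not apply. This is precisely why $\C_p(\bD)$ does not drop a dimension below $\C_p(\bD^c)$ as it did in Theorem~\ref{thm-majorthm2}, but instead coincides with it; everything else in the argument is routine bookkeeping.
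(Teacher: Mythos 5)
Your proof is correct and takes essentially the same route as the paper: identify the complementary design of $\bD_{(q-1)q^{m-1}}(\cR_q(1,m))$ as the hyperplane design $\AG_{m-1}(m,q)$, show $\bone$ lies in both $\C_p(\bD)$ and $\C_p(\bD^c)$ via the replication numbers $q^m-1 \equiv -1$ and $(q^m-1)/(q-1) \equiv 1 \pmod p$, invoke the third case of Theorem \ref{thm-nyear19} to conclude the two codes coincide, and read off the parameters from Corollary \ref{cor-feb121}. The only cosmetic differences are that you package the replication-number computation as an application of Theorem \ref{thm-june81} rather than redoing it by hand, and that you explicitly cite Theorem \ref{thm-Cp-Cq} to pass from the $\gf(q)$-code of Corollary \ref{cor-feb121} to the $\gf(p)$-code in the statement, a step the paper leaves implicit.
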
  

\begin{proof}
Note that each codeword of weight $(q-1)q^{m-1}$ in $\cR_q(1, m)$ can be written as 
$$ 
\bc_{(a,b)}=(\tr(ax)+b)_{x \in \gf(q^m)}, \ a \in \gf(q^m)^*, \ b \in \gf(q).  
$$ 
We index the coordinates of the code $\cR_q(1, m)$ with the elements of $\gf(q^m)$. 
Then the support of the codeword $\bc_{(a,b)}$ is given by 
$$ 
\support(\bc_{(a,b)})=\{x \in \gf(q^m): \tr(ax)+b \neq 0\}. 
$$ 
The complement of $\support(\bc_{(a,b)})$ with respect to $\gf(q^m)$ is given by 
$$ 
\support(\bc_{(a,b)})^c=\{x \in \gf(q^m): \tr(ax)+b = 0\},  
$$ 
which is an $(m-1)$-flat in $\gf(q^m)$ when $\gf(q^m)$ is viewed as an $m$-dimensional 
vector space over $\gf(q)$. Consequently, the complementary design $\bD_{(q-1)q^{m-1}}(\cR_q(1, m))^c$ of $\bD_{(q-1)q^{m-1}}(\cR_q(1, m))$ is 
the design $\AG_{m-1}(m, q)$ of points and $(m-1)$-flats of the affine geometry $\AG(m, q)$. 

We now prove that the all-one vector $\bone$ is a codeword in both $\C_p(\bD_{(q-1)q^{m-1}}(\cR_q(1, m)))$ and 
$\C_p(\AG_{m-1}(m, q))$. 
It is well known that the design $\AG_{m-1}(m, q)$ has parameters 
$$ 
2\mbox{ --}\left(q^m, \ q^{m-1}, \ \frac{q^{m-1}-1}{q-1}   \right).  
$$  
Therefore each point is incident with the following number of blocks: 
$$ 
\lambda^c_1= \frac{q^{m-1}-1}{q-1} \frac{\binom{q^m-1}{1}}{\binom{q^{m-1}-1}{1}}=\frac{q^m-1}{q-1}. 
$$ 
It then follows that the sum over $\gf(q)$ of the row vectors of the incidence matrix of 
the design $\AG_{m-1}(m, q)$ is 
$$ 
(\lambda^c_1, \lambda^c_1, ..., \lambda^c_1)=(1, 1, \cdots, 1)=\bone, 
$$ 
which is a codeword in $\C_p(\AG_{m-1}(m, q))$. 

Since $\bD_{(q-1)q^{m-1}}(\cR_q(1, m))$ is a $2$-$(q^m, q^m-q^{m-1}, (q-1)q^{m-1}-1)$ design, every point of 
the design $\bD_{(q-1)q^{m-1}}(\cR_q(1, m))$ is incident 
with the following number of blocks: 
$$ 
\lambda_1=\left((q-1)q^{m-1}-1\right) \frac{\binom{q^m-1}{1}}{\binom{q^m-q^{m-1}-1}{1}}
=q^m-1 
$$ 
We then deduce that the sum over $\gf(q)$ of the row vectors of the incidence matrix of 
the design 
$\bD_{(q-1)q^{m-1}}(\cR_q(1, m))$ is 
$$ 
(\lambda_1, \lambda_1, ..., \lambda_1)=(-1, -1, \cdots, -1)=-\bone, 
$$ 
which is a codeword in $\C_p(\bD_{(q-1)q^{m-1}}(\cR_q(1, m)))$. Consequently, $\bone \in \C_p(\bD_{(q-1)q^{m-1}}(\cR_q(1, m)))$. 

It then follows from Theorem \ref{thm-nyear19} that $\C_p(\bD_{(q-1)q^{m-1}}(\cR_q(1, m)))$ is equal to $\C_p(\AG_{m-1}(m, q))$. The desired conclusion then follows from Corollary 
\ref{cor-feb121}.  
\end{proof}

When $q=2$, it is easily seen that $\C_p(\bD_{(q-1)q^{m-1}}(\cR_q(1, m)))$ equals $\cR_q(1, m)$. However, the two 
codes are very different if $q>2$. This is obvious from the dimensions of the two codes. 
Note that the design $\bD_{(q-1)q^{m-1}}(\cR_q(1, m))$ in Theorem \ref{thm-major1st} is not a geometric design. But 
its code over $\gf(q)$ is the same as the code over $\gf(q)$ of the geometric design 
$\AG_{m-1}(m, q)$. Our contribution is mainly to prove this fact.

\begin{theorem}\label{thm-major1stdual} 
Let $\bD_{(q-1)q^{m-1}}(\cR_q(1, m))$ denote the $2$-design formed by the codewords of weight $(q-1)q^{m-1}$ in 
$\cR_q(1, m)$. Then $\C_p(\bD_{(q-1)q^{m-1}}(\cR_q(1, m)))^\perp$ has parameters 
$$ 
\left[q^m, \ q^m-\binom{p+m-1}{m}^s, \ d^\perp   \right],  
$$ 
where $q=p^s$, $d^\perp \geq q+2$ if $s>1$ and $d^\perp=2p$ if $s=1$. 
\end{theorem}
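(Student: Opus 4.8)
The plan is to reduce everything to the $p$-ary code of the complementary geometric design and then bound its dual by two different tools according to whether $q$ is prime.

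First I would record the dimension and set up the reduction. By Theorem~\ref{thm-major1st} the code $\C_p(\bD_{(q-1)q^{m-1}}(\cR_q(1,m)))$ coincides with $\C_p(\AG_{m-1}(m,q))$ and has dimension $\binom{p+m-1}{m}^s$; hence its dual has length $q^m$ and dimension $q^m-\binom{p+m-1}{m}^s$, as claimed, and moreover
$$
\C_p(\bD_{(q-1)q^{m-1}}(\cR_q(1,m)))^\perp = \C_p(\AG_{m-1}(m,q))^\perp .
$$
The value of this identity is that it lets me choose, for the distance estimate, whichever of the two mutually complementary designs is more convenient.

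For the lower bound $d^\perp \ge q+2$ (all one can expect when $s>1$), the key observation is that Theorem~\ref{thm-AKp54} should be applied to the geometric design $\AG_{m-1}(m,q)$ and \emph{not} to $\bD_{(q-1)q^{m-1}}(\cR_q(1,m))$: the two designs share the same $p$-ary code, hence the same dual, but the block size of $\AG_{m-1}(m,q)$ is the small value $k=q^{m-1}$, whereas that of $\bD_{(q-1)q^{m-1}}(\cR_q(1,m))$ is the large value $(q-1)q^{m-1}$, and Theorem~\ref{thm-AKp54} is strongest for small $k$. Since $\binom{p+m-1}{m}^s<q^m$ (for $m\ge 2$), the code $\C_p(\AG_{m-1}(m,q))$ is a proper subcode of $\gf(p)^{q^m}$, so Theorem~\ref{thm-AKp54} applies with $v=q^m$, $k=q^{m-1}$ and gives
$$
d^\perp \ge \frac{q^m-1}{q^{m-1}-1}+1 = q+1+\frac{q-1}{q^{m-1}-1}.
$$
Because $d^\perp$ is an integer and the right-hand side exceeds $q+1$ (it equals $q+2$ when $m=2$ and lies strictly between $q+1$ and $q+2$ when $m\ge 3$), I conclude $d^\perp\ge q+2$.

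For the exact value when $s=1$ (so $q=p$), I would pass to the generalized Reed-Muller description. Using Theorem~\ref{thm-Feb71} together with the generator-polynomial/$q$-weight description of $\cR_q(\ell,m)^*$, one identifies $\C_q(\AG_r(m,q))=\overline{\M^{r(q-1)}}=\cR_q((q-1)(m-r),m)$, so that $r=m-1$ gives $\C_p(\AG_{m-1}(m,p))=\cR_p(p-1,m)$. The GRM duality $\cR_q(\ell,m)^\perp=\cR_q(m(q-1)-1-\ell,m)$ then yields
$$
\C_p(\bD_{(q-1)q^{m-1}}(\cR_q(1,m)))^\perp = \cR_p\bigl((m-1)(p-1)-1,\,m\bigr).
$$
Writing $(m-1)(p-1)-1=(m-2)(p-1)+(p-2)$, i.e. $\ell_1=m-2$ and $\ell_0=p-2$, the minimum-weight formula $d=(q-\ell_0)q^{m-\ell_1-1}$ evaluates to $(p-(p-2))p^{m-(m-2)-1}=2p$, so $d^\perp=2p$. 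The main obstacle is conceptual rather than computational: one must notice that the sharp distance bound comes from applying Theorem~\ref{thm-AKp54} to the complementary geometric design, since applying it directly to $\bD_{(q-1)q^{m-1}}(\cR_q(1,m))$ with its large block size only produces the useless bound $d^\perp\ge 3$. No exact value is claimed for $s>1$ precisely because the dual is then a trace code (Theorem~\ref{thm-feb161}), whose true minimum distance is not accessible by these elementary tools.
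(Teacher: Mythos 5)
Your proof is correct and follows essentially the same route as the paper: the dimension is read off from Theorem \ref{thm-major1st}, the bound $d^\perp \geq q+2$ comes from applying Theorem \ref{thm-AKp54} to the complementary geometric design $\AG_{m-1}(m,q)$ (whose $p$-ary code equals $\C_p(\bD_{(q-1)q^{m-1}}(\cR_q(1, m)))$), and the exact value for $s=1$ comes from generalized Reed--Muller theory. The only difference is one of packaging: where the paper simply cites Theorem 5.7.9 of \cite{AK98} for $d^\perp = 2p$, you derive it explicitly from Theorem \ref{thm-Feb71}, the duality $\cR_q(\ell,m)^\perp=\cR_q(m(q-1)-1-\ell,m)$ and the GRM minimum-weight formula, and you also spell out the hypothesis check and integer-rounding step that the paper leaves implicit --- a correct, self-contained unpacking of the same argument.
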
  

\begin{proof} 
The dimension of the code $\C_p(\bD_{(q-1)q^{m-1}}(\cR_q(1, m)))^\perp$ follows from Theorem \ref{thm-major1st}. 
Recall that $\C_p(\bD_{(q-1)q^{m-1}}(\cR_q(1, m)))=\C_p(\AG_{m-1}(m, q))$. It then follows from Theorem \ref{thm-AKp54} 
that $d^\perp \geq q+2$. If $s=1$, it then follows from Theorem 5.7.9 in \cite{AK98} that 
$d^\perp=2p$.   
\end{proof}

Notice that $\cR^*_q(1, m)$ is a cyclic code and invariant under the general linear group 
$\GL_m(q)$, which is transitive on $\gf(q^m)^*$. By Theorem \ref{thm-puncgrm1}, $\cR^*_q(1, m))$ 
is a three-weight code. Hence, $\cR^*_q(1, m))$ holds two $1$-designs. One of them is 
$\bD_{(q-1)q^{m-1}-1}(\cR^*_q(1, m))$ with parameters 
$$ 
1 \mbox{--} \left(q^m-1, \ (q-1)q^{m-1}-1, \ (q-1)q^{m-1}-1 \right). 
$$  
The other is the design $\bD_{(q-1)q^{m-1}}(\cR^*_q(1, m))$ with parameters 
$$ 
1 \mbox{--} \left(q^m-1, \ (q-1)q^{m-1}, \ q^{m-1} \right). 
$$   
By definition, $\C_p(\bD_{(q-1)q^{m-1}-1}(\cR^*_q(1, m)))$ is a punctured code of 
$\C_p(\bD_{(q-1)q^{m-1}}(\cR_q(1, m)))$. The following result then follows from 
Theorem \ref{thm-major1st}.     

\begin{theorem}\label{thm-major1st2} 
Let $\bD_{(q-1)q^{m-1}-1}(\cR^*_q(1, m))$ denote the $1$-design formed by the codewords of weight $(q-1)q^{m-1}-1$ in 
$\cR^*_q(1, m)$. Then $\C_p(\bD_{(q-1)q^{m-1}-1}(\cR^*_q(1, m)))$ has parameters 
$$ 
\left[q^m-1, \ \binom{p+m-1}{m}^s, \ q^{m-1}-1   \right],  
$$ 
where $q=p^s$. 
\end{theorem}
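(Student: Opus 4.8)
The plan is to realise $\C_p(\bD_{(q-1)q^{m-1}-1}(\cR^*_q(1, m)))$ as the code obtained by puncturing $\C := \C_p(\bD_{(q-1)q^{m-1}}(\cR_q(1, m)))$ at the single coordinate indexed by $0 \in \gf(q^m)$, and then to read off its parameters from those already found in Theorem \ref{thm-major1st}. Recall from the proof of that theorem that $\C = \C_p(\AG_{m-1}(m, q))$, so by Corollary \ref{cor-feb121} it has parameters $[q^m, \binom{p+m-1}{m}^s, q^{m-1}]$, and the blocks of the underlying $2$-design are the supports $T_{a,b} = \{x \in \gf(q^m): \tr(ax)+b \neq 0\}$ with $a \neq 0$, $b \in \gf(q)$. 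First I would record the block correspondence: a minimum weight codeword of $\cR^*_q(1,m)$ is exactly the restriction to $\gf(q^m)^*$ of some $\bc_{(a,b)}$ with $a \neq 0$ and $b \neq 0$, since such a codeword takes the value $b \neq 0$ at $x=0$ and therefore loses one nonzero coordinate when the coordinate $0$ is deleted, its weight dropping from $(q-1)q^{m-1}$ to $(q-1)q^{m-1}-1$. Hence the blocks of the $1$-design are precisely the sets $S_{a,b} = T_{a,b}\setminus\{0\}$ with $b \neq 0$.

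Write $\C|_{x\neq 0}$ for the puncturing of $\C$ at the coordinate $0$. The characteristic vectors $\chi_{S_{a,b}}$ ($b\neq 0$) generating $\C^*:=\C_p(\bD_{(q-1)q^{m-1}-1}(\cR^*_q(1, m)))$ are the restrictions of the generators $\chi_{T_{a,b}}$ of $\C$, so the inclusion $\C^* \subseteq \C|_{x\neq 0}$ is immediate. The reverse inclusion is the point that needs work, since $\C$ is also generated by the $b=0$ blocks $\chi_{T_{a,0}}$, and I must show that their restrictions already lie in $\C^*$. To this end I would first prove $\bone \in \C^*$: the $1$-design has replication number $\lambda_1 = (q-1)q^{m-1}-1 \equiv -1 \pmod p$ (as $q^{m-1}\equiv 0 \pmod p$ for $m\ge 2$), so summing the rows of its incidence matrix produces $\lambda_1\bone = -\bone$, exactly as in the argument of Theorem \ref{thm-june81}. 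Next, for fixed $a\neq 0$ one has $\sum_{b\in\gf(q)}\chi_{T_{a,b}} = (q-1)\bone = -\bone$ over $\gf(p)$, because every point lies in exactly $q-1$ of the $q$ sets $T_{a,b}$; restricting to $\gf(q^m)^*$ gives $\chi_{T_{a,0}}|_{x\neq 0} = -\bone - \sum_{b\neq 0}\chi_{S_{a,b}} \in \C^*$. Thus $\C|_{x\neq 0}\subseteq \C^*$, and the two codes coincide.

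With $\C^* = \C|_{x\neq 0}$ in hand the parameters fall out. Since $\C$ has minimum distance $q^{m-1}\ge 2$ it contains no weight-one codeword, so puncturing at one coordinate cannot lower the dimension; hence $\dim(\C^*) = \dim(\C) = \binom{p+m-1}{m}^s$. For the minimum distance, every nonzero codeword of $\C$ has weight at least $q^{m-1}$, so any nonzero punctured codeword has weight at least $q^{m-1}-1$; and because a linear $(m-1)$-flat through $0$ is a block of $\AG_{m-1}(m, q)$, its characteristic vector lies in $\C$, has weight $q^{m-1}$, and is nonzero at the coordinate $0$, so its puncturing has weight exactly $q^{m-1}-1$. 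Therefore $d(\C^*) = q^{m-1}-1$, completing the argument.

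I expect the main obstacle to be the equality $\C^* = \C|_{x\neq 0}$ rather than the easy inclusion: the code of the $1$-design is a priori generated only by the $b\neq 0$ blocks, and proving that the $b=0$ blocks become redundant after puncturing is precisely what forces the dimension up to $\binom{p+m-1}{m}^s$ (absent this one only gets $\dim(\C^*)\le \binom{p+m-1}{m}^s$). The key to clearing this hurdle is the observation $\bone \in \C^*$, which comes cheaply from the replication number being $\equiv -1 \pmod p$.
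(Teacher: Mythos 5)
Your proof is correct, and at the top level it is the same proof as the paper's: identify $\C_p(\bD_{(q-1)q^{m-1}-1}(\cR^*_q(1,m)))$ with the puncturing at the coordinate $x=0$ of the code $\C_p(\bD_{(q-1)q^{m-1}}(\cR_q(1,m)))=\C_p(\AG_{m-1}(m,q))$ and then quote Theorem \ref{thm-major1st}. The difference is one of completeness: the paper compresses the identification into the single assertion that the design code ``is by definition a punctured code'' of the code in Theorem \ref{thm-major1st}, whereas you correctly recognised that this is a claim requiring proof, not a definition --- the $1$-design furnishes only the punctured blocks $T_{a,b}\setminus\{0\}$ with $b\neq 0$, so a priori one obtains only a subcode of the punctured code, and the restrictions of the $b=0$ blocks must be shown to be redundant. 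Your argument for this redundancy --- $\bone$ lies in the design code because the replication number $(q-1)q^{m-1}-1\equiv -1\pmod{p}$, and then $\chi_{T_{a,0}}|_{x\neq 0}=-\bone-\sum_{b\neq 0}\chi_{T_{a,b}\setminus\{0\}}$ over $\gf(p)$ since each point lies in exactly $q-1$ of the $q$ sets $T_{a,b}$ --- is valid, as are your two closing steps: the dimension survives puncturing because $d(\C)=q^{m-1}\geq 2$, and the minimum distance $q^{m-1}-1$ is attained by puncturing the characteristic vector of a linear hyperplane through $0$, a verification the paper also omits. In short, you have supplied, correctly, exactly the details that the paper's two-sentence proof leaves to the reader.
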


\begin{theorem}\label{thm-major1st3} 
Let $\bD_{(q-1)q^{m-1}}(\cR^*_q(1, m))$ denote the $1$-design formed by the codewords of weight $(q-1)q^{m-1}$ in 
$\cR^*_q(1, m)$. Then $\C_p(\bD_{(q-1)q^{m-1}}(\cR^*_q(1, m)))$ has parameters 
$$ 
\left[q^m-1, \ \binom{p+m-2}{m-1}^s, \ d   \right],  
$$ 
where $q=p^s$, $d=(q-1)d(\C_p(\bD))\geq q^{m-1}-1$, $\C_p(\bD)$ is the code of Theorem \ref{thm-majorthm2}, and $d(\C_p(\bD))$ denotes the minimum distance of the code $\C_p(\bD)$. 
\end{theorem}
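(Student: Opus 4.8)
The plan is to identify $\C_p(\bD_{(q-1)q^{m-1}}(\cR^*_q(1, m)))$ as a coordinate-repeated copy of the code $\C_p(\bD)$ of Theorem \ref{thm-majorthm2}, and then to read off the length, the dimension and the minimum distance from the elementary properties of the repetition map. Throughout put $v=(q^m-1)/(q-1)$, so that the length $q^m-1$ factors as $(q-1)v$.

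First I would pin down the blocks. As in the proof of Theorem \ref{thm-major1st}, index the coordinates of $\cR^*_q(1,m)$ by the nonzero elements $\alpha^0,\alpha^1,\dots,\alpha^{q^m-2}$ of $\gf(q^m)^*$. Every codeword is $(\tr(ax)+b)_{x\in\gf(q^m)^*}$ with $a\in\gf(q^m)$ and $b\in\gf(q)$. A short weight count (the value at the punctured point $x=0$ equals $b$) shows that the codewords of weight $(q-1)q^{m-1}$ are exactly those with $a\neq 0$ and $b=0$, namely $\bc_a=(\tr(a\alpha^i))_{i=0}^{q^m-2}$; the choices $b\neq 0$ produce weight $(q-1)q^{m-1}-1$ instead. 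Thus the blocks of $\bD_{(q-1)q^{m-1}}(\cR^*_q(1,m))$ are the supports $\support(\bc_a)$ with $a\in\gf(q^m)^*$.

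Next I would exhibit the cylinder structure linking these blocks to the Simplex design $\bD$. Writing each index uniquely as $i+jv$ with $0\le i\le v-1$ and $0\le j\le q-2$, one has $\alpha^{jv}\in\gf(q)^*$, whence $\tr(a\alpha^{i+jv})=\alpha^{jv}\tr(a\alpha^i)$. Therefore $\tr(a\alpha^{i+jv})\neq 0$ if and only if $\tr(a\alpha^i)\neq 0$, so $\support(\bc_a)$ is the union of the $(q-1)$-element fibers $\{i,i+v,\dots,i+(q-2)v\}$ taken over exactly those $i$ lying in the corresponding block of $\bD$. Consequently, if $\phi:\gf(p)^v\to\gf(p)^{q^m-1}$ is the linear map sending $\bx=(x_0,\dots,x_{v-1})$ to the vector whose coordinate at position $i+jv$ equals $x_i$, then each block-incidence vector of $\bD_{(q-1)q^{m-1}}(\cR^*_q(1,m))$ is precisely $\phi$ applied to the matching block-incidence vector of $\bD$.

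Finally I would transfer the parameters. Since the rows of the incidence matrix of $\bD$ span $\C_p(\bD)$ and $\phi$ is linear, the rows of the incidence matrix of $\bD_{(q-1)q^{m-1}}(\cR^*_q(1,m))$ span $\phi(\C_p(\bD))$, so $\C_p(\bD_{(q-1)q^{m-1}}(\cR^*_q(1,m)))=\phi(\C_p(\bD))$. The map $\phi$ is injective, so the dimension is unchanged and equals $\binom{p+m-2}{m-1}^s$ by Theorem \ref{thm-majorthm2}; and $\phi$ multiplies the Hamming weight of every vector by exactly $q-1$, so $d=(q-1)\,d(\C_p(\bD))$. Combining this with $d(\C_p(\bD))\ge 2q^{m-2}$ from Theorem \ref{thm-majorthm2} yields $d\ge 2(q-1)q^{m-2}\ge q^{m-1}-1$, and the length is $(q-1)v=q^m-1$, completing all three claims. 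The only genuinely non-mechanical point, and hence the main obstacle, is the cylinder observation of the third paragraph: verifying that a weight-$(q-1)q^{m-1}$ codeword is constant (zero or nonzero) on each $\gf(q)^*$-fiber and therefore pulls back to a Simplex block. Everything after that is formal bookkeeping with the injective, weight-scaling map $\phi$.
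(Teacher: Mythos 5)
Your proposal is correct and follows essentially the same route as the paper: the paper's (much terser) proof identifies the weight-$(q-1)q^{m-1}$ codewords with the trace code $\{(\tr(ax))_{x\in\gf(q^m)^*}\}$, observes it is equivalent to a concatenation of $q-1$ copies of the projective (Simplex-type) code, and invokes Theorem \ref{thm-majorthm2}. Your fiber decomposition of $\gf(q^m)^*$ into $\gf(q)^*$-orbits and the coordinate-repetition map $\phi$ are exactly the rigorous content behind the paper's phrase ``concatenation of $q-1$ copies,'' so you have simply supplied the details the paper declares straightforward.
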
  

\begin{proof}
It is straightforward to see that $\bD_{(q-1)q^{m-1}}(\cR^*_q(1, m))$ is the design held 
by the supports of codewords of weight $(q-1)q^{m-1}$ in the code 
$$ 
\C=\{(\tr_{q^m/q}(ax))_{x \in \gf(q^m)^*}: a \in \gf(q^m)\}, 
$$ 
which is equivalent to a concatenation of $q-1$ copies the first-order projective Reed-Muller 
code. The desired conclusions then follow from Theorem \ref{thm-majorthm2}. 
\end{proof}

Once we determine the minimum weight of the code $\C_p(\bD)$ in Theorem \ref{thm-majorthm2}, 
we will be able to determine the minimum weight of $\C_p(\bD_{(q-1)q^{m-1}}(\cR^*_q(1, m)))$, 
and vice versa.

The following problem is very hard to settle. But we will solve it for a few special cases in the rest of this section. 

\begin{open}\label{open-june261} 
Determine the parameters of $\C_p(\bD_{i}(\cR_q(\ell, m)))$ for other designs $\bD_{i}(\cR_q(\ell, m))$ held in 
$\cR_q(\ell, m)$ for $\ell \geq 2$, and study properties of $\C_p(\bD_{i}(\cR_q(\ell, m)))$.  
\end{open} 

The parameters of the designs held in $\cR_q(\ell, m)$ are still open. Even the weight 
distribution of the code $\cR_q(\ell, m)$ is open for $\ell \geq 3$ and $q>2$. The 
weight distribution of $\cR_q(2, m)$ is known for $q>2$ \cite{Li2019}. It may be possible 
to settle the parameters of $\C_p(\bD_{i}(\cR_q(2, m)))$ for $q>2$ and some $i$. 

\begin{table}[htbp]
\centering
\caption{The parameters of $\cR_p(r, m)$ and $\C_p(\bD_{d}(\cR_p(r, m)))$}
\label{table-july131}
\begin{tabular}{|c|c|c|}
  \hline
$(p,m,r)$ & $\cR_p(r, m)$  &  $\C_p(\bD_{d}(\cR_p(r, m)))$ \\  \hline
$(3,2,1)$ & $[9,3,6]$    & $[9,6,3]$  \\ \hline 
$(3,3,1)$ & $[27,4,18]$    & $[27,10,9]$  \\ \hline 
$(3,4,1)$ & $[81,5,54]$    & $[81,15,27]$  \\ \hline 
$(3,3,2)$ & $[27,10,9]$    & $[27,10,9]$  \\ \hline 
$(3,4,2)$ & $[81,15,27]$    & $[81,15,27]$  \\ \hline 
$(5,2,2)$ & $[25,6,15]$    & $[25,15,5]$  \\ \hline 
$(3,3,3)$ & $[27,17,6]$    & $[27,23,3]$  \\ \hline 
\end{tabular}
\end{table}

A comparison between the parameters of $\cR_p(r, m)$ and $\C_p(\bD_{d}(\cR_p(r, m)))$ 
is given in Table \ref{table-july131}, where $d$ is the minimum distance of $\cR_p(r, m)$. 
In general the parameters of the two codes $\cR_p(r, m)$ and $\C_p(\bD_{d}(\cR_p(r, m)))$ 
are different. However, in the special case $(p, r)=(3,2)$ we have the following.   

\begin{theorem}\label{thm-july15} 
For $m \geq 2$ the two codes $\cR_3(2, m)$ and $\C_3(\bD_{3^{m-1}}(\cR_3(2, m)))$ are identical.
\end{theorem}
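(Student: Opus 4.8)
The plan is to identify the minimum-weight design $\bD_{3^{m-1}}(\cR_3(2,m))$ explicitly as the geometric design $\AG_{m-1}(m,3)$ of points and hyperplanes of $\AG(m,3)$, and then to prove $\C_3(\AG_{m-1}(m,3))=\cR_3(2,m)$. First I would record that for $q=3$ and $\ell=2$ we have $\ell=1\cdot(q-1)+0$, so $\ell_1=1$, $\ell_0=0$, and the minimum weight of $\cR_3(2,m)$ equals $(q-\ell_0)q^{m-\ell_1-1}=3^{m-1}$; thus $\bD_{3^{m-1}}(\cR_3(2,m))$ is genuinely the design formed by the minimum-weight codewords, and by Theorem \ref{thm-DGM70design} it is a $2$-design with block size $3^{m-1}$.

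The key step is to show that the minimum-weight codewords of $\cR_3(2,m)$ are exactly the nonzero scalar multiples of the characteristic vectors of affine hyperplanes. For one direction, given a hyperplane $H=\{x\in\gf(3)^m:\langle a,x\rangle=b\}$ with $a\neq \mathbf 0$, the degree-$2$ polynomial $1-(\langle a,x\rangle-b)^2$ evaluates to $1$ on $H$ and to $0$ off $H$ (since $c^2=1$ for every $c\in\gf(3)^*$), so its evaluation vector is the characteristic vector $\mathbf 1_H\in\cR_3(2,m)$, of weight $|H|=3^{m-1}$. To rule out any other minimum-weight codewords, I would count: the number of affine hyperplanes of $\AG(m,3)$ is $3(3^m-1)/2$, each contributing $q-1=2$ minimum-weight codewords (its two nonzero scalar multiples), for a total of $3(3^m-1)$; substituting $\ell_1=1$, $\ell_0=0$ into Theorem \ref{thm-DGM70} gives $A_{3^{m-1}}=3(3^m-1)$ as well (the product telescopes to $(3^m-1)/2$), so the hyperplane multiples already exhaust all minimum-weight codewords. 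Consequently the blocks of $\bD_{3^{m-1}}(\cR_3(2,m))$ are precisely the affine hyperplanes, i.e.\ the $(m-1)$-flats, so $\bD_{3^{m-1}}(\cR_3(2,m))=\AG_{m-1}(m,3)$ and therefore $\C_3(\bD_{3^{m-1}}(\cR_3(2,m)))=\C_3(\AG_{m-1}(m,3))$.

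It then remains to prove $\C_3(\AG_{m-1}(m,3))=\cR_3(2,m)$. Since $\C_3(\AG_{m-1}(m,3))$ is spanned by the characteristic vectors $\mathbf 1_H$ of the hyperplanes, and each such $\mathbf 1_H$ lies in $\cR_3(2,m)$ by the explicit polynomial above, we get $\C_3(\AG_{m-1}(m,3))\subseteq\cR_3(2,m)$. Finally I would match dimensions: Corollary \ref{cor-feb121} gives $\dim\C_3(\AG_{m-1}(m,3))=\binom{m+p-1}{m}^s=\binom{m+2}{m}$ with $p=3$, $s=1$, while counting the reduced monomials of degree at most $2$ in $m$ variables over $\gf(3)$ ($1$ constant, $m$ linear, and $m+\binom{m}{2}$ quadratic) gives $\dim\cR_3(2,m)=1+2m+\binom{m}{2}=\binom{m+2}{2}=\binom{m+2}{m}$. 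Equal dimension together with the inclusion forces $\C_3(\AG_{m-1}(m,3))=\cR_3(2,m)$, completing the proof. I expect the main obstacle to be the second step—pinning down the minimum-weight codewords precisely as hyperplane multiples—since producing the inclusion is easy, but excluding extra minimum-weight words relies on the exact count from Theorem \ref{thm-DGM70} matching the number of affine hyperplanes.
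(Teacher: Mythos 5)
Your proof is correct, and it reaches the conclusion by a genuinely different route than the paper's. Both arguments begin the same way, by pinning down the minimum-weight codewords of $\cR_3(2,m)$ as the nonzero scalar multiples of characteristic vectors of affine hyperplanes; in fact your version of this step is the more careful one: you match the hyperplane count $3(3^m-1)/2$ (times $q-1=2$ scalar multiples) against $A_{3^{m-1}}=3(3^m-1)$ from Theorem \ref{thm-DGM70}, whereas the paper asserts the identification is ``easily seen'' via the parametrization $\pm\left((\tr(ax)+b)^2-1\right)$ with $(a,b)\in\gf(3^m)\times\gf(3)^*$, which as written omits the hyperplanes through the origin (those require $b=0$) and only gets implicitly corrected in the paper's later manipulations. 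After this common step the proofs diverge. The paper stays inside the code and shows by direct algebraic computation that the span of the vectors $\left((\tr(ax)+b)^2-1\right)$ contains all vectors $\left(\tr(b_1x)\tr(b_2x)\right)$, $\left(\tr(bx)\right)$ and $\bone$, which together span $\cR_3(2,m)$; this is constructive and needs nothing beyond the generator identification, but it is computational. You instead identify the support design itself as the geometric design $\AG_{m-1}(m,3)$, observe the easy inclusion $\C_3(\AG_{m-1}(m,3))\subseteq\cR_3(2,m)$ via the degree-two polynomial $1-(\langle a,x\rangle-b)^2$, and force equality by a dimension count, quoting Hamada's rank formula (Corollary \ref{cor-feb121}) for $\dim\C_3(\AG_{m-1}(m,3))=\binom{m+2}{m}$ and matching it with $\dim\cR_3(2,m)=1+2m+\binom{m}{2}=\binom{m+2}{2}$, which agrees with the paper's value $1+m+\frac{m(m+1)}{2}$. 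Your route is shorter and mirrors the strategy the paper itself uses for Theorem \ref{thm-major1st} (identify a support design with a geometric design, then invoke known parameters of its code); what it costs is the reliance on Hamada's formula, and it additionally yields the structural fact that $\bD_{3^{m-1}}(\cR_3(2,m))$ \emph{is} $\AG_{m-1}(m,3)$, which the paper's argument never makes explicit. Both dimension computations check out, so the proof stands.
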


\begin{proof}
By Theorems \ref{thm-GPRMcode} and \ref{thm-DGM70}, $\cR_3(2, m)$ has minimum distance 
$d=3^{m-1}$ and dimension 
$
k=1+m+\frac{(m+1)m}{2}. 
$   
In addition,  the total number of minimum weight codewords in  $\cR_3(2, m)$ is 
$  
A_d=3(3^m-1).  
$  

Let $\tr$ denote the trace function from $\gf(3^m)$ to $\gf(3)$. 
It is easily seen that the set of all minimum weight codewords in $\cR_3(2,m)$
is given by
$$\left \{ \pm \left ( (\tr(ax)+b)^2 - 1 \right )_{x\in \mathrm{GF}(q)}: (a,b)\in \mathrm{GF}(3^m) \times \mathrm{GF}(3)^* \right \}.$$
Since $ (\tr(ax)+b)^2 - 1=0 \text{ or } -1$, the code $\C_3(\bD_{3^{m-1}}(\cR_3(2, m)))$ is linearly spanned by the codewords
 in the following set:
 \begin{align}\label{eq-hp2}
 \left \{  \left ( (\tr(ax)+b)^2 - 1 \right )_{x\in \mathrm{GF}(q)}: (a,b)\in \mathrm{GF}(3^m) \times \mathrm{GF}(3)^* \right \}.
 \end{align}
Let $b_1, b_2 \in \mathrm{GF}(3^m)$. By (\ref{eq-hp2}), we have
$$\left (  \tr((b_1+b_2)x)^2- \tr((b_1-b_2)x)^2 \right )_{x\in \mathrm{GF}(3^m)} \in \C_3(\bD_{3^{m-1}}(\cR_3(2, m))),$$
which is the same as
\begin{align}\label{eq-b1x-b2x}
\left (  \tr(b_1x)\tr(b_2x) \right )_{x\in \mathrm{GF}(3^m)}  \in \C_3(\bD_{3^{m-1}}(\cR_3(2, m))),
\end{align}
for all $b_1, b_2 \in \mathrm{GF}(3^m)$.
Let $b\in \mathrm{GF}(3^m)^*$.
By (\ref{eq-hp2}), $\left ( (\tr(bx)-1)^2 - 1 \right )_{x\in \mathrm{GF}(q)} \in \C_3(\bD_{3^{m-1}}(\cR_3(2, m)))$.
By (\ref{eq-b1x-b2x}), $\left ( \tr(bx)^2 \right )_{x\in \mathrm{GF}(q)} \in \C_3(\bD_{3^{m-1}}(\cR_3(2, m)))$.
Note that $\tr(bx)=( (\tr(bx)-1)^2 - 1)-  \tr(bx)^2 $. Then
\begin{align}\label{eq-tr(bx)}
\left ( \tr(bx) \right )_{x\in \mathrm{GF}(q)} \in \C_3(\bD_{3^{m-1}}(\cR_3(2, m))).
\end{align}
By (\ref{eq-hp2}) and (\ref{eq-b1x-b2x}), we have $\left ( 1 \right )_{x\in \mathrm{GF}(q)} \in \C_3(\bD_{3^{m-1}}(\cR_3(2, m))).$
Thus, $\C_3(\bD_{3^{m-1}}(\cR_3(2, m)))$ is linearly spanned by the set
$$\left \{ \left ( \tr(ax)\tr(bx) \right )_{x\in \mathrm{GF}(q)}, \left ( \tr(ax) \right )_{x\in \mathrm{GF}(q)},
 \left (1 \right )_{x\in \mathrm{GF}(q)}: a,b\in \mathrm{GF}(3^m)\right \}.$$
 It is observed that 
 the linear space spanned by $$\left \{ \left ( \tr(ax)\tr(bx) \right )_{x\in \mathrm{GF}(q)}, \left ( \tr(ax) \right )_{x\in \mathrm{GF}(q)},
 \left (1 \right )_{x\in \mathrm{GF}(q)}: a,b\in \mathrm{GF}(3^m)\right \}$$ is exactly 
$\cR_3(2,m)$.
This completes the proof. 
\end{proof}

%\begin{color}{blue}
%Our last task of this paper is the following. 

%\begin{task} 
%Study the code $\C_3(\bD_{3^{m-1}}(\cR_3(3, m)))$. 
%\end{task} 

%This is to deal with the third-order Reed-Muller codes. It may be feasible, as we deal with the case $p=3$. 
%\end{color} 

\section{Summary and concluding remarks} 

Using the results on the linear codes of geometric designs and the generalised Reed-Muller codes 
documented in \cite{AK92},  this paper made the following contributions: 
\begin{itemize}
\item The results about $\C_2(\bD_i(\C))$, $\C$, $\bD_i(\C)$, and their automorphism groups for binary linear codes $\C$ documented  in Section \ref{sec-binarycase}. 

\item The determination of some of the parameters of the linear code $\C_p(\bD)$ documented in Theorem \ref{thm-majorthm2}, 
          where $\bD$ is the design held in a code related to the first-order projective  Reed-Muller code.   
          
\item The determination of the parameters of the linear code  $\C_p(\bD_{(q-1)q^{m-1}}(\cR_q(1, m)))$ documented 
          in Theorem \ref{thm-major1st} and its dual $\C_p(\bD_{(q-1)q^{m-1}}(\cR_q(1, m)))^\perp$ documented in 
          Theorem \ref{thm-major1stdual}, where $\bD_{(q-1)q^{m-1}}(\cR_q(1, m))$ is the design supported by the 
          codewords of Hamming weight $(q-1)q^{m-1}$ in the Reed-Muller code $\cR_q(1, m)$.        
          
\item The determination of the parameters of the ternary code $\C_3(\bD_{3^{m-1}}(\cR_3(2, m)))$ documented in 
          Theorem \ref{thm-july15}.           

\item The determination of the parameters of the linear code  $\C_p(\bD_{(q-1)q^{m-1}-1}(\cR^*_q(1, m)))$ documented 
          in Theorem \ref{thm-major1st2}, where $\bD_{(q-1)q^{m-1}-1}(\cR^*_q(1, m))$ is the design supported by the 
          codewords of Hamming weight $(q-1)q^{m-1}-1$ in the punctured generalised Reed-Muller code $\cR^*_q(1, m)$.  
          
\item The determination of the parameters of the linear code  $\C_p(\bD_{(q-1)q^{m-1}}(\cR^*_q(1, m)))$ documented 
          in Theorem \ref{thm-major1st3}, where $\bD_{(q-1)q^{m-1}}(\cR^*_q(1, m))$ is the design supported by the 
          codewords of Hamming weight $(q-1)q^{m-1}$ in the punctured generalised Reed-Muller code $\cR^*_q(1, m)$.   
\end{itemize} 
These summarize the new results presented in this paper. 

Although the designs considered in this paper are not geometric designs and the linear codes are not geometric codes and Reed-Muller codes, they are closely related to geometric designs and the Reed-Muller codes. Thus, in Sections \ref{sec-june261a} and \ref{sec-june262a} we had to introduce these geometric codes and the Reed-Muller 
codes  as well as their basic properties. This took quite some space.   

As observed, it is extremely hard to get information on the code $\C_p(\bD_i(\C))$ for general linear codes over $\gf(q)$ 
for nonbinary codes $\C$ holding designs. The reader is cordially invited to settle Conjectures \ref{conj-june261} and 
\ref{conj-june262} and Open Problem \ref{open-june261}. The rank of $t$-designs, i.e., the dimension of the corresponding codes, 
may be used to classify $t$-designs of certain type. For example, the rank of Steiner triples was intensively studied and 
employed for classifying Steiner triple systems \cite{JMTW}. 

Finally, we point out that the idea of using a linear code $\C_1$ supporting a $t$-design 
$\bD_w(\C)$ to obtain a new linear code $\C_{q}(\bD_w(\C))$ may produce a bad or good code. 
Distance-optimal ternary linear codes were obtained in \cite{DTT19} with this method.

\section*{Acknowledgements} 

C. Ding's research was supported by the Hong Kong Research Grants Council, Proj. No. 16300418. C. Tang was supported by National Natural Science Foundation of China (Grant No.
11871058) and China West Normal University (14E013, CXTD2014-4 and the Meritocracy Research
Funds)

\end{document}